\documentclass[11pt,a4paper]{article}

\usepackage{graphicx}
\usepackage{tikz-cd} 
\usepackage{amsmath, amsfonts, amsthm}
\usepackage{parskip}
\usepackage[T1]{fontenc}
\usepackage[margin=1in]{geometry}
\usepackage[
  style=numeric-comp,
  giveninits=true,
  maxbibnames=10,
  eprint=true,
  url=false,
  isbn=false,
  backend=biber,
  bibencoding=utf8,
  texencoding=ascii
  ]{biblatex}
\usepackage{booktabs}
\usepackage{array}
\usepackage{enumitem}
\usepackage{xcolor}
\usepackage[noblocks]{authblk}
\usepackage[hang, flushmargin]{footmisc}

\usepackage[colorlinks=true,linkcolor=red,citecolor=blue]{hyperref}

\theoremstyle{plain}
\newtheorem{theorem}{Theorem}
\newtheorem{lemma}[theorem]{Lemma}
\newtheorem{proposition}[theorem]{Proposition}
\theoremstyle{definition}
\newtheorem{definition}[theorem]{Definition}
\newtheorem{remark}[theorem]{Remark}
\newtheorem*{conjecture*}{Conjecture}

\newcommand{\email}[1]{\href{mailto:#1}{#1}}

\newcommand{\set}[1]{\mathbf{#1}}
\newcommand{\R}{\set{R}}
\newcommand{\C}{\set{C}}
\newcommand{\T}{\set{T}}
\newcommand{\Z}{\set{Z}}
\newcommand{\CP}{\mathbf{CP}}
\renewcommand{\S}{\set{S}}

\newcommand{\vv}[1]{\boldsymbol{#1}}
\newcommand{\group}[1]{\mathrm{#1}}

\DeclareMathOperator{\rank}{rank}
\newcommand{\so}{\operatorname{so}}
\newcommand{\symp}{\operatorname{sp}}

\newcommand{\SL}{\operatorname{SL}}

\renewcommand{\Re}{\operatorname{Re}}
\renewcommand{\Im}{\operatorname{Im}}

\newcommand{\ff}{\mathrm{f{}f}}
\newcommand{\fp}[2]{\mathrm{fp}_{#1,#2}}

\title{Hamiltonian Monodromy in a Tavis-Cummings System\\ with an $A_2$ Singularity}

\author[1]{Konstantinos Efstathiou\thanks{\email {k.efstathiou@dukekunshan.edu.cn}}}
\author[2]{Gabriela Jocelyn Gutierrez-Guillen\thanks{\email{gabriela.gutierrez@ens-lyon.fr}}}
\author[3,4,5]{Pavao Mardešić\thanks{\email{pavao.mardesic@ube.fr}}}
\author[6]{Dominique Sugny\thanks{\email{dominique.sugny@u-bourgogne.fr}}}

\affil[1]{Division of Natural and Applied Sciences and Zu Chongzhi Center, Duke Kunshan University, 8 Duke Avenue, 215316 Kunshan, Jiangsu, China}

\affil[2]{Unité de Mathématiques Pures et Appliquées UMR 5669 CNRS, École Normale Supérieure de Lyon Site Monod, 46 Allée d'Italie, 69364 Lyon Cedex 07, France}

\affil[3]{Institut de Mathématiques de Bourgogne UMR 5584 CNRS, Université Bourgogne Europe, 9 avenue Alain Savary, BP 47870, 21078 Dijon Cedex, France}

\affil[4]{Department of Mathematics, Faculty of Science, University of Zagreb, Bijeni\v{c}ka 30, 10 000 Zagreb, Croatia}

\affil[5]{Laboratorio Internacional Solomon Lefschetz IRL 2001 CNRS-UNAM, Instituto de Matemáticas Unidad Cuernavaca, Mexico}

\affil[6]{Universit{\'e} Bourgogne Europe, CNRS, Laboratoire Interdisciplinaire Carnot de Bourgogne ICB UMR 6303, 21000 Dijon, France}

\date{April 16, 2026}

\begin{document}

\maketitle

\begin{abstract}
Singular Lagrangian fibrations arising from three-degree-of-freedom integrable Hamiltonian systems remain largely unexplored. 
While several results describe the global structure of large classes of systems with two degrees of freedom, only a few examples are understood in higher dimensions.
We present a three-degree-of-freedom system derived from the two-spin Tavis-Cummings model whose singular Lagrangian fibration exhibits a topology that has not been observed in other physical models.
We show that the most degenerate singular fiber is homeomorphic to $\S^2\times\S^1$ with a singularity of $A_2$ type.
We further describe the bifurcation diagram and the global topology of the fibration, and we compute its Hamiltonian monodromy.
\end{abstract}

MSC2020: 37J35, 53D20, 70H06, 70H33

\begin{small}
\textbf{\textit{Keywords---}}{Integrable Hamiltonian systems, singular Lagrangian fibrations, Hamiltonian monodromy, Tavis-Cummings system}
\end{small}

\section{Introduction}

The Tavis-Cummings (TC) system is a fundamental model in quantum optics of the interaction between matter and electromagnetic fields ~\cite{TC1968}. 
The model describes the coupling between $N$ two-level atoms and a single-mode quantized radiation field within a lossless cavity~\cite{Haroche2006}.
It is itself a generalization of the Jaynes-Cummings (JC) system which applies to the case of a single atom, $N=1$~\cite{JC1963}.
The JC and TC systems have been extensively studied from classical and quantum points of view.
Notably, from the quantum point of view, the TC system has been realized experimentally in a superconducting circuit for quantum information processing~\cite{Majer2007, Fink2009}. 

In this work we consider the classical analogue of the quantum TC system, where the two-level quantum atoms are replaced by classical spins and the quantized cavity mode is replaced by a harmonic oscillator.
The classical TC system with $N$ spins is an $(N+1)$-degree-of-freedom Liouville integrable Hamiltonian system which, moreover, admits a Lax pair representation for any $N$~\cite{Lax1968, Babelon2003}.
The Lax pair considerably simplifies the analysis of the TC system, and leads to explicit solutions of the equations of motion in terms of hyperelliptic functions~\cite{Yuzbashyan2005a}.
Despite this explicit description of the TC dynamics, the qualitative understanding of the global topology of the system with $N \ge 2$ spins is still limited.

The JC system, corresponding to $N=1$, is well understood.
This is a two-degree-of-freedom (2-DOF) integrable Hamiltonian system with a global circle action, and it has a focus-focus singularity for a range of parameters.
The existence of a focus-focus singularity in a 2-DOF Hamiltonian system implies the non-triviality of Hamiltonian monodromy~\cite{Duistermaat1980, Vu-Ngoc1999, Zung1997}, 
and we note that the singular Lagrangian fibration near a focus-focus singularity is locally isomorphic to the $A_1$ singularity~\cite{Bates2007, Arnoldbook}.
Because of the focus-focus singularity, the JC system serves as one of the main examples in the study of semitoric systems and their symplectic classification \cite{Pelayo2009a}, and a complete description of its semitoric invariants is known~\cite{Pelayo2012, Alonso2019}.
The spectral Lax pair approach has also been used for the analysis of the global topology of the JC system~\cite{Babelon2012}, and to determine its Hamiltonian monodromy~\cite{Babelon2015, Gutierrez-Guillen2024}. 
Much less is known about the topology of the singular Lagrangian fibration of the TC system with $N \ge 2$.

This contrast between 2-DOF integrable Hamiltonian systems and systems with three or more degrees of freedom is not specific to the TC system. 
The topology of the singular Lagrangian fibrations of 2-DOF integrable Hamiltonian systems is well understood \cite{Bolsinov2004}, and there is an extensive collection of physical examples with two degrees of freedom manifesting different aspects of the mathematical theory, see \cite{Audin1996, Cushmanbook, Efstathioubook}.
However, very few studies have considered systems with three or more degrees of freedom and most of these have focused on 3-DOF systems with a global $\T^2$ action, where the analysis is facilitated by symplectic reduction to 1-DOF.
Examples include the Lagrange top~\cite{Audin1996, Cushmanbook, Dawson2022}, the resonant swing spring~\cite{Dullin:2004, Giacobbe:2004}, and the axially symmetric $1{:}1{:}{-}2$ resonance \cite{Efstathiou2019}.

Coming back to the TC system, and---in particular---the 3-DOF system with $N=2$ spins, we note that it possesses a global $\S^1$ action but lacks a global $\T^2$ action, and thus it cannot be symplectically reduced to 1-DOF.
For this system, the spectral Lax pair approach has been successfully used to parameterize its singularities and determine their stability type~\cite{Babelon2012}.
Nevertheless, the implications of these findings for the global topology of the two-spin TC system have not been further elaborated.

A natural question is whether among the qualitatively different types of singular Lagrangian fibrations that can appear in the TC system for different parameter choices, there are any novel cases that have not previously appeared in the literature.
A careful analysis of the spectral curve of the TC system, described in detail in Appendix~\ref{sec:lax}, allowed us to identify a choice of parameters for which the curve has a high-degree of degeneracy possessing two complex conjugate triple roots.
We call the two-spin TC system with this choice of parameters, the \emph{special Tavis-Cummings} (STC) system, see Definition~\ref{def:stcm}.

The STC system exhibits an intriguing global topology that has not been previously found in other physical examples.
We analyze in detail the singularities of the STC system and its Hamiltonian monodromy, and we show that:

\begin{enumerate}[label={(\roman*)},noitemsep]

\item The STC system has four threads of rank-$1$ focus-focus-regular singularities that meet at a degenerate critical value $\mathfrak c_*$, which we call the \emph{central critical value}, see Section~\ref{sec:critical-values} and Figure~\ref{fig:stcm-bd}.

\item The singular fiber of the integral map corresponding to the degenerate critical value $\mathfrak c_*$ is topologically equivalent to $\S^2 \times \S^1$ and contains an $\S^1$ orbit of degenerate singularities. We call the corresponding degenerate singularity $c_*$ in the reduced space, the \emph{central singularity}.

\item The $\S^1$ reduced fibration near the central singularity $c_*$ is locally isomorphic to the fibration of $f : \C^2 \times \R \to \C \times \R$, $f(x,y,\kappa) = (y^2 + x^3 + \kappa x, \kappa)$, near $0$, see Theorem~\ref{Th:local}. That is, the degenerate singularity in the reduced space is locally isomorphic to the $A_2$ singularity~\cite{Arnoldbook}.

\item The fundamental group of the set of regular values of the integral map is isomorphic to the free group of three generators $\mathbf{F}_3 = \Z \ast \Z \ast \Z$. For the choice of generators $\gamma_1$, $\gamma_2$, $\gamma_3$ described in Section~\ref{sec:monodromy-tavis-cummings} the corresponding  monodromy matrices are
\[
M(\gamma_1) = \begin{bmatrix} 1 & 0 & 0 \\ 0 & 1 & 1 \\0 & 0 & 1 \end{bmatrix}, \quad
M(\gamma_2) = \begin{bmatrix} 1 & 0 & 0 \\ 0 & 1 & 0 \\ 0 & -1 & 1  \end{bmatrix}, \quad
M(\gamma_3) = \begin{bmatrix} 1 & 0 & 0 \\ 0 & 2 & 1 \\ 0 & -1 & 0 \end{bmatrix}.
\]

\item For the value of the momentum generating the $\S^1$ action corresponding to the central singularity $c_*$, the symplectic $\S^1$ reduction of the STC system produces a 2-DOF integrable Hamiltonian system with an isolated critical value but no global $\S^1$ action, see Figure~\ref{fig:stcm-bd} with $K=k^*$. The singular fiber corresponding to the isolated critical value is homeomorphic to $\S^2$ and contains an $A_2$ singularity. The Hamiltonian monodromy around the isolated critical value is
\[ \begin{bmatrix} 1 & 1 \\ - 1 & 0 \end{bmatrix}. \]
\end{enumerate}

The STC system presents a novel and interesting physical example of a 3-DOF integrable Hamiltonian system with non-trivial Hamiltonian monodromy and the appearance of an $A_2$ singularity.
From this point of view, it is an important addition to our short list of well understood singular Lagrangian fibrations of 3-DOF integrable Hamiltonian systems.
Such examples will play an important role in any future attempts toward the topological and symplectic classification of 3-DOF integrable Hamiltonian systems. 
Moreover, such examples are useful in the context of mirror symmetry \cite{Gross2001, Bernard2009}.

The paper is organized as follows. 
In Section~\ref{sec:definitions} we present the Hamiltonian and the integral map of the classical two-spin TC system and the special two-spin TC (STC) system. 
In Section~\ref{sec:reduction-s1-action} we discuss the $\S^1$ action of the TC system, its fixed points, and local symplectic reduction.
In Section~\ref{sec:critical-values} we parameterize and classify the singularities of the TC system, obtaining Figure~\ref{fig:stcm-bd}. 
We note that even though we do not use the spectral Lax pair approach, we recover the results of~\cite{Babelon2012} for rank-$0$ and rank-$1$ singularities, while slightly improving on the previous results for rank-$2$ singularities which do not provide explicit parameter ranges.
In Section~\ref{sec:global-topology}, we numerically calculate the Hamiltonian monodromy of the STC system. 
In Section~\ref{sec:a2-singularity} we show that the STC system in the vicinity of the central singularity has a local normal form corresponding to an $A_2$ singularity, that is the fibration of the STC system is locally topologically equivalent to the fibration given by the versal unfolding of the $A_2$ singularity.
We then compute the Picard-Lefschetz monodromy of the normal form and we show its correspondence with the Hamiltonian monodromy of the system.
We give conclusions and prospective views in Section~\ref{sec:conclusions}.
Finally, in the Appendices, we provide additional information and details on the reduction of the $\S^1$ action, the proof of the parameterization of rank-$2$ singularities, the numerical computation of Hamiltonian monodromy, and a brief description of the spectral Lax pair formalism.

\section{The classical two-spin Tavis-Cummings system}
\label{sec:definitions}

The classical two-spin Tavis-Cummings system is a 3-DOF Hamiltonian system defined on the symplectic manifold $M = \S^2_u \times \S^2_v \times \R^2$. 
The symplectic form on each of the spheres 
\[ 
\S^2_u = \{ (u_1,u_2,u_3) : u_1^2 + u_2^2 + u_3^2 = 1 \},
\ 
\S^2_v = \{ (v_1,v_2,v_3) : v_1^2 + v_2^2 + v_3^2 = 1 \},
\] 
is the standard area form, while $\R^2$ has canonical coordinates $(q,p)$.
The corresponding Poisson algebra is isomorphic to $\so(3) \oplus \so(3) \oplus \symp(2,\R)$, where the non-vanishing Poisson brackets are given by
\begin{align*}
\{ u_i, u_j \} = \sum_k \varepsilon_{ijk} u_k, \ 
\{ v_i, v_j \} = \sum_k \varepsilon_{ijk} v_k, \ 
\{ q, p \} = - \{ p, q \} = 1.
\end{align*}

The Hamiltonian function of the two-spin TC system is
\begin{align*}
H = \delta_1 u_3 + \delta_2 v_3
+ \frac{\omega}{2} (p^2 + q^2)
+ \sqrt2 g (q u_1 - p u_2) + \sqrt2 g (q v_1 - p v_2),
\end{align*}
where $\delta_1$, $\delta_2$, $\omega$, $g$ are real parameters.

We introduce the complex coordinates
\begin{align*}
u = u_1 + i u_2,\ v = v_1 + i v_2, \ z = p + i q,
\end{align*}
with Poisson brackets
\begin{gather*}
\{ u, u_3 \} = i u, \ \{ \bar{u}, u_3 \} = - i \bar{u}, \ \{ u, \bar{u} \} = -2 i u_3, \\
\{ v, v_3 \} = i v, \ \{ \bar{v}, v_3 \} = - i \bar{v}, \ \{ v, \bar{v} \} = -2 i v_3,
\{ z, \bar{z} \} =  2i. 
\end{gather*}
Using these coordinates, $H$ is written as
\begin{align*}
H = \delta_1 u_3 + \delta_2 v_3
+ \frac{\omega}{2} z \bar z
+ \frac{i}{\sqrt2} g (u \bar z - \bar u z) 
+ \frac{i}{\sqrt2} g (v \bar z - \bar v z),
\end{align*}

The function $H$ is invariant under the Hamiltonian $\S^1$ action
\begin{align}
\label{eq:S1-action}
\varphi_K: \S^1 \times M \to M : (e^{it}, (u, u_3; v, v_3; z)) 
\mapsto (e^{it} u, u_3; e^{it} v, v_3; e^{it} z),
\end{align}
generated by the Hamiltonian flow of the momentum
\begin{equation}
\label{eq:K}
K = u_3 + v_3 + \frac12 z \bar z,
\end{equation}
with Hamiltonian vector field $X_K$ satisfying
\begin{equation}
\label{eq:vf-X_K}
X_K(u) = i u,\  X_K(u_3) = 0,\ X_K(v) = i v, \ X_K(v_3) = 0, \ X_K(z) = i z.
\end{equation}

Although there is only one obvious symmetry in the system, it is well known that the Hamiltonian $H$ is Liouville integrable~\cite{Babelon2012}. 
In addition to the momentum $K$, two other integrals are given by
\begin{subequations}
\label{eq:H1-H2-complex}
\begin{align}
\label{eq:H1-complex}
H_1 &= (\delta_1-\omega) u_3 + \frac{i}{\sqrt2} g (u \bar{z} - \bar{u} z) 
- \frac{g^2}{\delta_2 - \delta_1} (u \bar{v} + \bar{u} v + 2 u_3 v_3), \\
\label{eq:H2-complex}
H_2 &= (\delta_2-\omega) v_3 + \frac{i}{\sqrt2} g (v \bar{z} - \bar{v} z) 
+ \frac{g^2}{\delta_2 - \delta_1} (u \bar{v} + \bar{u} v + 2 u_3 v_3).
\end{align}
\end{subequations}
Then,
\[ H = H_1 + H_2 + \omega K. \]

Both $H_1$ and $H_2$ are invariant under the Hamiltonian $\S^1$ action \eqref{eq:S1-action}, which implies $\{H_1,K\} = \{H_2,K\} = 0$.
Moreover, a direct computation shows that $\{H_1,H_2\} = 0$. 
Therefore, the map
\begin{equation}\label{eq:ihs-tc}
F = (H_1,H_2,K) : M \to \R^3,
\end{equation}
defines a Liouville integrable system.

\begin{remark}
In the case $\delta_1=\delta_2$, the Hamiltonian dynamics is completely different. In this case, there are three independent integrals $H$, $K$, and $J = \sum_{k=1}^3 (u_k+v_k)^2$. The momentum $K$ generates a global $\S^1$ action, while $J$ generates a periodic flow and thus a global $\S^1$ action up to time rescaling.
\end{remark}

The following expressions for the Hamiltonian vector fields $X_{H_1}$, $X_{H_2}$ will be used in Section~\ref{sec:critical-values} for deriving the singularities of $F$:
\begin{equation}
\label{eq:vf-X_H1}
\begin{aligned}
& X_{H_1}(u) = (\delta_1-\omega) i u - \sqrt2 g u_3 z + \frac{2ig^2}{\delta_2 - \delta_1} (u_3 v - v_3 u), 
&& X_{H_1}(v) = - \frac{2 i g^2}{\delta_2 - \delta_1} (u_3 v - v_3 u), \\
& X_{H_1}(u_3) = \frac{g}{\sqrt2} (u \bar{z} + \bar{u} z) + \frac{i g^2}{\delta_2 - \delta_1} (u \bar{v}-\bar{u} v), 
&& X_{H_1}(v_3) = - \frac{i g^2}{\delta_2 - \delta_1} (u \bar{v} - \bar{u} v), \\
& X_{H_1}(z) = - \sqrt2 g u,
\end{aligned}
\end{equation}
and
\begin{equation}
\label{eq:vf-X_H2}
\begin{aligned}
& X_{H_2}(u) = -\frac{2 i g^2}{\delta_2 - \delta_1} (u_3 v-u v_3), 
&& X_{H_2}(v) = (\delta_2-\omega) i v -\sqrt{2} g v_3 z + \frac{2 i g^2}{\delta_2 - \delta_1} (u_3 v - u v_3), \\
& X_{H_2}(u_3) = -\frac{i g^2}{\delta_2 - \delta_1} (u \bar{v}-\bar{u} v), 
&& X_{H_2}(v_3) = \frac{g}{\sqrt2} (v \bar{z}+\bar{v} z) + \frac{i g^2}{\delta_2 - \delta_1} (u \bar{v} - \bar{u} v), \\
& X_{H_2}(z) = -\sqrt2 g v.
\end{aligned}
\end{equation}

As noted in the Introduction, in this work we focus on a special case of the two-spin Tavis–Cummings system.
The corresponding parameter values were obtained using a spectral Lax pair approach~\cite{izosimov2017singularities}, where the problem of finding singularities is translated into the problem of finding multiple roots of a polynomial, whose vanishing locus is called the \emph{spectral curve}~\cite{Babelon2012}. 
For the two-spin TC system, by searching for parameters such that this degree $6$ polynomial exhibits two complex conjugate triple roots, we find that it is necessary to analyze the parameter values satisfying the resonance condition $\delta_1+\delta_2=2\omega$, with $\delta_1 \ne \delta_2$, as detailed in Appendix~\ref{sec:lax}. 
For computational simplicity, we further restrict attention to the following specific values of the TC parameters.

\begin{definition}
\label{def:stcm}
The \emph{special Tavis-Cummings} (STC) system is the two-spin Tavis-Cummings system described by the integral map $F = (H_1,H_2,K)$ with $H_1$, $H_2$ in Eq.~\eqref{eq:H1-H2-complex} and parameter values
\begin{equation}\label{eq:stcm}
\delta_1 = \frac12 ,\quad \delta_2 = \frac32, \quad \omega=1, \quad g=1.
\end{equation} 
\end{definition}

Note that several of the results in Section~\ref{sec:reduction-s1-action} and Section~\ref{sec:critical-values} apply to the general two-spin TC system, and we explicitly mention when a result applies only to the STC system. Results in Section~\ref{sec:global-topology} and Section~\ref{sec:a2-singularity} apply specifically to the STC system.

\section{The \texorpdfstring{$\S^1$}{S1} symmetry of the Tavis-Cummings system}
\label{sec:reduction-s1-action}

In this section, we consider the $\S^1$ action $\varphi_K$ \eqref{eq:S1-action} of the Tavis-Cummings system, we determine the fixed points of the action, and we describe how to locally reduce the $\S^1$ action. 
We refer the interested reader to Appendix~\ref{sec:algebraic-reduction} for a discussion of the global reduction of the $\S^1$ action using algebraic invariants, and a description of the topology of the reduced spaces $K^{-1}(k) / \S^1$.

\begin{proposition}
\label{prop:S1-fixed-points}
The $\S^1$ action $\varphi_K$ \eqref{eq:S1-action} on $M = \S^2_u \times \S^2_v \times \C$ is free outside the fixed points $\fp{\sigma_u}{\sigma_v} = (0,\sigma_u;0,\sigma_v;0)$ where $\sigma_u,\sigma_v \in \{-1,+1\}$.
\end{proposition}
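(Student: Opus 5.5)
The action \eqref{eq:S1-action} rotates the three complex coordinates $u$, $v$, $z$ simultaneously by the same phase $e^{it}$ and leaves $u_3$, $v_3$ fixed. The plan is therefore to compute stabilizers directly. Fix a point $m=(u,u_3;v,v_3;z)\in M$ and suppose $e^{it}\cdot m = m$ for some $t$. This means $(e^{it}-1)u = 0$, $(e^{it}-1)v=0$ and $(e^{it}-1)z=0$. If at least one of $u$, $v$, $z$ is nonzero, we get $e^{it}=1$. So the stabilizer of $m$ is trivial, and the action is free at $m$. Note that the weights are all equal to $1$, so there are no nontrivial finite stabilizers $\Z_n$ to worry about. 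This is the only point where one might expect an obstacle, and it is dispelled immediately by the fact that every weight is $\pm1$. Were the weights different, one would have to check points with some coordinates vanishing for $\Z_n$ isotropy.

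It remains to treat the case $u=v=z=0$. On $\S^2_u$ the condition $u=u_1+iu_2=0$ forces $u_3^2=1$, that is, $u_3=\sigma_u\in\{-1,+1\}$. Likewise $v=0$ forces $v_3=\sigma_v\in\{-1,+1\}$. The coordinate $z=p+iq$ is unconstrained, and here it vanishes. Hence the only candidates are the four points $\fp{\sigma_u}{\sigma_v}=(0,\sigma_u;0,\sigma_v;0)$. Each of these is clearly fixed by every $e^{it}$, so its stabilizer is all of $\S^1$.

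Combining the two cases, the fixed point set is exactly $\{\fp{\sigma_u}{\sigma_v}\}$, and every other point has trivial isotropy. This is precisely the statement of Proposition~\ref{prop:S1-fixed-points}. As a consistency check, one can observe that $X_K$ in \eqref{eq:vf-X_K} vanishes exactly when $u=v=z=0$. This agrees with the fixed points found above, although the isotropy argument already suffices on its own.
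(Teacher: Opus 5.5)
Your proof is correct and follows the same route as the paper. If any of $u$, $v$, $z$ is nonzero, the stabilizer is trivial because all three coordinates are rotated with weight $1$; if $u=v=z=0$, then $u\bar u+u_3^2=1$ and $v\bar v+v_3^2=1$ force $u_3,v_3\in\{-1,+1\}$, and you simply write out the stabilizer computation that the paper leaves implicit.
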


\begin{proof}
Equation~\eqref{eq:S1-action} shows that the only points with non-trivial isotropy are the points with $u=v=z=0$.
Then, the relations $u \bar{u} + u_3^2 = 1$ and $v \bar{v} + v_3^2 = 1$ imply that $u_3=\pm 1$ and $v_3=\pm 1$.
\end{proof}

The four fixed points given in Proposition~\ref{prop:S1-fixed-points} satisfy
\begin{equation}
\label{eq:rank-0-critical-values}
F(\fp{\sigma_u}{\sigma_v}) = \Bigl(
(\delta_1-\omega) \sigma_u  - \frac{2 g^2}{\delta_2 - \delta_1} \sigma_u \sigma_v,
(\delta_2-\omega) \sigma_v  + \frac{2 g^2}{\delta_2 - \delta_1} \sigma_u \sigma_v,
\sigma_u + \sigma_v
\Bigr).
\end{equation}

Notice that for $k \not\in \{-2,0,2\}$, the set $K^{-1}(k)$ does not contain any fixed points of the $\S^1$ action $\varphi_K$.
This implies that the reduced space $K^{-1}(k) / \S^1$ is a smooth manifold for $k \in (-2,0) \cup (0,2) \cup (2,\infty)$.
The topology of these reduced spaces is given in Proposition~\ref{prop:reduced-spaces}.

It will be useful in the following to reduce the $\S^1$ action $\varphi_K$ and obtain a two-degree-of-freedom Hamiltonian system on the reduced space $K^{-1}(k) / \S^1$.
The reduction of the $\S^1$ action can be performed globally using algebraic invariants, that is, using the Hilbert basis for the given group action.
Reduction using algebraic invariants is described in detail in Appendix~\ref{sec:algebraic-reduction}, where it is shown that it leads to a description of the four-dimensional reduced space $K^{-1}(k) / \S^1$ with $9$ invariants that satisfy a number of syzygies which are not all independent, and with a non-standard Poisson structure.
From this point of view, even though algebraic reduction has the benefit of being globally applicable, it can be cumbersome.
Therefore, if we are interested only in a local description of the dynamics, e.g., near a relative equilibrium, it is more convenient to reduce the $\S^1$ action so that we are left with exactly $4$ coordinates in the reduced space.

To this effect, we first introduce canonical coordinates $(\theta_u, u_3; \theta_v, v_3; q, p)$ on $\S^2_{*,u} \times \S^2_{*,v} \times \R^2$, where $\S^2_{*,u,v} = \S^2_{u,v} \setminus \{(0,0,\pm1)\}$, with 
\[ u = u_1 + i u_2 = (1-u_3^2)^{1/2} \exp(i \theta_u), \ 
   v = v_1 + i v_2 = (1-v_3^2)^{1/2} \exp(i \theta_v). 
\]
The given coordinates are canonical with $\{\theta_u,u_3\} = \{\theta_v,v_3\} = \{q,p\} = 1$.

Let $M_* = \S^2_{*,u} \times \S^2_{*,v} \times \R^2_*$, where $\R^2_* = \R^2 \setminus \{(0,0)\}$.
Then, the $\S^1$ action $\varphi_K$ on $K^{-1}(k) \cap M_*$ is reduced by defining the section 
\[ s : \S^2_{*,u} \times \S^2_{*,v} \to K^{-1}(k) \cap M_* : (\theta_u, u_3; \theta_v, v_3) 
\mapsto (\theta_u, u_3; \theta_v, v_3; [2(k-u_3-v_3)]^{1/2}, 0). \]
For any smooth $\S^1$ invariant function $H$ on $M_*$ the corresponding reduced function on $\S^2_{*,u} \times \S^2_{*,v}$ is given by $\widehat H = s^* H$.
The functions $H_1$, $H_2$ in \eqref{eq:H1-H2-complex} reduce to
\begin{subequations}
\label{eq:H1-H2-reduced}
\begin{align}
& \begin{aligned}
\widehat H_1 &= (\delta_1-\omega) u_3 
+ 2 g (1-u_3^2)^{1/2} (k-u_3-v_3)^{1/2} \cos\theta_u \\
& \quad - \frac{2g^2}{\delta_2 - \delta_1} \bigl[ (1-u_3^2)^{1/2} (1-v_3^2)^{1/2} \cos(\theta_u-\theta_v)
+ u_3 v_3 \bigr], 
\end{aligned}
\\
& \begin{aligned}
\widehat H_2 &= (\delta_2-\omega) v_3
+ 2 g (1-v_3^2)^{1/2} (k-u_3-v_3)^{1/2} \cos\theta_v \\
& \quad + \frac{2g^2}{\delta_2 - \delta_1} \bigl[ (1-u_3^2)^{1/2} (1-v_3^2)^{1/2} \cos(\theta_u-\theta_v)
+ u_3 v_3 \bigr].
\end{aligned}
\end{align}
\end{subequations}

\section{Singularities of the Tavis-Cummings system}
\label{sec:critical-values}

In this section, we parameterize the singularities of the integral map $F$ of the TC system and we determine their type for the STC system. 
Rank-$0$ singularities are discussed in Section~\ref{sec:rank-0-singularities}, while rank-$1$ and rank-$2$ singularities are obtained in Sections~\ref{sec:rank-1-singularities} and~\ref{sec:rank-2-singularities}, respectively. 
We summarize here the main results of this section for the STC system and we depict its set of critical values in Figure~\ref{fig:stcm-bd}.

\begin{figure}
\centering
\includegraphics[width=0.62\linewidth]{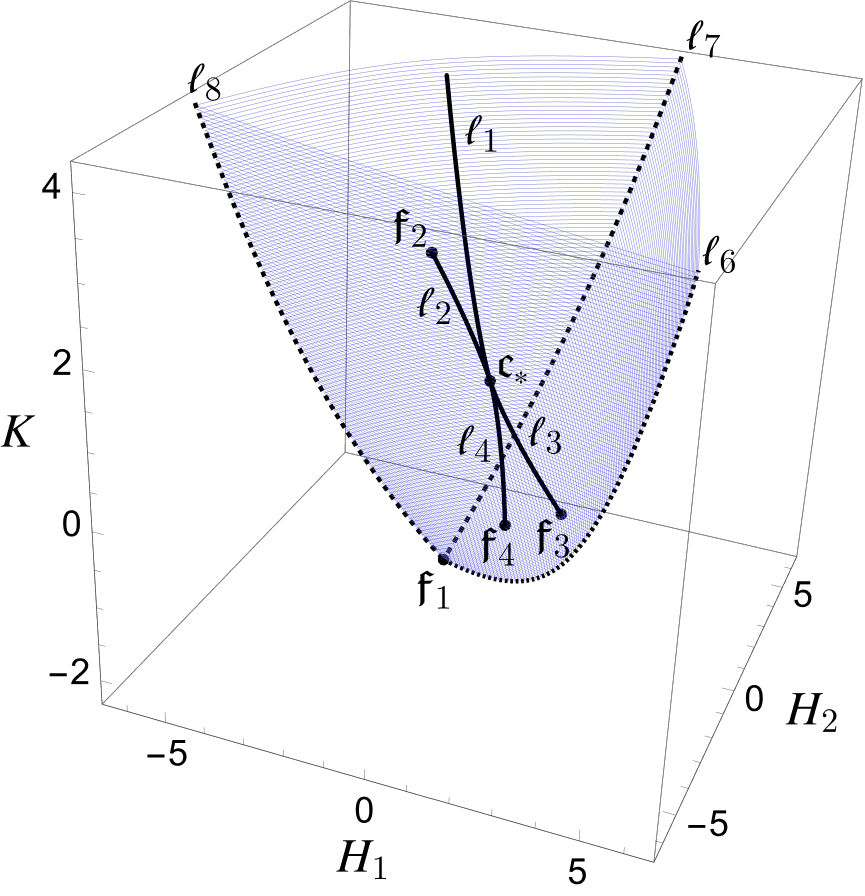}
\\[1cm]
\includegraphics[width=0.32\linewidth]{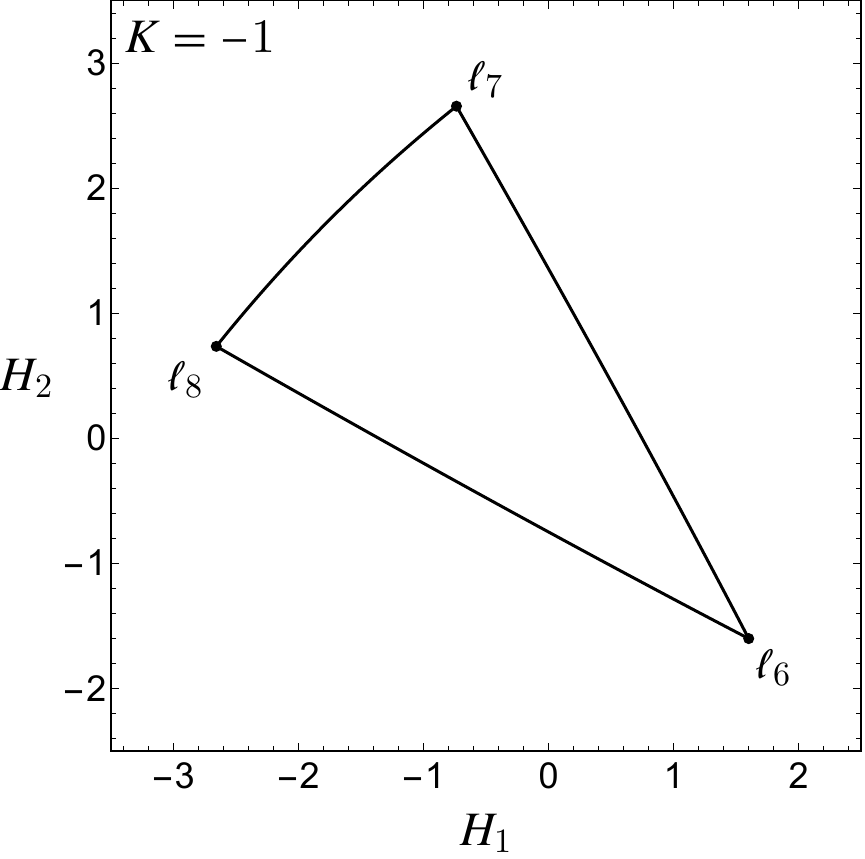} \hfill
\includegraphics[width=0.32\linewidth]{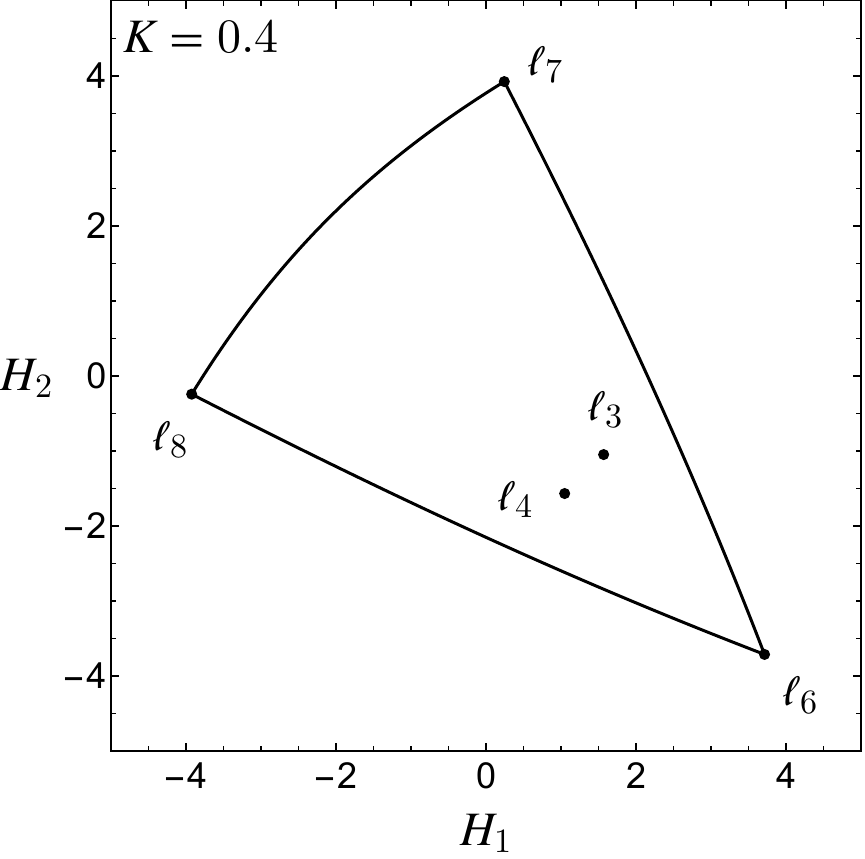} \hfill
\includegraphics[width=0.32\linewidth]{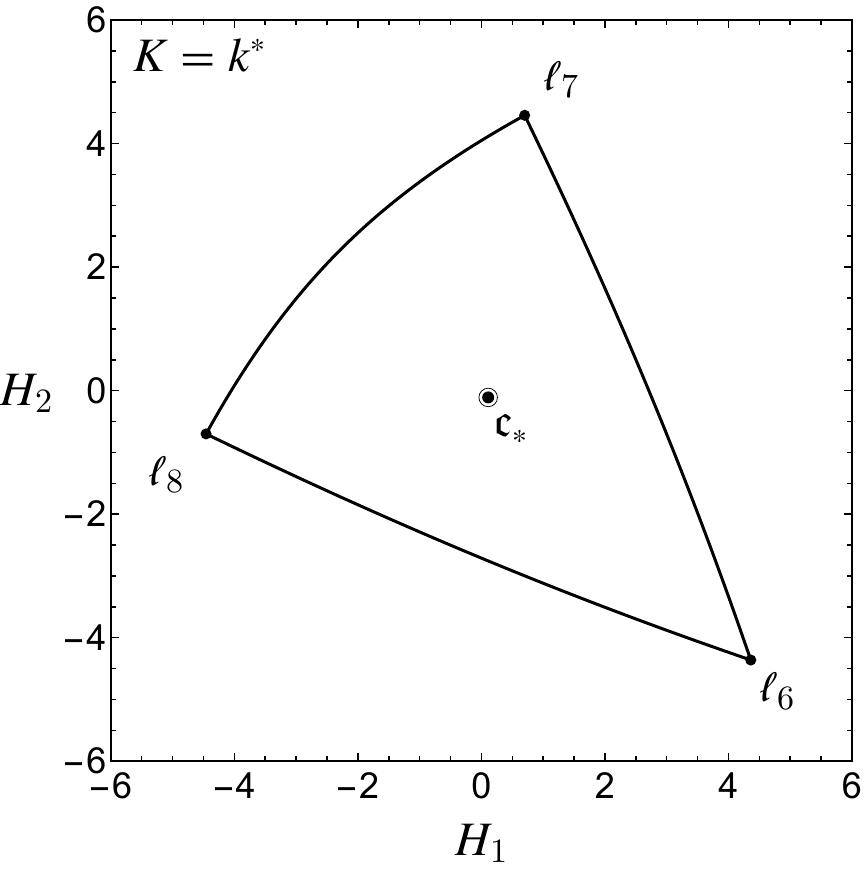} 
\\
\includegraphics[width=0.32\linewidth]{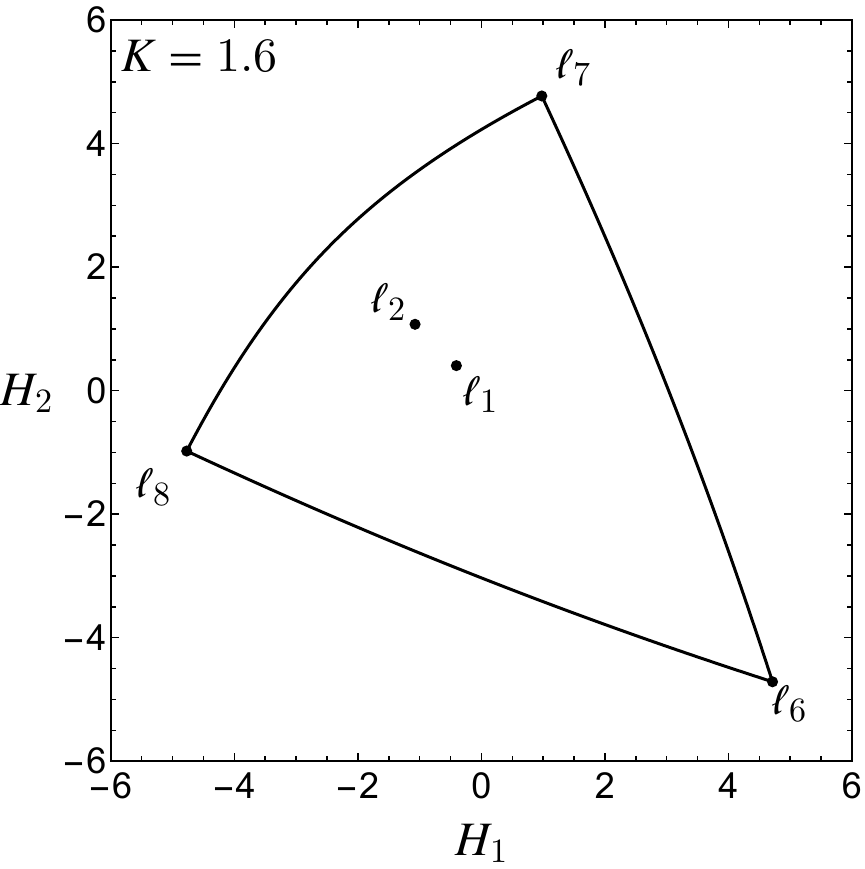} \hfill
\includegraphics[width=0.32\linewidth]{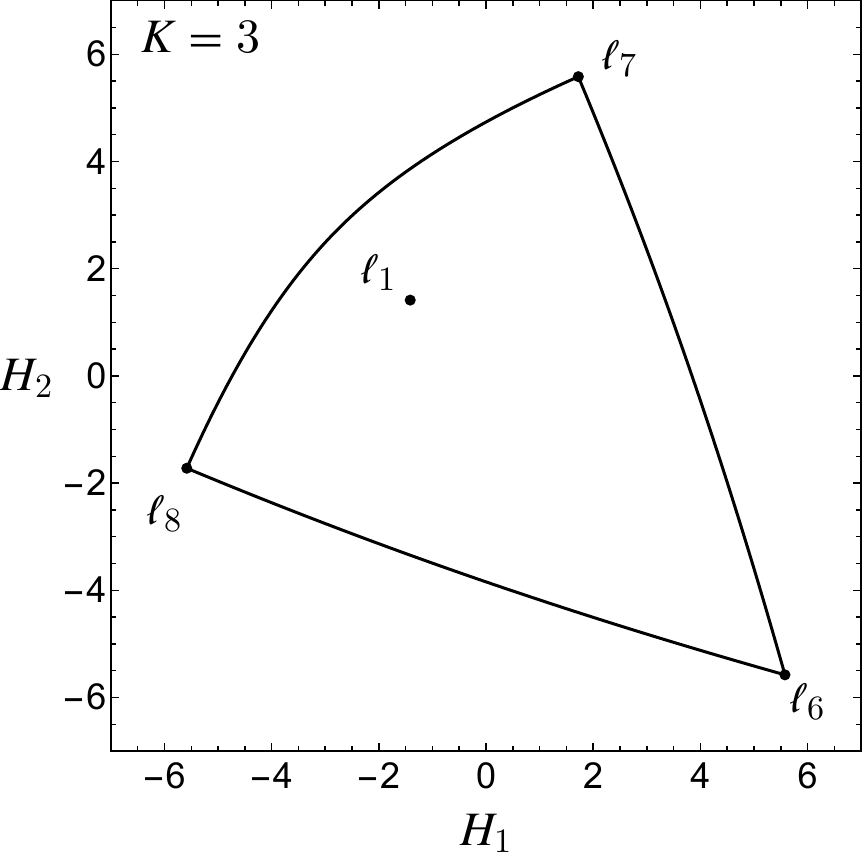} \hfill
\includegraphics[width=0.32\linewidth]{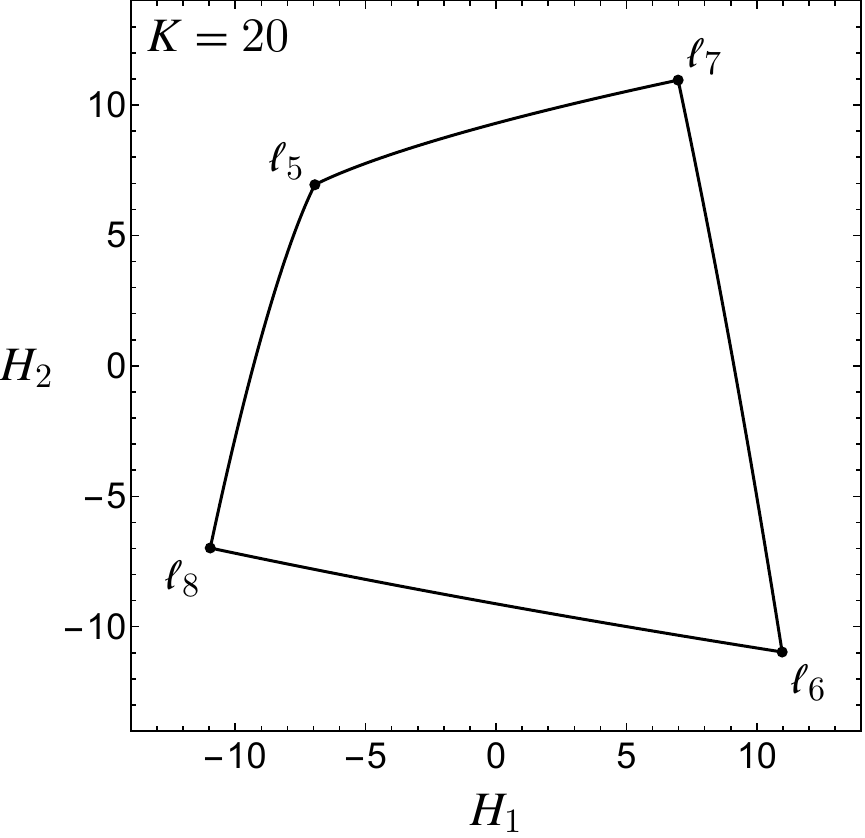}
\caption{Critical values of the STC system. Top: Critical values are depicted in the space $(H_1,H_2,K)$. Solid black lines represent the rank-$1$ FFR families $\ell_1$, $\ell_2$, $\ell_3$, $\ell_4$; they meet at the central critical value $\mathfrak c_*$. Dashed black lines represent the EER families $\ell_6$, $\ell_7$, $\ell_8$. The EER family $\ell_5$ appears for $K \ge 16.0625$ and is not shown here. The thin blue lines represent rank-$2$ singularities. Bottom: Horizontal slices of the set of critical with planes $K=k$.}
\label{fig:stcm-bd}
\end{figure}

The integral map $F=(H_1,H_2,K)$ of the STC system has four rank-$0$ critical values which are given by Eq.~\eqref{eq:rank-0-critical-values} and lie on the boundary of the image of $F$. We denote these by
\begin{gather*}
\mathfrak f_1 = F(\fp{-1}{-1}) = (-\tfrac32,\tfrac32,-2), \quad
\mathfrak f_2 = F(\fp{+1}{+1}) = (-\tfrac52,\tfrac52,2), \\
\mathfrak f_3 = F(\fp{-1}{+1}) = (\tfrac52,-\tfrac32,0), \quad 
\mathfrak f_4 = F(\fp{+1}{-1}) = (\tfrac32,-\tfrac52,0).
\end{gather*}

The rank-$1$ focus-focus-regular critical values form four threads $\ell_j$, $j=1,\dots,4$ in the interior of the image of $F$ that meet at a degenerate rank-$1$ critical value
\[ \mathfrak c_* = (h_1^*, h_2^*, k^*) 
= \Bigl( 2 - 3 \cdot 2^{2/3} , 
  -2 + 3 \cdot 2^{2/3} , 
  \frac{1}{16} (12 \cdot 2^{2/3}  - 1) \Bigr), \]
that we call the \emph{central critical value}. Each one of the threads $\ell_j$, $j=2,3,4$, connects the boundary point $\mathfrak f_j$ to the central critical value $\mathfrak c_*$. The thread $\ell_1$ starts at $\mathfrak c_*$ and meets the boundary of the image of $F$ at a point with $K=16.0625$, corresponding to a supercritical Hamiltonian Hopf bifurcation \cite{Meer1985}, and then continues along the boundary as the elliptic-elliptic-regular thread $\ell_5$. Each critical value $c$ along the threads $\ell_j$, $j=1,\dots,4$, has focus-focus-regular type, and thus the corresponding fiber $F^{-1}(c)$ is homeomorphic to the product of a two-dimensional pinched torus and $\S^1$. The central critical value corresponds to a fiber $F^{-1}(\mathfrak c_*)$ which is homeomorphic to $\S^2 \times \S^1$ and contains an $\S^1$ orbit of degenerate singularities, see Proposition~\ref{prop:a2-sphere}.

The boundary of the image of $F$ is the closure of the set of rank-$2$ critical values. Besides the rank-$2$ critical values, it contains the fixed points $\mathfrak f_j$, $j=1,\dots,4$, and the remaining four threads of rank-$1$ critical values of elliptic-elliptic-regular type denoted by $\ell_j$, $j=5,\dots,8$.

\subsection{Singularities in integrable Hamiltonian systems}
\label{sec:singulatities-ihs}

Consider an integrable Hamiltonian system $F = (F_1,\dots,F_n): M^{2n} \to \R^n$, and denote by $X_{F_j}$, $j=1,\dots,n$, the corresponding Hamiltonian vector fields.
A point $c \in M$ is a \emph{rank-$r$ singular point} (\emph{singularity}; \emph{critical point}) of $F$ if $\rank DF(c) = r < n$. 
The set $C$ of singular points of $F$ admits a natural stratification based on the rank.
In particular, $C$ is stratified by the disjoint sets $C_r = \{ c \in M : \rank DF(c) = r < n \}$ with $r=0,1,\dots,n-1$.

Define 
\[ L = \operatorname{span} \{ X_{F_1}(c), \dots, X_{F_n}(c) \} \subseteq T_c M. \]
Then, the condition that $c$ is a rank-$r$ singular point is equivalent to $\dim L = r < n$ and $L$ is an isotropic subspace of $T_c M$.
In this case, there are linear combinations $G_j(x) = \sum_{i=1}^n k_i (F_i(x) - F_i(c))$, $j=1,\dots,n$, with $k_i \in \R$, such that $G_j(c)=0$ for $j=1,\dots,n$, $X_{G_1}(c)=\cdots=X_{G_{n-r}}(c)=0$, and $X_{G_j}(c) \ne 0$ for $j=n-r+1,\dots,n$.

A rank-$r$ singular point $c \in M$ is \emph{non-degenerate} if for some $\lambda_1, \dots, \lambda_{n-r} \in \R$ the linear combination
\[ \lambda_1 DX_{G_1}(c) + \dots + \lambda_{n-r} DX_{G_{n-r}}(c) \bigl|_{L^\omega/L}, \]
has $2(n-r)$ distinct non-zero eigenvalues, where $DX_{G_j}(c)$ is the linearization at $c$ of the vector field $X_{G_j}$, and $L^\omega \supseteq L$ is the symplectic orthogonal of $L$, cf.~\cite[Definition~1.24]{Bolsinov2004}.
A point $b \in \R^n$ is a \emph{critical value} of $F$ if $F^{-1}(b)$ contains any singular points of $F$.
The \emph{bifurcation diagram} $\mathcal{BD}$ associated to $F$ is the set of critical values of $F$, that is, $\mathcal{BD} = F(C)$.

In the case of the two-spin Tavis-Cummings system, where $n=3$, a singular point can have rank $r$ equal to $0$, $1$, or $2$. We consider these cases separately in the following subsections.

\subsection{Rank-0 singularities}
\label{sec:rank-0-singularities}

If $r = \rank DF(c) = 0$, then $X_{F_i}(c) = 0$, for $i=1,2,3$.
In this case, $L = \{0\}$ and $L^\omega / L = T_c M$.
Therefore, $c$ is a non-degenerate singular point, if for some $\lambda_1, \lambda_2, \lambda_3 \in \R$, the linearized vector field
\[ \lambda_1 DX_{F_1}(c) + \lambda_2 DX_{F_2}(c)
+ \lambda_3 DX_{F_3}(c) \]
has $6$ distinct non-zero eigenvalues.
This implies that the eigenvalues can combine in one of the following types, where $\alpha_j \in \R \setminus \{0\}$, and the eigenvalues are distinct:

\begin{description}[noitemsep,font={\normalfont}]
\item[EEE:] elliptic-elliptic-elliptic, with eigenvalues $\pm i \alpha_1$, $\pm i \alpha_2$, $\pm i \alpha_3$.
\item[EEH:] elliptic-elliptic-hyperbolic, with eigenvalues $\pm i \alpha_1$, $\pm i \alpha_2$, $\pm \alpha_3$.
\item[EHH:] elliptic-hyperbolic-hyperbolic, with eigenvalues $\pm i \alpha_1$, $\pm \alpha_2$, $\pm \alpha_3$.
\item[HHH:] hyperbolic-hyperbolic-hyperbolic, with eigenvalues $\pm \alpha_1$, $\pm \alpha_2$, $\pm \alpha_3$.
\item[EFF:] elliptic-focus-focus, with eigenvalues $\pm i \alpha_1$, $\pm \alpha_2 \pm i \alpha_3$.
\item[HFF:] hyperbolic-focus-focus, with eigenvalues $\pm \alpha_1$, $\pm \alpha_2 \pm i \alpha_3$.
\end{description}

In the Tavis-Cummings system, rank-$0$ singular points satisfy $X_K(c) = 0$. 
Therefore, such points are the fixed points of the $\S^1$ action which as we have seen in Section~\ref{sec:reduction-s1-action} are given by $(u,u_3;v,v_3;z)=(0,\pm1;0,\pm1;0)$.
The type of the rank-$0$ singularities of the STC system is given in the following result.

\begin{proposition}\label{Prop:rank0}
In the STC system, the rank-$0$ singularity $(0,-1;0,-1;0)$ is EEE (elliptic-elliptic-elliptic), while the rank-$0$ singularities $(0,1;0,-1;0)$, $(0,-1;0,1;0)$, and $(0,1;0,1;0)$ are EFF (elliptic-focus-focus).
\end{proposition}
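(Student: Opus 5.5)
The plan is to linearize the three Hamiltonian vector fields $X_{H_1}$, $X_{H_2}$, $X_K$ at each fixed point $\fp{\sigma_u}{\sigma_v}=(0,\sigma_u;0,\sigma_v;0)$ of Proposition~\ref{prop:S1-fixed-points}. Near such a point I would use $w=(u,v,z)\in\C^3$ as local coordinates, with $u_3=\sigma_u(1-|u|^2)^{1/2}$ and $v_3=\sigma_v(1-|v|^2)^{1/2}$. Dropping all terms of order $\ge 2$ in $w$ in Eqs.~\eqref{eq:vf-X_K}, \eqref{eq:vf-X_H1}, \eqref{eq:vf-X_H2}, the vector fields become complex linear. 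Writing $\Delta=\delta_2-\delta_1$, we get
\[
\dot u = (\delta_1-\omega) i u - \sqrt2 g\sigma_u z + \tfrac{2ig^2}{\Delta}(\sigma_u v-\sigma_v u),\quad
\dot v = -\tfrac{2ig^2}{\Delta}(\sigma_u v-\sigma_v u),\quad
\dot z=-\sqrt2 g u
\]
for $X_{H_1}$, and analogously for $X_{H_2}$. For $X_K$ the linearization is simply $\dot w = i w$. Indeed the terms in $u\bar z$, $u\bar v$ and the corrections to $u_3,v_3$ are all of higher order. So $DX_{H_1}, DX_{H_2}, DX_K$ are given by complex $3\times 3$ matrices $A_1, A_2, iI$.

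The eigenvalues of the real $6\times 6$ linearization of $\lambda_1X_{H_1}+\lambda_2X_{H_2}+\lambda_3X_K$ are then the eigenvalues $\mu_j$ of $A=\lambda_1A_1+\lambda_2A_2+i\lambda_3 I$ together with their conjugates $\bar\mu_j$. To control these I will use a structural fact. The quadratic part of $K$ is $Q=-\tfrac{\sigma_u}{2}|u|^2-\tfrac{\sigma_v}{2}|v|^2+\tfrac12|z|^2$, and every quadratic $\S^1$-invariant generates a linear flow preserving the Hermitian form $Q$. Hence $-iA$ is self-adjoint with respect to $Q$.

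\emph{The point $\fp{-1}{-1}$.} Here $Q$ is positive definite. Therefore $-iA$ has real eigenvalues $\nu_j$, and the $6\times6$ spectrum is $\{\pm i\nu_j\}$, which is purely elliptic. To get type EEE it remains to pick explicit $\lambda$'s (for instance $\lambda_1=1$, $\lambda_2=0$, and a suitable $\lambda_3$ shift) for which the $\nu_j$ are distinct, nonzero, and pairwise not opposite. With the STC values $\delta_1-\omega=-\tfrac12$, $\delta_2-\omega=\tfrac12$, $g=\Delta=1$, this is a direct check of an explicit real cubic.

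\emph{The other three points.} Here $Q$ has signature $(2,1)$ or $(1,2)$. A $Q$-self-adjoint operator then has either three real eigenvalues or one real eigenvalue $\nu_1$ and a pair $\nu,\bar\nu$ with $\Im\nu\neq0$. In the second case the $6\times 6$ spectrum is $\pm i\nu_1$ together with $\{i\nu,\, i\bar\nu,\, -i\bar\nu,\, -i\nu\}$, a focus-focus quadruple. That is exactly type EFF. So for each of the three points I will write the characteristic polynomial of $-iA_1$ (or of a simple combination such as $-i(A_1+A_2)$) with STC values. I will then show that its discriminant is negative, which gives one real root and one complex-conjugate pair with nonzero imaginary part. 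Adding $\lambda_3$ only shifts all $\nu_j$ by a real constant, so I can use it to ensure $\nu_1\neq0$ and that the six eigenvalues are distinct.

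The main obstacle is this last computation. The quadratic-form signature argument only says that complex pairs are possible, not that they occur. So the negativity of the discriminant for some combination must be verified explicitly for each of the three points. The first attempt would be $\lambda=(1,0,0)$. If the discriminant happens to be nonnegative there (for instance at $\fp{+1}{+1}$, where the coupling sign changes), I would try $\lambda=(1,t,0)$ and find a $t$ making it negative.
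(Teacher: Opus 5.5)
Your proposal is correct and is essentially the paper's proof: the paper linearizes $X_{H_1}$ at the four fixed points in the same chart $(u,v,z)$, and its $6\times 6$ matrix in $(u,\bar u;v,\bar v;z,\bar z)$ is just $\mathrm{diag}(A_1,\bar A_1)$ after reordering, from which it reads off EEE at $\fp{-1}{-1}$ and EFF at the other three points. Your Hermitian-form packaging is a tidy addition, and $\lambda=(1,0,0)$ already suffices at every point: $\det(\nu I + iA_1)=\nu^3+(\tfrac12+2\sigma_u+2\sigma_v)\nu^2+3\sigma_u\nu+4$, which factors as $(\nu-4)(\nu^2+\tfrac12\nu-1)$ at $\fp{-1}{-1}$ and $(\nu+4)(\nu^2+\tfrac12\nu+1)$ at $\fp{+1}{+1}$ and has negative discriminant at both mixed points, so your fallback $\lambda=(1,t,0)$ is never needed and the discriminant sign makes rigorous what the paper reports only numerically for the mixed points.
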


\begin{proof}
It is sufficient to compute the eigenvalues of $DX_{H_1}(c)$.
We consider local charts $(u,v,z)$ near each of the four singular points $(0,s_1;0,s_2;0)$ with $s_1,s_2 \in \{-1,+1\}$. 
In each of these charts we have $u_3 = s_1 (1-u \bar{u})^{1/2}$ and $v_3 = s_2 (1-v \bar{v})^{1/2}$ and we compute in coordinates $(u, \bar{u}; v, \bar{v}; z, \bar{z})$ that
\[
DX_{H_1} = \begin{bmatrix}
-2 i s_2-\frac{i}{2} & 0 & 2 i s_1 & 0 & -\sqrt{2} s_1 & 0 \\
 0 & 2 i s_2+\frac{i}{2} & 0 & -2 i s_1 & 0 & -\sqrt{2} s_1 \\
 2 i s_2 & 0 & -2 i s_1 & 0 & 0 & 0 \\
 0 & -2 i s_2 & 0 & 2 i s_1 & 0 & 0 \\
 -\sqrt{2} & 0 & 0 & 0 & 0 & 0 \\
 0 & -\sqrt{2} & 0 & 0 & 0 & 0 \\
\end{bmatrix}.
\]
We find that for $s_1=s_2=-1$ the eigenvalues are $\pm 4i$, $\pm \frac14(\sqrt{17}+1) i$, $\pm \frac14(\sqrt{17}-1) i$,
and the singularity is EEE.
For $s_1=s_2=1$ the eigenvalues are
$\pm 4i$, $\frac{1}{4}(\pm \sqrt{15} \pm i)$,
and the singularity is EFF.
Finally, for $s_1=1$, $s_2=-1$ and for $s_1=-1$, $s_2=1$ the roots of the characteristic polynomial can be computed exactly but the resulting expressions are cumbersome. We give here the approximate values $\pm 1.095 i$, $\pm 1.89 \pm 0.298 i$ for $s_1=1$, $s_2=-1$, and $\pm 2.42 i$, $\pm 0.854 \pm 0.961 i$ for $s_1=-1$, $s_2=1$.
In both cases, the singularity has type EFF.
\end{proof}

\subsection{Rank-1 singularities}
\label{sec:rank-1-singularities}

We give a parameterization of rank-$1$ singularities of the general TC system.

\begin{proposition}
\label{prop0:rank-1-parameterization}
The rank-$1$ singularities of the Tavis-Cummings system are parameterized as 
\begin{equation}\label{eq:rank-1-phase-space-general}
\begin{gathered}
z \bar z = \frac{2 g^2}{x^2} - \frac{(x+y-(\delta_1-\omega))^2}{2 g^2}=\frac{2 g^2}{y^2} - \frac{(x+y-(\delta_2-\omega))^2}{2 g^2}, 
\qquad
u = \frac{xi}{-\sqrt2 g} z,
\quad
v = \frac{yi}{-\sqrt2 g} z,
\\
u_3 = - \frac{x}{2 g^2} (x + y - (\delta_1-\omega)), 
\quad 
v_3 = - \frac{y}{2 g^2} (x + y - (\delta_2-\omega)).
\end{gathered}
\end{equation}
The corresponding critical values are parameterized as
\begin{subequations}
\label{eq:rank-1-param-general}
\begin{align}
h^c &= - \frac{1}{2 g^2} ((x + y)^2 - x(\delta_1-\omega) -y(\delta_2-\omega)) 
+ \frac{g^2}{x^2} - \frac{(x+y-(\delta_1-\omega))^2}{4 g^2}, \\
h_1^c &= -\frac{2 g^2 (y-\delta_2+\delta_1)}{(\delta_2-\delta_1) x}-\frac{x^2 (x+y-(\delta_1-\omega) )}{2 g^2}, \\
h_2^c &= \frac{2 g^2 (x+\delta_2-\delta_1)}{(\delta_2-\delta_1) y} -\frac{y^2 (x+y-(\delta_2-\omega) )}{2 g^2}, 
\end{align}
\end{subequations}
where $x$, $y\in\R$ satisfy
\begin{equation}\label{eq:relation-xy}
  4 g^4 (x^2-y^2) + (\delta_2 - \delta_1) x^2 y^2 (2(x + y) - (\delta_1 + \delta_2 - 2 \omega)) = 0,
\end{equation}
and are constrained by the conditions $-1 \le u_3 \le 1$ and $-1 \le v_3 \le 1$.
\end{proposition}

\begin{proof}
If $c$ is a rank-$1$ singularity of $F$ then we must have $X_K(c) \ne 0$.
Therefore, to find rank-$1$ singularities we require that $X_{H_1} = x X_K$ and $X_{H_2} = y X_K$ for some $x, y \in \R$ and we make use of Eq.~\eqref{eq:vf-X_K}, Eq.~\eqref{eq:vf-X_H1}, and Eq.~\eqref{eq:vf-X_H2}. 

First, $X_{H_1}(z) = x X_K(z)$ and $X_{H_2}(z) = y X_K(z)$ give
\[ -\sqrt2 g u = x i z, \ -\sqrt2 g v = y i z. \]
Notice that these relations directly give $u \bar{v} - \bar{u} v = 0$, $u \bar{z} + \bar{u} z = 0$, $v \bar{z} + \bar{v} z = 0$ and thus $X_{H_1}(u_3) = X_{H_1}(v_3) = X_{H_2}(u_3) = X_{H_2}(v_3) = 0$.
Since $z=0$ also implies $u=v=0$, corresponding to rank-$0$ singularities, we have $z \ne 0$.

The equations $X_{H_1}(v) = x X_K(v)$ and $X_{H_2}(u) = y X_K(u)$ give after substituting $u,v$ and dividing by $z \ne 0$ the same equation
\[ -\frac{2g^2}{\delta_2-\delta_1}  (y u_3 - x v_3) = x y. \]

We then consider the equations $X_{H_1}(u) = x X_K(u)$ and $X_{H_2}(v) = y X_K(v)$. In the same way as above, we obtain 
\begin{align*}
& (\delta_1-\omega) x - 2 g^2 u_3 + \frac{2g^2}{\delta_2-\delta_1} (y u_3 - x v_3) = x^2, \\
& (\delta_2-\omega) y - 2 g^2 v_3 + \frac{2g^2}{\delta_2-\delta_1} (y u_3 - x v_3) = y^2, 
\end{align*}
which can be solved for $u_3$, $v_3$ to obtain
\begin{align*}
u_3 = - \frac{x}{2 g^2} (x + y - (\delta_1-\omega)), \ 
v_3 = - \frac{y}{2 g^2} (x + y - (\delta_2-\omega)).
\end{align*}
Since $-1 \le u_3 \le 1$ and $-1 \le v_3 \le 1$, these equations give constraints for $x,y$.

Moreover, we have
\begin{align*} 
z \bar{z} &= \frac{2g^2}{x^2} u \bar u = \frac{2g^2}{x^2} (1 - u_3^2)
=  \frac{2 g^2}{x^2} - \frac{(x+y-(\delta_1-\omega))^2}{2 g^2}, \\
z \bar{z} &= \frac{2g^2}{y^2} v \bar v = \frac{2g^2}{y^2} (1 - v_3^2)
=  \frac{2 g^2}{y^2} - \frac{(x+y-(\delta_2-\omega))^2}{2 g^2}.
\end{align*}
Note that there is an arbitrariness of a phase as a result of the fact that rank-$1$ singularities belong to orbits of the $\S^1$ action $\varphi_K$.

The equality of the two expressions for $z \bar{z}$ gives the following implicit relation between $x$ and $y$
\[ 4 g^4 (x^2-y^2) + (\delta_2 - \delta_1) x^2 y^2 (2(x + y) - (\delta_1 + \delta_2 - 2 \omega)) = 0. \]
Substituting the obtained expressions for $z \bar z$, $u_3$, $v_3$, $u$, $v$ into Eq.~\eqref{eq:K} for $K$ and Eq.~\eqref{eq:H1-H2-complex} for $H_1$, $H_2$ we then obtain the parameterization in Eq.~\eqref{eq:rank-1-param-general}. \qedhere
\end{proof}

\begin{remark}
The proof of Proposition~\ref{prop0:rank-1-parameterization} is inspired by the discussion of rank-$1$ singularities in \cite[Sec.~8.2]{Babelon2012} for the Tavis-Cummings system but does not make use of the spectral Lax pair formalism. In the spectral Lax pair formalism, the double roots of the spectral curve are not directly associated with physical quantities, and may lead to ``non-physical'' critical values. In \cite{Babelon2012} this issue is addressed by observing that for rank-$1$ singularities we have $X_{H_1} = x X_K$, $X_{H_2} = y X_K$ for some real numbers $x$, $y$ and showing that $x$, $y$ must satisfy certain constraints. This is also the starting point for our approach in the proof of Proposition~\ref{prop0:rank-1-parameterization} and it is sufficient for producing a full parameterization of rank-$1$ singularities without using the spectral Lax pair.
\end{remark}

Proposition~\ref{prop0:rank-1-parameterization} gives a parameterization of rank-$1$ singularities for the general TC system, except that we have an implicit relation between $x$ and $y$ in Eq.~\eqref{eq:relation-xy}. For the STC system, this implicit relation can be simplified and we can obtain the parameterization of rank-$1$ singularities given in the following Proposition. The rank-$1$ singularities of the STC system are shown in Figure~\ref{fig:stcm-bd}.

\begin{proposition}
\label{prop:rank-1-parameterization-stcm}
The rank-$1$ singularities of the special Tavis-Cummings system are parameterized as 
\begin{equation}\label{eq:rank-1-phase-space-special}
\begin{gathered}
z \bar z = \frac{2}{(a+b)^2} - \frac18 (1-4a)^2, 
\qquad
u = \frac{a-b}{\sqrt2 i} z,
\quad
v = \frac{a+b}{\sqrt2 i} z,
\\
u_3 = -\frac{1}{2} (a-b) \left( 2a + \frac12 \right), 
\quad 
v_3 = -\frac{1}{2} (a+b) \left( 2a - \frac12 \right),
\end{gathered}
\end{equation}
where we have the following possibilities for how $a$ depends on $b$, and the corresponding range of values of $b$:
\begin{enumerate}[label={(\roman*)}]
\item[$\ell_1$:] $a = 0$, $1/4 < b < 2^{2/3}$;
\item[$\ell_2$:] $a = 0$, $2^{2/3} < b < 4$;
\item[$\ell_3$:] $a = -(b^2 - 2 b^{1/2})^{1/2}$, $2^{2/3} < b < b^{\max}$; 
\item[$\ell_4$:] $a = (b^2 - 2 b^{1/2})^{1/2}$, $2^{2/3} < b < b^{\max}$; 
\item[$\ell_5$:] $a = 0$, $0 < b < 1/4$;
\item[$\ell_6$:] $a = 0$, $-4 < b < 0$;
\item[$\ell_7$:] $a = (b^2 + 2 b^{1/2})^{1/2}$, $0 < b < 1/4$;
\item[$\ell_8$:] $a = -(b^2 + 2 b^{1/2})^{1/2}$, $0 < b < 1/4$;
\item[$\mathfrak c_*$:] $a = 0$, $b = 2^{2/3}$;
\item[$\mathrm{hh}$:] $a = 0$, $b = 1/4$.
\end{enumerate}
Here, $b^{\max}$ is the unique real solution of the equation $16 b^3 - 8 b^2 + b - 64 = 0$, given in closed form by 
\[ b^{\max} = \frac{1}{12} \Bigl(2 + \bigl(3455 - 48 \sqrt{5181}\bigr)^{1/3} + \bigl(3455 + 48 \sqrt{5181}\bigr)^{1/3}\Bigr)  \approx 1.7583. \]
The corresponding critical values are parameterized in all cases as
\begin{subequations}\label{eq:rank-1-cv-special}
\begin{align}
k^c &=\frac12 (b - 4 a^2) + \frac{1}{(a-b)^2} - \frac{1}{16}(1+4a)^2, \\
h_1^c &=  2-\frac14(1+4a)(a-b)^2 + 2 \,\frac{1-2a}{a-b}, \\
h_2^c &= -2+\frac14(1-4a)(a+b)^2 + 2 \,\frac{1+2a}{a+b}. 
\end{align}
\end{subequations}
\end{proposition}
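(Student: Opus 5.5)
The plan is to specialize Proposition~\ref{prop0:rank-1-parameterization} to the parameters \eqref{eq:stcm}. We then change variables so that the implicit relation \eqref{eq:relation-xy} factorizes, and finally determine the admissible parameter ranges from the constraints $|u_3|\le 1$, $|v_3|\le 1$, $z\bar z>0$.

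\textbf{Step 1: specialize and factor the relation.} With $\delta_1=\tfrac12$, $\delta_2=\tfrac32$, $\omega=g=1$ we have $\delta_1-\omega=-\tfrac12$, $\delta_2-\omega=\tfrac12$, $\delta_2-\delta_1=1$ and $\delta_1+\delta_2-2\omega=0$. Relation \eqref{eq:relation-xy} then becomes
\[
4(x^2-y^2)+2x^2y^2(x+y)=2(x+y)\bigl(2(x-y)+x^2y^2\bigr)=0 .
\]
This suggests the substitution $x=a-b$, $y=a+b$. It is consistent with the stated formulas, since $u=\frac{xi}{-\sqrt2}z=\frac{a-b}{\sqrt2 i}z$. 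Because $x+y=2a$, $x-y=-2b$ and $xy=a^2-b^2$, the relation reads
\[
2a\bigl((a^2-b^2)^2-4b\bigr)=0 .
\]
So either $a=0$, or $(a^2-b^2)^2=4b$. The second case forces $b\ge0$ and $a^2=b^2\pm2b^{1/2}$, which gives exactly the branches $\ell_3,\ell_4$ (minus sign) and $\ell_7,\ell_8$ (plus sign).

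Substituting into \eqref{eq:rank-1-phase-space-general} gives $u_3=-\tfrac12(a-b)(2a+\tfrac12)$ and $v_3=-\tfrac12(a+b)(2a-\tfrac12)$. The two expressions for $z\bar z$ become $\frac{2}{(a-b)^2}-\frac18(1+4a)^2=\frac{2}{(a+b)^2}-\frac18(1-4a)^2$, and we record the second one. Substituting into \eqref{eq:rank-1-param-general}, and simplifying with $x+y=2a$, yields \eqref{eq:rank-1-cv-special}. This is routine algebra.

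\textbf{Step 2: ranges on the branch $a=0$.} Here $u_3=v_3=b/4$ and $z\bar z=2/b^2-\tfrac18$. The condition $z\bar z>0$ (recall $z\ne0$ for rank $1$) gives $|b|<4$, and then $|u_3|<1$ holds automatically. The value $b=0$ is excluded since $x,y\neq0$. The resulting interval $(-4,0)\cup(0,4)$ is subdivided at two points. At $b=1/4$ the branches $\ell_7,\ell_8$ emanate (this is the point $\mathrm{hh}$). At $b=2^{2/3}$ we have $b^2=2b^{1/2}$, i.e.\ $b^{3/2}=2$, so the branches $\ell_3,\ell_4$ with $a=\pm(b^2-2b^{1/2})^{1/2}$ pass through $a=0$; this is $\mathfrak c_*$. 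Evaluating \eqref{eq:rank-1-cv-special} at $a=0$, $b=2^{2/3}$ should reproduce the stated value of $\mathfrak c_*$, which serves as a check.

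\textbf{Step 3: ranges on the other branches (main obstacle).} On $\ell_3,\ell_4$ we need $b^2\ge 2b^{1/2}$, i.e.\ $b\ge 2^{2/3}$. The upper endpoint should be where the thread reaches a rank-$0$ point, namely where $z\bar z\to0$ (equivalently $|u_3|,|v_3|\to1$). To find it I would take $z\bar z=0$ with $a^2=b^2-2b^{1/2}$ and eliminate $a$ and $b^{1/2}$ by resultants. The expected outcome is $16b^3-8b^2+b-64=0$, and Cardano's formula then gives $b^{\max}$.

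The delicate point is showing that the branches $\ell_3$ and $\ell_4$ both end at the same $b^{\max}$, and that $|u_3|,|v_3|\le1$ holds throughout $(2^{2/3},b^{\max})$. The sign of $a$ enters $z\bar z$ asymmetrically through $(1\mp4a)^2$, so each sign must be checked separately, possibly with a symmetry argument. The analogous check for $\ell_7,\ell_8$ should show that the constraints cut them off at $b=1/4$, where $a=\pm(1/16+1)^{1/2}$. I expect this elimination-and-sign verification, rather than the algebra of Steps 1–2, to require the most care.
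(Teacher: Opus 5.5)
Your route is the paper's: the same specialization of Proposition~\ref{prop0:rank-1-parameterization}, the same substitution $x=a-b$, $y=a+b$, and the same factorization into $a=0$ or $a^2=b^2\pm2b^{1/2}$. You then impose the constraints branch by branch, as the paper does. Using $z\bar z>0$ as the single constraint is a valid shortcut, because $z\bar z=\frac{2}{x^2}(1-u_3^2)=\frac{2}{y^2}(1-v_3^2)$ on the branches.

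One claim in Step~2 is false: $\ell_7,\ell_8$ do not emanate from $\mathrm{hh}$. At $b=1/4$ they have $a=\pm\sqrt{17}/4$, as you note yourself in Step~3. Hence $u_3=v_3=-1$ and $z=0$, so they end at the rank-$0$ point $\fp{-1}{-1}$, where $K=-2$, whereas $\mathrm{hh}$ has $K=16.0625$. In the $(a,b)$-plane these branches reach $a=0$ only in the excluded limit $b\to0^+$. So no branch crosses $a=0$ at $b=1/4$. The split there reflects a change of type (the factor $1-4b$ in $P_2$ in Proposition~\ref{thm:rank-1-stability}) and is mere labelling for the present statement. This is unlike the split at $b=2^{2/3}$, which you justify correctly.

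Step~3 works out as you expect.
\begin{itemize}
\item The map $(a,b)\mapsto(-a,b)$ exchanges $u_3\leftrightarrow v_3$ and the two expressions for $z\bar z$, and it preserves each branch equation. So it swaps $\ell_3\leftrightarrow\ell_4$ and $\ell_7\leftrightarrow\ell_8$ with identical $b$-ranges, which removes the asymmetry you worry about.
\item With $s=b^{1/2}$ and $a^2=s^4-2s$, the condition $v_3=\pm1$ is linear in $a$, namely $a(1-4s^2)=\pm4+4s^4-s^2-8s$. Squaring gives $(4s^3-s-8)(2s\mp1)^2=0$, and the second factor is irrelevant for $b\ge2^{2/3}$.
\item For $s>0$, the equation $4s^3-s-8=0$ is equivalent to $b(4b-1)^2=64$, i.e.\ to $16b^3-8b^2+b-64=0$. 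At this root $a=\mp s/2$, so $\ell_3$ ends at $v_3=1$ and $\ell_4$ at $v_3=-1$.
\item Replacing $s$ by $-s$ handles $\ell_7,\ell_8$ and yields $(4s^3-s+8)(2s\pm1)^2=0$, whose only positive root is $s=1/2$.
\end{itemize}
Finally, $z\bar z>0$ at $b=2^{2/3}$ and as $b\to0^+$, and $z\bar z<0$ for large $b$ on each branch. These signs fix the stated ranges.
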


\begin{proof}
In the STC system we have $\delta_1 + \delta_2 = 2\omega$. Then Eq.~\eqref{eq:relation-xy} becomes
\[ 2 (x + y) [ 2 g^4 (x-y) + (\delta_2 - \delta_1) x^2 y^2] = 0. \]

Defining $x = a-b$, $y = a+b$, the last equation gives
\[ a ((\delta_2 - \delta_1)(a^2-b^2)^2-4g^4b) = 0, \]
with solutions $a = 0$, or $a^2=b^2 \pm 2 g^2 \sqrt{b/(\delta_2-\delta_1)}$. 
For the STC system we have $g = 1$ and $\delta_2-\delta_1=1$.
Therefore, the solutions become
\[ a = 0,\ b \in \R, \qquad a = \pm (b^2+2b^{1/2})^{1/2},\ b \ge 0, 
\qquad a = \pm (b^2-2b^{1/2})^{1/2},\ b \ge 2^{2/3}. \]
Substituting $x = a-b$, $y = a+b$ in Eq.~\eqref{eq:rank-1-phase-space-general} and Eq.~\eqref{eq:rank-1-param-general} we obtain Eq.~\eqref{eq:rank-1-phase-space-special} and Eq.~\eqref{eq:rank-1-cv-special}, respectively.

We consider now the constraints $-1 \le u_3 \le 1$, $-1 \le v_3 \le 1$. 
For the solution $a=0$ we obtain $u_3 = v_3 = b/4$. 
Therefore, the constraints give $-4 \le b \le 4$.
However, the values $b=\pm4$ correspond to the rank-$0$ singularities and are excluded, as is the value $b=0$ where $z \bar{z}$ is not defined. 
We denote by $\ell_6$ the case $a=0$, $b \in (-4,0)$.
We split the case $a=0$, $b \in (0,4)$ to several subcases: $\ell_5$ for $b \in (0,1/4)$, $\mathrm{hh}$ for $b=1/4$, $\ell_1$ for $b \in (1/4,2^{2/3})$, $\mathfrak c_*$ for $b = 2^{2/3}$, and $\ell_2$ for $b \in (2^{2/3},4)$.
The reason for this is that $\mathrm{hh}$ and $\mathfrak c_*$ correspond to degenerate singularities.

For the solutions $\ell_3$, $\ell_4$ with $a = \mp (b^2 - 2 b^{1/2})^{1/2}$ we find that they respectively give $u_3=-v_3=-1$ and $u_3=-v_3=1$ for $b = b^{\max} \approx 1.7583$. Therefore, we must have $2^{2/3} \le b \le b^{\max}$. The value $b = 2^{2/3}$ gives $a=0$ and corresponds to $\mathfrak c_*$ and is excluded from this parameterization. This also shows that $\ell_1$, $\ell_2$, $\ell_3$, $\ell_4$ meet at $\mathfrak c_*$ with $a=0$, $b=2^{2/3}$. The value $b = b^{\max}$ gives the rank-$0$ singularities $u_3=-v_3=-1$, $z=0$ for $\ell_3$ and $u_3=-v_3=1$, $z=0$ for $\ell_4$.

For the solutions $\ell_7$, $\ell_8$ with $a = \pm (b^2 + 2 b^{1/2})^{1/2}$ we find that they give $-1 \le u_3 \le 0$ and $-1 \le v_3 \le 0$ for $0 \le b \le 1/4$. The value $b=0$ is excluded because it gives $a=0$ and thus $z \bar z$ is not defined, while $b=1/4$ is excluded because it gives the rank-$0$ singularity $u_3=v_3=-1$, $z=0$.
\end{proof}

We now discuss the stability of the rank-$1$ singularities in the STC system. 

\begin{proposition}
\label{thm:rank-1-stability}
The rank-$1$ singularities of the STC system given in Proposition~\ref{prop:rank-1-parameterization-stcm} have the following types.
\begin{enumerate}[label={(\roman*)},noitemsep]
\item The families $\ell_1$, $\ell_2$, $\ell_3$, $\ell_4$ are FFR (focus-focus-regular).
\item The families $\ell_5$, $\ell_6$, $\ell_7$, $\ell_8$ are EER (elliptic-elliptic-regular).
\item The points $\mathrm{hh}$ and $\mathfrak c_*$ are degenerate.
\end{enumerate}
\end{proposition}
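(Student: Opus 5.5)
Following the definition of non-degeneracy in Section~\ref{sec:singulatities-ihs}, I would handle all families at once by reducing the problem to a $4\times4$ eigenvalue computation. At a rank-$1$ point $c$ given by Proposition~\ref{prop:rank-1-parameterization-stcm}, we have $L=\operatorname{span}\{X_K(c)\}$, and the two combinations
\[ G_1 = H_1 - xK, \qquad G_2 = H_2 - yK, \]
with $x=a-b$, $y=a+b$, satisfy $X_{G_1}(c)=X_{G_2}(c)=0$. Their linearizations therefore preserve $L^\omega$ and descend to $L^\omega/L$, which is four dimensional. In practice I would not construct $L^\omega/L$ abstractly. Instead, I would work with the local $\S^1$-reduction of Section~\ref{sec:reduction-s1-action}: since $z\ne0$ on rank-$1$ singularities, the point $c$ projects to an equilibrium of the reduced $2$-DOF system $(\widehat H_1,\widehat H_2)$ on $K^{-1}(k)/\S^1$. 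By the standard correspondence, the spectrum of $\lambda_1 DX_{G_1}+\lambda_2 DX_{G_2}$ on $L^\omega/L$ equals the spectrum of the linearization of $\lambda_1X_{\widehat H_1}+\lambda_2X_{\widehat H_2}$ at that equilibrium.

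The concrete steps are as follows.
\begin{enumerate}
\item Compute the $4\times4$ matrices $A_j=DX_{\widehat H_j}$ (or, equivalently, the Hessians of $\widehat H_j$ composed with the symplectic form) at the equilibrium. The equilibrium lies at $\theta_u,\theta_v\in\{\pi/2,3\pi/2\}$ up to the common phase, because $u=\frac{a-b}{\sqrt2 i}z$. It is convenient to use the chart in which $u,v$ are expressed relative to the phase of $z$, in the same spirit as the chart used in the proof of Proposition~\ref{Prop:rank0}, so that no coordinate singularities arise.
\item Express $A_1,A_2$ as rational functions of $(a,b)$ via Eq.~\eqref{eq:rank-1-phase-space-special}, and treat the two branches separately: $a=0$ (families $\ell_1,\ell_2,\ell_5,\ell_6$, $\mathrm{hh}$, $\mathfrak c_*$) and $a^2=b^2\mp2b^{1/2}$ (families $\ell_3,\ell_4,\ell_7,\ell_8$).
\item Because $A_1$ and $A_2$ commute and lie in a Cartan subalgebra of $\symp(4)$ when the point is non-degenerate, it suffices to study the characteristic polynomial $\det(\mu-A_1-tA_2)$ for a generic $t$. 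Since this polynomial is even, write it as $\mu^4+P\mu^2+Q$, with discriminant $\Delta=P^2-4Q$. The type is then read off from the signs: $\Delta<0$ gives focus-focus (complex quadruple); $\Delta>0$, $Q>0$, $P>0$ gives elliptic-elliptic; the remaining sign patterns correspond to hyperbolic types, which I would check do not occur.
\end{enumerate}

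On the branch $a=0$ the expressions simplify considerably, since $u_3=v_3=b/4$ and $x=-b$, $y=b$. I expect $\Delta$ to factor as a product containing $(4b-1)$ and $(b^{3/2}-2)$ (that is, $b^3-4$), with sign changes exactly at $b=1/4$ and $b=2^{2/3}$. This yields focus-focus on $(1/4,2^{2/3})\cup(2^{2/3},4)$ and elliptic-elliptic on $(-4,0)\cup(0,1/4)$. At $b=1/4$ a collision of eigenvalues occurs (the Hamiltonian Hopf point $\mathrm{hh}$). At $b=2^{2/3}$ the whole pencil becomes nilpotent or fails to have distinct eigenvalues for every choice of $(\lambda_1,\lambda_2)$; this is consistent with $\mathfrak c_*$ being the junction where the threads $\ell_3,\ell_4$ branch off, since branching of the critical set forces degeneracy there. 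For parts~(iii), I would verify degeneracy by showing that every combination $\lambda_1A_1+\lambda_2A_2$ has a repeated eigenvalue at these two points.

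The main obstacle is the branch $a=\pm(b^2\mp2b^{1/2})^{1/2}$, where the radical makes the discriminant unwieldy. My plan there is to use $(a,b)$ as the variables, eliminate using $a^2=b^2\mp2b^{1/2}$ with $s=b^{1/2}$ so that everything becomes rational in $(a,s)$, and then show that $\Delta$ has constant sign on the intervals $s\in(2^{1/3},\sqrt{b^{\max}})$ for $\ell_3,\ell_4$ and $s\in(0,1/2)$ for $\ell_7,\ell_8$. I would do this by Sturm sequences or by factoring out a positive part. As a consistency check, continuity of the type determines the answer once one sample point per family is evaluated numerically: FFR near $\mathfrak f_3,\mathfrak f_4$, matching the EFF rank-$0$ points of Proposition~\ref{Prop:rank0}; EER near $\mathfrak f_1$, matching the EEE point. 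The remaining work is then to rule out any zero of $\Delta$ or $Q$ in the interior of each family.
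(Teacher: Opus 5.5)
Your plan is essentially the paper's proof. Both use the local $\S^1$ reduction and linearize the reduced vector fields at the relative equilibria. Both compute characteristic polynomials separately on the branches $a=0$ and $a^2=b^2\mp 2b^{1/2}$, and both get degeneracy at $\mathrm{hh}$ and $\mathfrak c_*$ from nilpotency or repeated eigenvalues of every pencil element. The one difference is in how the type is read off. The paper examines $DX_{\widehat H_\pm}=DX_{\widehat H_1}\pm DX_{\widehat H_2}$ individually: one real and one imaginary double pair gives FF, two imaginary pairs give EE. You instead test a generic pencil element through $\Delta=P^2-4Q$, which is slightly closer to the stated non-degeneracy definition.

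Your forecast on the branch $a=0$ needs correcting. Write $A_1+tA_2=\tfrac{1+t}{2}DX_{\widehat H_+}+\tfrac{1-t}{2}DX_{\widehat H_-}$; the two summands have squared eigenvalues $(b^3-4)/b^2$ and $(b^3-4)(1-4b)/b^2$. This gives $\Delta=(1-t^2)^2(b^3-4)^2(1-4b)/b^4$.

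\begin{itemize}
\item $\Delta$ changes sign only at $b=1/4$ and has a double zero at $b=2^{2/3}$. A sign change at $2^{2/3}$ would contradict your own conclusion that $\ell_1$ and $\ell_2$ are both FFR.
\item The degeneracy of $\mathfrak c_*$ comes from $P$, $Q$ and $\Delta$ all vanishing identically in $t$, not from a sign change.
\item The values $t=\pm1$ are non-generic on this branch, since $DX_{\widehat H_\pm}$ alone have doubled eigenvalues there.
\end{itemize}
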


We note here that in the STC system when a family is EER the corresponding critical values lie at the boundary of the image of the integral map $F$, while FFR families form threads in the interior of the image of $F$.

\begin{proof}
To study the stability of the rank-$1$ singularities we perform a local reduction of the system, as described in Section~\ref{sec:reduction-s1-action}.
The corresponding reduced functions for the STC system are
\begin{align*}
\widehat H_1 &= - \frac{1}{2} u_3 - 2 u_3 v_3
+ 2 (1-u_3^2)^{1/2} \bigl[ (k-u_3-v_3)^{1/2} \cos\theta_u 
- (1-v_3^2)^{1/2} \cos(\theta_u-\theta_v) \bigr], \\
\widehat H_2 &= \frac{1}{2} v_3 + 2 u_3 v_3
+ 2 (1-v_3^2)^{1/2} \bigl[ (k-u_3-v_3)^{1/2} \cos \theta_v 
+ (1-u_3^2)^{1/2} \cos(\theta_u-\theta_v) \bigr].
\end{align*}

The stability of the rank-$1$ singularities is determined by analyzing the eigenvalues of the derivatives $DX_{\widehat H_1}$, $DX_{\widehat H_2}$ evaluated at the singularity.
For reasons of computational convenience we consider instead of $X_{\widehat H_1}$ and $X_{\widehat H_2}$, their linear combinations
\[ X_{\widehat H_+} = X_{\widehat H_1} + X_{\widehat H_2}, \quad X_{\widehat H_-} = X_{\widehat H_1} - X_{\widehat H_2}, \]
and their derivatives $DX_{\widehat H_+}$, $DX_{\widehat H_-}$.
We report the results for each family of rank-$1$ singularities separately. 

\textbf{Families $\ell_1$, $\ell_2$, $\ell_5$, $\ell_6$.}
For these families, the characteristic polynomials for $DX_{\widehat H_+}$ and $DX_{\widehat H_-}$ are respectively
\[ P_1(r) = \frac{1}{b^4}(b^2 r^2 - (b^3-4))^2, \quad P_2(r) = \frac{1}{b^4}(b^2 r^2 - (b^3-4)(1-4b))^2. \]
The form of the characteristic polynomials implies that each matrix has at most two distinct eigenvalues with algebraic multiplicity $2$ and they are either both real $(\pm \alpha)$ or both imaginary $(\pm i \beta)$. 
If $-4 < b < 0$ ($\ell_6$), or if $0 < b < 1/4$ ($\ell_5$), then both $DX_{\widehat H_+}$ and $DX_{\widehat H_-}$ have imaginary eigenvalues and the singularity has type EER.
If $1/4 < b < 2^{2/3}$ ($\ell_1$) then $DX_{\widehat H_+}$ has imaginary eigenvalues and $DX_{\widehat H_-}$ has real eigenvalues. 
In this case, the singularity has type FFR.
If $2^{2/3} < b < 4$ ($\ell_2$), then $DX_{\widehat H_+}$ has real eigenvalues and $DX_{\widehat H_-}$ has imaginary eigenvalues. 
In this case, the singularity has again type FFR.

If $b=1/4$, then $DX_{\widehat H_+}$ has imaginary eigenvalues, while $DX_{\widehat H_-}$ has four $0$ eigenvalues; this degenerate point marks the transition from the EER family $\ell_5$ to the FFR family $\ell_1$.
Finally, if $b=2^{2/3}$, then both matrices have four $0$ eigenvalues; this is the most degenerate case corresponding to $\mathfrak c_*$.

\textbf{Families $\ell_3$, $\ell_4$.}
For these families, the characteristic polynomials for $DX_{\widehat H_+}$ and $DX_{\widehat H_-}$ are respectively
\begin{align*}
P_1(r) &= r^4 
- \frac{2}{b^{1/2}} (-b^{3/2}+2) (6b+1) r^2
+ \frac{2}{b} (-b^{3/2}+2) (1 + 12 b - 4 b^{3/2} (6b-1)), \\
P_2(r) &= r^4 
- \frac{2}{b^{1/2}} (-b^{3/2}+2) (32b^2 - 10b - 16b^{1/2} + 1) r^2 \\
& \qquad \qquad + \frac{2}{b} (-b^{3/2}+2) (4b-1)^2 (1 + 12 b - 4 b^{3/2} (6b-1)).
\end{align*}
For all $b \in (2^{2/3},b^{\max})$ the eigenvalues of both $DX_{\widehat H_+}$ and $DX_{\widehat H_-}$ are complex with non-zero real and imaginary parts, and the type of the singularity is FFR.

\textbf{Families $\ell_7$, $\ell_8$.}
For these families, the characteristic polynomials for $DX_{\widehat H_+}$ and $DX_{\widehat H_-}$ are respectively
\begin{align*}
P_1(r) &= r^4 
+ \frac{2}{b^{1/2}} (b^{3/2}+2) (6b+1) r^2
+ \frac{2}{b} (b^{3/2}+2) (1 + 12 b + 4 b^{3/2} (6b-1)), \\
P_2(r) &= r^4 
+ \frac{2}{b^{1/2}} (b^{3/2}+2) (32b^2 - 10b + 16b^{1/2} + 1) r^2 \\
& \qquad \qquad + \frac{2}{b} (b^{3/2}+2) (4b-1)^2 (1 + 12 b + 4 b^{3/2} (6b-1)).
\end{align*}
For all $b \in (0,1/4)$ the eigenvalues of both $DX_{\widehat H_+}$ and $DX_{\widehat H_-}$ are imaginary and the type of the singularity is EER.
\end{proof}

\subsection{Rank-2 singularities}
\label{sec:rank-2-singularities}

The rank-$2$ critical values of the general TC system are given in the following result whose proof is presented in Appendix~\ref{App.Rank2}. The rank-$2$ singularities for the STC system are depicted in Figure~\ref{fig:stcm-bd}.

\begin{proposition}\label{prop:rank-2-singularities}
The rank-$2$ critical values of the general Tavis-Cummings system on $K^{-1}(k)$ are parameterized by
\begin{subequations}
\begin{align}
\label{eq:rank-2:H1}
H_1 &= \frac{1}{\delta_2 - \delta_1} \left[ -\frac{k}{x^2} - \frac{C^2 g^2}{x^2} + \frac{C}{x^2 y} + g^2 \Bigl( \frac{y^2}{x^2} + 2 \frac{x}{y} - 1 \Bigr) \right],
\\
\label{eq:rank-2:H2}
H_2 &= -\frac{1}{\delta_2 - \delta_1} \left[-\frac{k}{y^2} - \frac{C^2 g^2}{y^2} + \frac{C}{x y^2} + g^2 \Bigl( \frac{x^2}{y^2} + 2 \frac{y}{x} - 1 \Bigr) \right], 
\end{align}
\end{subequations}
where $x,y \in \R$ satisfy $x - y = (\delta_2 - \delta_1) x y$, and
\begin{align*}
C = \frac{1}{2g^2} (\delta_1-\omega) - \frac{1}{2g^2x}
= \frac{1}{2g^2} (\delta_2-\omega) - \frac{1}{2g^2y}.
\end{align*}
The parameter $x$ takes only those values for which the cubic polynomial $P(x,y;u_3) = a_3(x,y) u_3^3 + a_2(x,y)u_3^2 + a_1(x,y) u_3 + a_0(x,y)$ has two roots in $[-1,1]$. Here,
\begin{align*}
a_3(x,y) & = -4 g^2 x^2 y (x-y) \\
a_2(x,y) & = -4 C^2 g^4 x^2 y^2+8 C g^2 x^2 y-4 C g^2 x y^2-4 g^2 K x^2 y^2
-(x-y)^2 \\
a_1(x,y) & = 2 (2 C^3 g^4 x y^2-3 C^2 g^2 x y+C^2 g^2 y^2+2 C g^4 x^3
   y^2-2 C g^4 x y^4+2 C g^2 K x y^2 \\
& \qquad +C x-C y+g^2 x^3 y-g^2 x^2 y^2+g^2 x y^3-g^2 y^4-K x y+K y^2 ) \\
a_0(x,y) & =  -C^4 g^4 y^2+2 C^3 g^2 y-2 C^2 g^4 x^2 y^2+2 C^2 g^4 y^4-2 C^2 g^2 K y^2 - C^2 -2 C g^2 x^2 y \\
& \qquad -2 C g^2 y^3+2 C K y-g^4 x^4 y^2+2 g^4 x^2 y^4-g^4 y^6+2 g^2 K x^2 y^2+2 g^2 K y^4 -K^2 y^2.
\end{align*}
\end{proposition}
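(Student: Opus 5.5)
The plan is to characterize rank-$2$ points by a linear dependence among the three vector fields and then eliminate the phase-space variables. At a rank-$2$ point $X_K\neq 0$ (otherwise we are at a fixed point, which has rank $0$). So there are real $\lambda,\mu$, not both zero, and $\nu$ with
\[
\lambda X_{H_1}+\mu X_{H_2}=\nu X_K .
\]
The dependence should be normalized so that $x$ and $y$ appear as inverse coefficients: write the condition as $\frac{1}{x}X_{H_1}-\frac{1}{y}X_{H_2}\in\operatorname{span}\{X_K\}$ up to scaling. The $\S^1$ symmetry then lets us pass to the local reduced system of Section~\ref{sec:reduction-s1-action}. On $K^{-1}(k)/\S^1$ the condition becomes the vanishing of $d(\lambda\widehat H_1+\mu\widehat H_2)$ at a point where $d\widehat H_1$ and $d\widehat H_2$ are independent.

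Next I would write out the component equations using Eqs.~\eqref{eq:vf-X_K}, \eqref{eq:vf-X_H1}, \eqref{eq:vf-X_H2}.

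The $z$-component is
\[
-\sqrt2 g(\lambda u+\mu v)=\nu i z ,
\]
which makes $z$ a linear combination of $u$ and $v$.

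Substituting this into the $u$- and $v$-components should give equations that are linear in $u$ and $v$ with coefficients depending on $u_3$, $v_3$. Compatibility of these should force a relation between the multipliers, which becomes $x-y=(\delta_2-\delta_1)xy$ after the substitution $\lambda\propto 1/x$, $\mu\propto -1/y$. It should also yield the auxiliary constant $C$ as a common value. The two expressions for $C$ agree exactly because of $x-y=(\delta_2-\delta_1)xy$: indeed $\frac1x-\frac1y=\delta_1-\delta_2$, which is the consistency check to carry out first.

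The $u_3$- and $v_3$-components, involving $\Re(u\bar z)$, $\Re(v\bar z)$ and $\Im(u\bar v)$, then give the remaining constraints.

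With these relations I would express $H_1$ and $H_2$ on $K^{-1}(k)$. The aim is to write the invariant combinations $\Re(u\bar v)$, $\Im(u\bar z)$, $\Im(v\bar z)$ in terms of $u_3$, $v_3$, $k$, $x$, $y$, $C$. Here the relation $z\bar z=2(k-u_3-v_3)$ and the sphere constraints $|u|^2=1-u_3^2$, $|v|^2=1-v_3^2$ are used. If the parameterization is right, the dependence on $u_3$ and $v_3$ cancels in $H_1$ and $H_2$. This is expected, since a critical value along a one-parameter family of the reduced system is constant on the critical circle, and it produces Eqs.~\eqref{eq:rank-2:H1}--\eqref{eq:rank-2:H2}. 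I would verify the cancellation symbolically.

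Finally, existence of an actual phase-space point has to be imposed. Eliminating $v_3$ and the phases from the remaining equation, which is a Cauchy--Schwarz/realness condition on the complex variables such as $|\Re(u\bar v)|\le|u||v|$, should give the cubic $P(x,y;u_3)=0$. The requirement for a genuine rank-$2$ point is then that the admissible $u_3\in[-1,1]$ exist. For a critical torus, an interval bounded by two roots of $P$ in $[-1,1]$ must be available, as in the reduction of a one-dimensional system to a polynomial whose positivity region gives the motion.

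The main obstacle is this last elimination. Producing the explicit coefficients $a_0,\dots,a_3$, and justifying that ``two roots in $[-1,1]$'' is exactly the right admissibility condition, requires a careful resultant computation (with computer algebra). It also requires an argument that the circle $\{P\ge0\}$ is nonempty and avoids the fixed points.
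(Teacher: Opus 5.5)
Your overall route matches the paper's proof. It uses a linear dependence among $X_{H_1},X_{H_2},X_K$, then the $z$-component to write $z$ in terms of $u,v$, and the other components to get the hyperbola relation and $C$. It then eliminates $U=u\bar v+\bar u v$ via the $K$-equation so that $u_3$ drops out of $H_1,H_2$, and finishes with a realness condition that produces $P$. However, the normalization you propose is wrong and does not produce the statement's parameters.

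Write the dependence as $\lambda X_{H_1}+\mu X_{H_2}=\nu X_K$. The $z$-component gives $z=\sqrt2\,i g(\lambda u+\mu v)/\nu$. The $u_3$- (or $v_3$-) component then reduces to
\[ i g^2 (u\bar v-\bar u v)\Bigl[\frac{\lambda-\mu}{\delta_2-\delta_1}-\frac{\lambda\mu}{\nu}\Bigr]=0 , \]
so, for $\Im(u\bar v)\neq 0$, the multipliers satisfy $\nu(\lambda-\mu)=(\delta_2-\delta_1)\lambda\mu$. The statement's $x,y$ are $x=\lambda/\nu$ and $y=\mu/\nu$; that is, the paper imposes $X_K=xX_{H_1}+yX_{H_2}$. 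At a rank-$2$ point the dependence is unique up to scale. Your choice $(\lambda,\mu)\propto(1/x,-1/y)$ could agree with it only if $1/x^2=-1/y^2$, which is impossible. So your claim that this substitution yields $x-y=(\delta_2-\delta_1)xy$ is false.

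The $1/x$ and $1/y$ in $C$ do not come from inverted multipliers. With $X_K=xX_{H_1}+yX_{H_2}$, the $u$-component becomes $x(\delta_1-\omega)-2g^2x(xu_3+yv_3)=1$, and dividing by $x$ gives the first expression for $C$; the $v$-component works the same way. Note that $C$ is literally $xu_3+yv_3$, which is exactly the linear relation you need to eliminate $v_3$. (Also, in your reduced reformulation, $d\widehat H_1$ and $d\widehat H_2$ are dependent at these points, not independent; you mean not both zero.)

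The admissibility step has a second gap. The condition is the inequality $P\ge 0$, not $P=0$; the roots of $P$ are the turning points where $\Im(u\bar v)=0$. What makes ``two roots in $[-1,1]$'' the exact criterion is the paper's computation $P(x,y;\pm1)=-(\cdots)^2\le 0$. A cubic that is nonpositive at both endpoints is positive somewhere in $[-1,1]$ if and only if it has two roots there. Your proposal treats this as an obstacle and handles it only by analogy.

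Your own Cauchy--Schwarz formulation closes this gap directly and needs no resultant computation. With $U$ solved from the $K$-equation and $v_3=(C-xu_3)/y$, one has
\[ g^4x^2y^4\bigl[4(1-u_3^2)(1-v_3^2)-U^2\bigr]=P(x,y;u_3), \]
since the quartic terms cancel. Equivalently, $P=4g^4x^2y^4\,\Im(u\bar v)^2$, which agrees with the paper's $a_1=\Re(u/z)=\sqrt2\,g\,y\,\Im(u\bar v)/|z|^2$. At $u_3=\pm1$ the first term vanishes, so $P(x,y;\pm1)=-g^4x^2y^4U^2\le 0$.
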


\section{Hamiltonian monodromy}
\label{sec:global-topology}

We consider the Hamiltonian monodromy of the STC system.
First, we discuss the fundamental definitions and properties of Hamiltonian monodromy for 3-DOF integrable Hamiltonian systems. 
Then, we give the Hamiltonian monodromy matrices for the special Tavis-Cummings system based on numerical calculations.
Finally, we discuss the implications of these results for the reduced 2-DOF Hamiltonian systems on $K^{-1}(k)/\S^1$.

\subsection{Hamiltonian monodromy in 3-DOF integrable Hamiltonian systems}
\label{sec:monodromy-3dof}

Consider a 3-DOF integrable Hamiltonian system $F : M \to \R^3$, denote by $R$ the set of regular values of $F$, and assume for simplicity that each regular fiber $F^{-1}(r)$, $r \in R$, is connected and compact and thus $F^{-1}(r) \simeq \T^3$. 
Recall that, for $r_0 \in R$, the \emph{monodromy map}
\[ \mu : \pi_1(R,r_0) \to \group{Aut}(F^{-1}(r_0)), \] 
maps each homotopy class $\gamma$ with base point $r_0$, to an element $\mu(\gamma)$ in the group $\group{Aut}(F^{-1}(r_0))$ of orientation preserving automorphisms of the fiber $F^{-1}(r_0)$.  
The latter group is isomorphic to $\SL(3,\Z)$.

The element $\mu(\gamma) \in \SL(3,\Z)$ is defined via parallel transport of the period lattice of $r_0$ along a representative of the homotopy class $\gamma$.
We recall here the construction.
Given a connected regular fiber $F^{-1}(r) \simeq \T^3$, the \emph{period lattice} $\Lambda_r$ is defined as
\[ \Lambda_r = \bigl\{ \vv T = (T_1,T_2,T_3) \in \R^3 
: \varphi_{H_1}^{T_1} \circ \varphi_{H_2}^{T_2} \circ \varphi_K^{T_3} = \mathrm{id}_{F^{-1}(r)} \bigr\}.  \]
The period lattice $\Lambda_r$ is isomorphic to $\Z^3$. 
The \emph{period lattice bundle} is the disjoint union
\[ \Lambda = \bigsqcup_{r \in R} \Lambda_r. \]
Over each simply connected open subset $U \subseteq R$, the restriction of the period lattice bundle $\Lambda|_U$ is a trivial bundle, that is,
\[ \Lambda|_U = \bigsqcup_{r \in U} \Lambda_r \simeq U \times \Z^3. \]
Therefore, there is a well defined notion of \emph{parallel transport} of elements of the period lattice bundle along paths in $R$.

\begin{remark}\label{rem:local-action}
If $\vv T = (T_1,T_2,T_3) \in \Lambda_r$, then the vector field $X_{\vv T} = T_1 X_{H_1} + T_2 X_{H_2} + T_3 X_K$
generates a period-$1$ flow $\varphi$ on $F^{-1}(r)$ defined by $\varphi^s = \varphi_{H_1}^{sT_1} \circ \varphi_{H_2}^{sT_2} \circ \varphi_K^{sT_3}$.
Moreover, given a smooth section $\vv T: U \to \Lambda|_U : r \mapsto \vv T(r)$, the vector field $X_{\vv T(r)}$ is Hamiltonian with generating function $2 \pi I$, where $I$ is a (local) action coordinate for the system.
\end{remark}

Given a loop $\gamma$, based at $r_0 \in R$, the parallel transport of a period vector $\vv T \in \Lambda_{r_0}$ along $\gamma$ results to a possibly different period vector $\vv T' \in  \Lambda_{r_0}$, and thus induces a $\Z$ linear map $\mu(\gamma) : \Lambda_{r_0} \to \Lambda_{r_0}$.

Given a basis $\vv T_1, \vv T_2, \vv T_3$ of $\Lambda_{r_0}$, the map $\mu(\gamma)$ is fully described by its action on the given basis. In particular, it can be described through a matrix $\widetilde M(\gamma) \in \SL(3,\Z)$ such that
\[ \vv T' = \mu(\gamma)(\vv T) = \widetilde M(\gamma) \vv T. \]
Suppose that $\vv T_i' = \sum_j m_{ij} \vv T_j$, where $m_{ij}$ are the elements of a matrix $M(\gamma)$.

\begin{lemma}
$M(\gamma) = \widetilde M(\gamma)^t$.
\end{lemma}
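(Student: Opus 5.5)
The lemma is a bookkeeping statement about two ways of encoding the same linear map $\mu(\gamma)$ in the basis $\vv T_1,\vv T_2,\vv T_3$. The plan is to write both conventions in coordinates and compare entries. No analytic input is needed beyond the fact, already established, that $\mu(\gamma)$ is a $\Z$-linear automorphism of $\Lambda_{r_0}$.

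\textbf{Step 1: the matrix $\widetilde M(\gamma)$.} Write an arbitrary period vector as $\vv T=\sum_j c_j \vv T_j$ and collect its coordinates in the column vector $c=(c_1,c_2,c_3)^t\in\Z^3$. The relation $\vv T'=\widetilde M(\gamma)\vv T$ is read in these coordinates: the coordinate vector of $\vv T'$ is $c'=\widetilde M(\gamma)c$. Equivalently,
\[
\mu(\gamma)\Bigl(\sum_j c_j\vv T_j\Bigr)=\sum_i \Bigl(\sum_j \widetilde m_{ij} c_j\Bigr)\vv T_i ,
\]
where $\widetilde m_{ij}$ are the entries of $\widetilde M(\gamma)$.

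\textbf{Step 2: the matrix $M(\gamma)$.} Apply Step 1 to the basis vector $\vv T_j$, whose coordinate vector is $e_j$. Then $\vv T_j'=\mu(\gamma)(\vv T_j)=\sum_i \widetilde m_{ij}\vv T_i$, so the $j$th column of $\widetilde M(\gamma)$ is the coordinate vector of $\vv T_j'$. By definition, $\vv T_j'=\sum_i m_{ji}\vv T_i$.

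\textbf{Step 3: compare.} Since $\vv T_1,\vv T_2,\vv T_3$ is a basis, coordinates are unique, so $m_{ji}=\widetilde m_{ij}$ for all $i,j$, which is $M(\gamma)=\widetilde M(\gamma)^t$. Linearity of parallel transport in the trivialization $\Lambda|_U\simeq U\times\Z^3$ justifies using the same matrix for every $\vv T$.

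\textbf{Main difficulty.} The only real issue is the convention: $\widetilde M(\gamma)$ acts on coordinate column vectors, whereas $M(\gamma)$ records the images of the basis vectors row by row. Once this is stated precisely, the proof is immediate. I would also note the consequence that $M(\gamma)\in\SL(3,\Z)$, since transposition preserves the determinant. Finally, $\gamma\mapsto M(\gamma)$ reverses the order of composition relative to $\gamma\mapsto\widetilde M(\gamma)$, because $(AB)^t=B^tA^t$. This matters when the monodromy matrices are later multiplied along concatenated loops.
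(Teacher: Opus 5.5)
Your proof is correct and is essentially the paper's argument. The paper expands a general $\vv T=\sum_j a_j\vv T_j$, uses linearity of $\mu(\gamma)$ to write $\vv T'=\sum_i\bigl(\sum_j m_{ji}a_j\bigr)\vv T_i$, and compares this with the coordinate action of $\widetilde M(\gamma)$. You evaluate directly on the basis vectors $\vv T_j$, which is the same coordinate comparison.
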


\begin{proof}
Write $\vv T = \sum_j a_j \vv T_j$ and $\vv T' = \sum_i b_i \vv T_i$. 
The linearity of $\mu(\gamma)$ implies $\vv T' = \sum_j a_j \vv T_j'$,
and we find $\sum_j \sum_i a_j m_{ji} \vv T_i = \sum_i b_i \vv T_i$, 
implying $b_i = \sum_j m_{ji} a_j$. 
However, by definition, $b_i = \sum_j [M(\gamma)]_{ij} a_j$, implying that the matrix $M(\gamma)$ relating the two bases is the transpose of the matrix $\widetilde M(\gamma)$ expressing the automorphism $\mu(\gamma)$.
\end{proof}

Given two loops (or homotopy classes) $\gamma$, $\gamma'$ based at $r_0 \in R$ we denote by $\gamma \cdot \gamma'$ the loop (or homotopy class) defined by first traversing $\gamma$ and then $\gamma'$. 
Then the map $\gamma \mapsto M(\gamma)$ induces a homomorphism from $\pi_1(R,r_0)$ to $\SL(3,\Z)$ with the property that 
\[ M(\gamma \cdot \gamma') = M(\gamma) M(\gamma'). \]

\begin{lemma}
Suppose that $\{\vv T_i\}_{i=1,2,3}$ and $\{\vv S_i\}_{i=1,2,3}$ are bases of $\Lambda_{r_0}$ satisfying $\vv S_i = \sum_j A_{ij} \vv T_j$ with $A = (A_{ij}) \in \SL(3,\Z)$, and denote by $M_{\vv T}(\gamma)$ and $M_{\vv S}(\gamma)$ the monodromy matrices representing $\mu(\gamma)$ in each of the corresponding bases. Then
\[ M_{\vv S}(\gamma) = A M_{\vv T}(\gamma) A^{-1}. \]
\end{lemma}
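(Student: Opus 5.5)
The plan is to follow the defining relation of $M(\gamma)$ from the previous lemma, namely $\vv T_i' = \sum_j m_{ij}\vv T_j$, where $\vv T_i' = \mu(\gamma)(\vv T_i)$, and simply re-express everything in the basis $\{\vv S_i\}$. By definition $M_{\vv T}(\gamma)$ is the matrix $m$ with $\mu(\gamma)(\vv T_i) = \sum_j [M_{\vv T}(\gamma)]_{ij}\vv T_j$. Likewise $M_{\vv S}(\gamma)$ is determined by $\mu(\gamma)(\vv S_i) = \sum_j [M_{\vv S}(\gamma)]_{ij}\vv S_j$. Since the $\vv S_i$ form a basis of $\Lambda_{r_0}$, it suffices to compute $\mu(\gamma)(\vv S_i)$ in the $\vv S$ basis and read off the coefficients.

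First, because $A\in\SL(3,\Z)$, its inverse $A^{-1}$ is also an integer matrix. Hence $\vv T_j = \sum_k (A^{-1})_{jk}\vv S_k$, and this is a legitimate change of basis of the lattice. Next, using the $\Z$-linearity of $\mu(\gamma)$, I compute
\[
\mu(\gamma)(\vv S_i) = \sum_j A_{ij}\,\mu(\gamma)(\vv T_j) = \sum_j\sum_l A_{ij}[M_{\vv T}(\gamma)]_{jl}\vv T_l = \sum_k \Bigl(\sum_{j,l} A_{ij}[M_{\vv T}(\gamma)]_{jl}(A^{-1})_{lk}\Bigr)\vv S_k .
\]
The coefficient in parentheses is exactly $[A M_{\vv T}(\gamma) A^{-1}]_{ik}$. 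Uniqueness of coordinates in the basis $\{\vv S_k\}$ then gives $M_{\vv S}(\gamma) = A M_{\vv T}(\gamma)A^{-1}$.

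There is no real obstacle here. The only point requiring care is bookkeeping of the index convention: the matrices act on basis vectors through rows, $\vv S_i=\sum_j A_{ij}\vv T_j$, rather than on coordinate vectors. This convention is what yields $A M A^{-1}$ rather than $A^{-1} M A$, so I will check it against the convention fixed in the preceding lemma, i.e. $M$ versus $\widetilde M = M^t$. Parallel transport plays no role beyond supplying the linear automorphism $\mu(\gamma)$ of $\Lambda_{r_0}$, which is independent of the choice of basis.
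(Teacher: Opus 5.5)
Your proposal is correct and matches the paper's proof essentially step for step. Like the paper, you expand $\mu(\gamma)(\vv S_i)$ by linearity, insert $M_{\vv T}(\gamma)$, rewrite $\vv T_k=\sum_l (A^{-1})_{kl}\vv S_l$, and read off $A M_{\vv T}(\gamma) A^{-1}$; your row convention $\mu(\gamma)(\vv T_i)=\sum_j m_{ij}\vv T_j$ is also the one the paper fixes before the preceding lemma.
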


\begin{proof}
Write $M_{\vv T}(\gamma) = (M_{ij})$ and $M_{\vv S}(\gamma) = (N_{ij})$.
Then we have
\[ \mu(\gamma)(\vv S_i) = \sum_j A_{ij} \mu(\gamma)(\vv T_j)
= \sum_j \sum_k A_{ij} M_{jk} \vv T_k 
= \sum_l \Bigl( \sum_j \sum_k  A_{ij} M_{jk}  (A^{-1})_{kl} \Bigr) \vv S_l. \]
Since $\mu(\gamma)(\vv s_i) = \sum_j N_{il} \vv S_l$, we obtain
\[ N_{il} = \sum_j \sum_k  A_{ij} M_{jk} (A^{-1})_{kl}, \]
that is, $M_{\vv S}(\gamma) = A M_{\vv T}(\gamma) A^{-1}$.
\end{proof}

\subsection{Hamiltonian monodromy in the special Tavis-Cummings system}
\label{sec:monodromy-tavis-cummings}

We now specialize the discussion above to the special Tavis-Cummings system.

\begin{proposition}
\label{prop:fund-group-tc}
For each $r_0 \in R$, where $R$ is the set of regular values of the special Tavis-Cummings system, the fundamental group $\pi_1(R,r_0)$ is isomorphic to  the free group of three generators $\set{F}_3 = \Z * \Z * \Z$.
\end{proposition}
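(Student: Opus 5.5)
We compute $\pi_1(R,r_0)$ by describing $R$ as the image of $F$ with its critical values removed, and then retracting it onto a simple model. The image $F(M)$ is a closed subset of $\R^3$. Its boundary is the closure of the rank-$2$ critical values together with the EER threads $\ell_5,\dots,\ell_8$ and the rank-$0$ values $\mathfrak f_j$. First we argue that the image is topologically a closed half-space: it is contractible, homeomorphic to $[0,\infty)\times\R^2$. We check this slice by slice. For each $k$ the set $F(M)\cap\{K=k\}$ is the image of the reduced 2-DOF energy-momentum map on the compact connected reduced space $K^{-1}(k)/\S^1$ (Appendix~\ref{sec:algebraic-reduction}). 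It is a compact region bounded by rank-$2$ and EER critical values, and by the boundary behaviour of the rank-$2$ family in Proposition~\ref{prop:rank-2-singularities} it is a topological disk (or a point at $k=-2$). Letting $k$ vary, the image is a union of disks over $k\ge -2$, which is a closed half-space up to homeomorphism. Hence the interior $\operatorname{int}F(M)$ is an open $3$-ball.

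Next we identify the critical values in the interior. By Propositions~\ref{prop:rank-1-parameterization-stcm} and~\ref{thm:rank-1-stability}, the only interior critical values are the FFR threads $\ell_1,\dots,\ell_4$ together with $\mathfrak c_*$. All EER values and all rank-$2$ values lie on the boundary. Since the threads are rank-$1$ families, the union $\Gamma=\ell_1\cup\ell_2\cup\ell_3\cup\ell_4\cup\{\mathfrak c_*\}$ is a properly embedded graph in the ball. Its single interior vertex is $\mathfrak c_*$, of valence four. Each edge ends on the boundary: $\ell_2,\ell_3,\ell_4$ end at $\mathfrak f_2,\mathfrak f_3,\mathfrak f_4$, and $\ell_1$ ends at the Hopf point with $K=16.0625$. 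The ends are distinct points, because the explicit formulas \eqref{eq:rank-1-cv-special} give distinct endpoints. We also need the threads to be embedded and pairwise disjoint away from $\mathfrak c_*$; we check this from the parameterization, for instance by noting that $k^c$ and $h_1^c-h_2^c$ separate them. Then $R=\operatorname{int}F(M)\setminus\Gamma$, and it remains to show that the threads are unknotted.

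The topological step uses a standard model. We claim the pair (ball, $\Gamma$) is homeomorphic to the pair consisting of a half-space and a tree made of four straight rays in $\R^3$ from a point $p$, each going to infinity or to the boundary. The complement of such a tame cone over four boundary points is homotopy equivalent to a sphere minus four points, that is, to a wedge of three circles. Indeed, the complement of a cone from an interior point over $m$ boundary points deformation retracts, by radial projection from the cone point, onto the boundary sphere of a small ball around $p$ with the $m$ points removed. This gives $\pi_1\cong\set F_{m-1}=\set F_3$.

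To apply this we must show that $\Gamma$ is tamely, unknottedly embedded as such a cone. We do this with the height function $K$. From the parameterization, $K$ is monotone along each thread, so $\ell_1$ goes upward from $\mathfrak c_*$ (as $b$ decreases toward $1/4$) and $\ell_2,\ell_3,\ell_4$ go downward toward $K=2,0,0$. Each level $K=k\ne k^*$ meets $\Gamma$ in finitely many points: one point for $k>k^*$, and at most three points for $k<k^*$ (see the slices in Figure~\ref{fig:stcm-bd}). Moving these points continuously to a standard position, level by level, gives an isotopy of $\Gamma$ onto a straight cone. Near $\mathfrak c_*$ the local structure is conical, as confirmed by the local model of Theorem~\ref{Th:local}.

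The main obstacle is this taming and unknotting argument for $\Gamma$. In particular, we must check that the threads $\ell_3$ and $\ell_4$, which meet the slice $K=0$ at $\mathfrak f_3$ and $\mathfrak f_4$, never cross $\ell_2$ in the projection. We must also handle the half-space geometry near the rank-$0$ endpoints. We would first attempt a direct computation of the slices' combinatorics from \eqref{eq:rank-1-cv-special}. Finally, the base point is irrelevant because $R$ is path connected, which also follows from the model.
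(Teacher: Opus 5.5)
Your route is the paper's. The paper's proof simply asserts, from the computed bifurcation diagram, that $R$ is homeomorphic to $\R^3$ with two crossing lines removed, i.e.\ an open ball minus a cone on four boundary points. It then retracts this onto a four-punctured sphere and onto a wedge of three circles. Your final retraction step is identical to this. Your half-space and taming arguments are additional justification of the homeomorphism, which the paper does not attempt.

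That extra step, however, rests on a wrong picture of the threads and needs correcting.
\begin{itemize}
\item On $a=0$ one has $k^c=\tfrac b2+b^{-2}-\tfrac1{16}$, and $\tfrac{dk^c}{db}=\tfrac12-2b^{-3}$ vanishes exactly at $b=2^{2/3}$. So $\mathfrak c_*$ is the minimum of $K$ along $\ell_1\cup\ell_2$, and both $\ell_1$ and $\ell_2$ rise from $\mathfrak c_*$.
\item In particular $\ell_2$ ends at $\mathfrak f_2$, at $K=2>k^*\approx 1.128$, so it cannot go ``downward toward $K=2$''. Only $\ell_3$ and $\ell_4$ descend, to $K=0$.
\item Consequently a level $K=k$ meets $\Gamma$ in two points for $0<k<k^*$ (on $\ell_3,\ell_4$), two for $k^*<k<2$ (on $\ell_1,\ell_2$), one for $2<k<16.0625$, and none otherwise. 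Your count of ``one above $k^*$, at most three below'' is wrong.
\end{itemize}

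The obstacle you single out, $\ell_3,\ell_4$ crossing $\ell_2$, is therefore moot, since these threads occupy disjoint $K$-ranges. What must be checked instead is that $\ell_1$ and $\ell_2$ stay distinct on common levels. This holds: both lie in the plane $h_1+h_2=0$, and a short computation shows $b\mapsto(k^c,h_1^c)$ is injective on $a=0$, with a cusp at $b=2^{2/3}$.

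With the corrected picture your level-by-level argument does go through:
\begin{itemize}
\item $K$ is monotone on each thread.
\item Monotone strands in a product of disks form a braid, which a level-preserving homeomorphism straightens.
\item Near $\mathfrak c_*$ the discriminant $\varepsilon^2=-\tfrac{4}{27}\kappa^3$ consists of two cusps, one opening upward and one downward. This is topologically a cone on four points.
\end{itemize}
You would still need local tameness at the endpoints $\mathfrak f_2,\mathfrak f_3,\mathfrak f_4$ and at the Hopf point $\mathrm{hh}$, and the fact that every slice is a disk. The paper also takes both of these from its figures rather than proving them.
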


\begin{proof}
The set $R$ is homeomorphic to $\R^3 \setminus (\{x=0\} \cup \{y=0\})$ and the latter deformation retracts to the unit sphere with four punctures, denoted by $S$. In turn, $S$ deformation retracts to the open unit disk with three punctures, and that to the wedge sum of three circles, the fundamental group of which is isomorphic to $\set{F}_3$ \cite[Example~1.21]{Hatcher2002}.
\end{proof}

Since $R$ is path-connected we can write $\pi_1(R)$ instead of $\pi_1(R,r_0)$. 
We construct representatives of four non-trivial homotopy classes of $\pi_1(R)$ based at $r_0$.
First, choose a value $K = k_1 = 1.8$ and consider the regular value $r_0 = (2, 1, k_1)$ of the integral map $F$. 
Since $K_* < k_1 < 2$, the plane $K = k_1$ contains two focus-focus-regular singularities $\ff_1 \approx (-0.578466, 0.578466, k_1)$ and $\ff_2 \approx (-1.74301, 1.74301, k_1)$, see Figure~\ref{fig:fundamental-group} (left).
Then, choose a value $K = k_2 = 0.3$. 
Since $0 < k_2 < K_*$, the plane $K = k_2$ contains the focus-focus-regular singularities $\ff_3 \approx (1.79368, -1.16404, k_2)$ and $\ff_4 \approx (1.16404, -1.79368, k_2)$, see Figure~\ref{fig:fundamental-group} (center).
Then, define loops $\gamma_j$, $j=1,2,3,4$ in the following way. Each $\gamma_j$ consists of a straight line joining $r_0$ to the point $\ff_j + (L,0,0)$, followed by a circular path of radius $L$ centered at $\ff_j$ and parameterized by $\ff_j + (L\cos(2\pi s), L\sin(2\pi s), 0)$, $0 \le s \le 1$, and finally followed by a straight line back to $r_0$. Notice that $\gamma_1$, $\gamma_2$ lie on the plane $K = k_1$.
The loops $\gamma_j$, $j=1,2,3,4$ are shown in Figure~\ref{fig:fundamental-group}. 

The corresponding homotopy classes satisfy the relation
\[ \gamma_2 \cdot \gamma_1 = \gamma_3 \cdot \gamma_4. \]
Therefore, any three of these four homotopy classes can be chosen as generators of $\pi_1(R)$.
Moreover, notice that $\gamma_1 \cdot \gamma_2 \ne \gamma_2 \cdot \gamma_1$, $\gamma_3 \cdot \gamma_4 \ne \gamma_4 \cdot \gamma_3$, and $\gamma_2 \cdot \gamma_1 \ne \gamma_4 \cdot \gamma_3$.

\begin{figure}
\centering
\includegraphics[width=5cm]{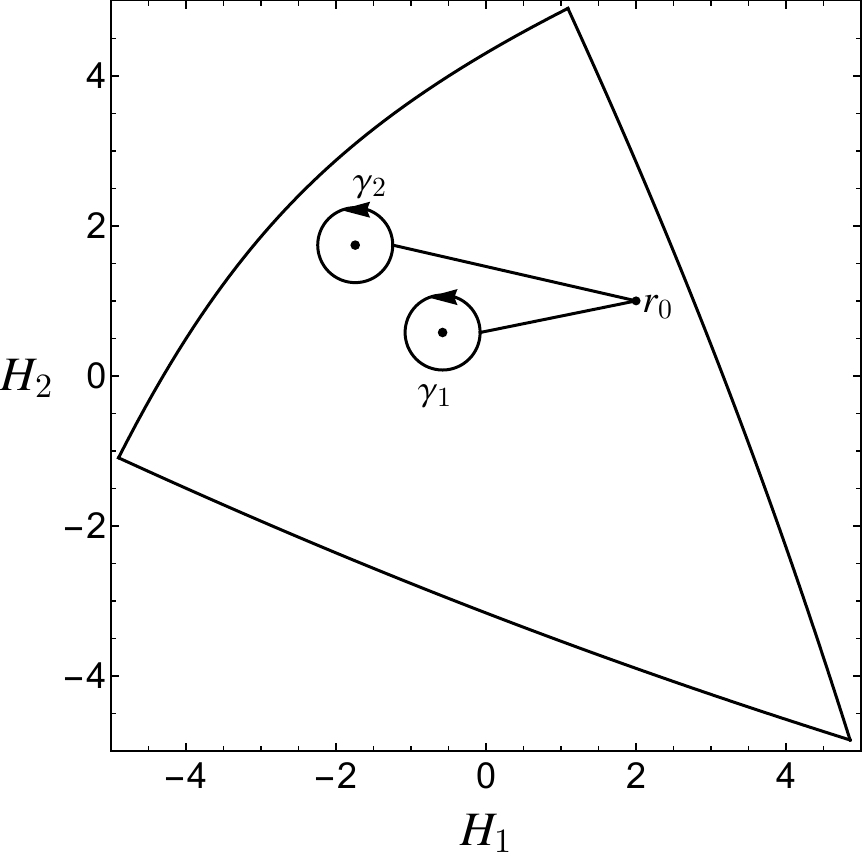}\hfill
\includegraphics[width=5cm]{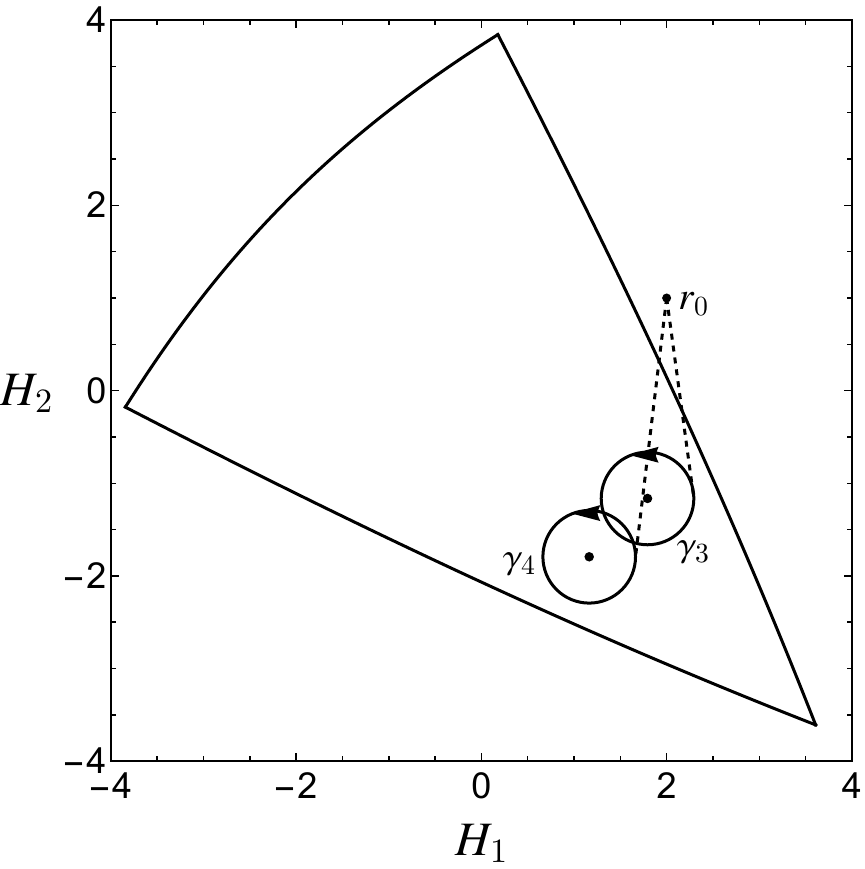}\hfill
\includegraphics[width=5cm]{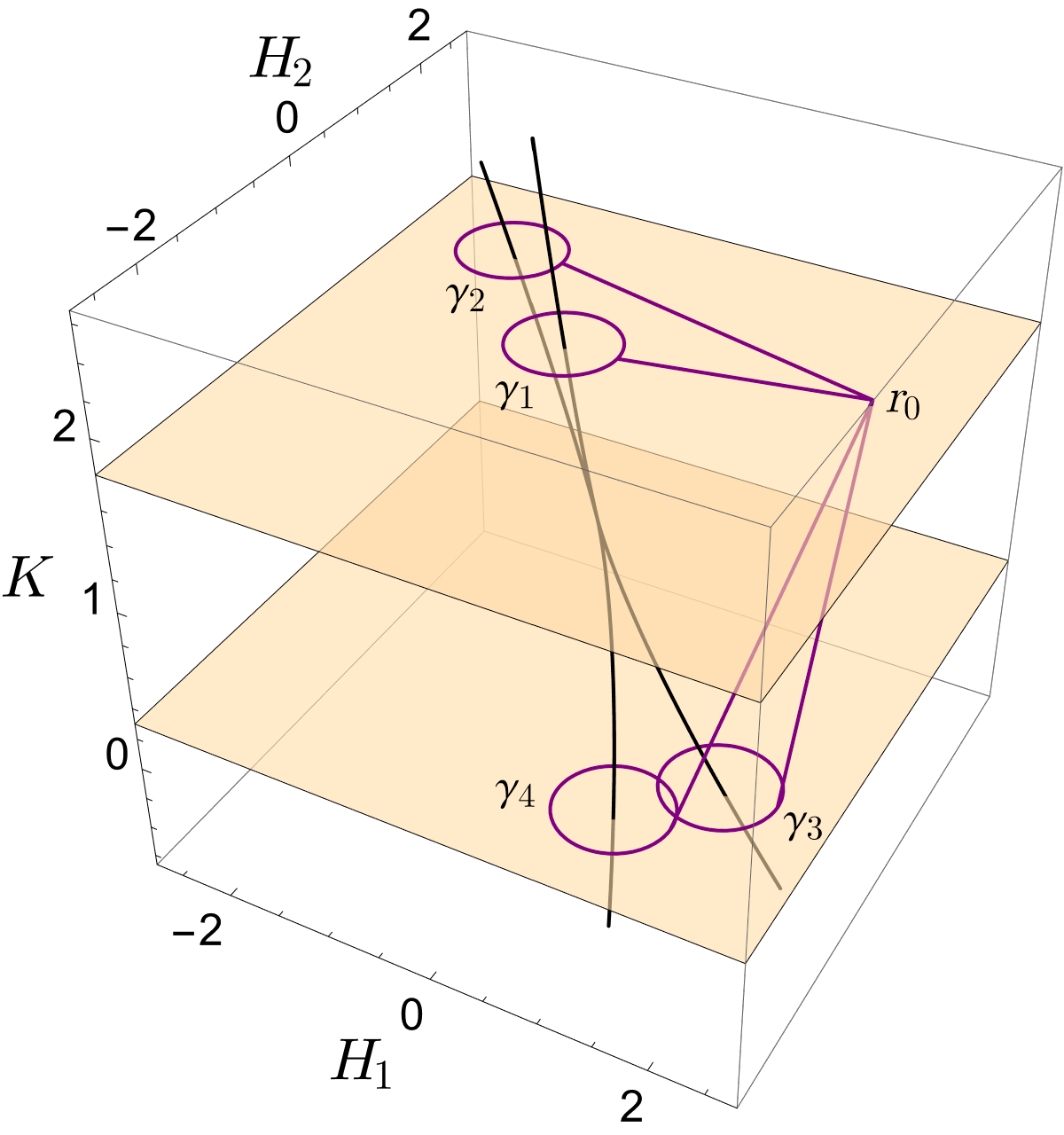}
\caption{Homotopically non-trivial loops $\gamma_1$, $\gamma_2$, $\gamma_3$, $\gamma_4$ in the set of regular values, based at the point $r_0 = (2,1,1.8)$. The left and center panels show the planes $K=1.8$ and $K=1.3$ respectively. In the center panel, the point $r_0=(2,1,1.8)$ and the dashed straight lines lie outside the plane $K=0.3$. The right panel shows the loops in the three-dimensional space $(H_1,H_2,K)$.
}
\label{fig:fundamental-group}
\end{figure}

To compute the monodromy matrices $M(\gamma_j)$, $j=1,2,3,4$ for the four loops defined above we use the numerical procedure detailed in Appendix~\ref{sec:numerical-computation}.
In short, we numerically compute a basis $\vv T_1 = (0, 0, 2\pi)$, $\vv T_2$, $\vv T_3$ of the period lattice $\Lambda_{r_0}$ on the fiber $F^{-1}(r_0)$ and then we numerically continue this period lattice basis along a loop $\gamma$ back to $r_0$.
The numerical continuation along $\gamma$ produces a new period lattice basis $\vv T_1' = (0, 0, 2\pi)$, $\vv T_2'$, $\vv T_3'$, and the corresponding Hamiltonian monodromy matrix is then given by
\[ M = 
\mathcal{B}(\vv{T}_1', \vv{T}_2', \vv{T}_3') \,
\mathcal{B}(\vv{T}_1, \vv{T}_2, \vv{T}_3)^{-1}, \]
where $\mathcal{B}(\vv T_1, \vv T_2, \vv T_3)$ is the $3 \times 3$ matrix with rows given by the (row) vectors $\vv T_1$, $\vv T_2$, and $\vv T_3$.

Using this approach we computed the monodromy matrices for the loops $\gamma_j$, $j=1,\dots,4$ as
\begin{gather*}
M(\gamma_1) = \begin{bmatrix}
1 & 0 & 0 \\
0 & 1 & 1 \\
0 & 0 & 1 
\end{bmatrix}, \ 
M(\gamma_2) = M(\gamma_4) =
\begin{bmatrix}
 1 & 0 & 0 \\
 0 & 1 & 0 \\
 0 & -1 & 1 
\end{bmatrix}, \ 
M(\gamma_3) =
\begin{bmatrix}
 1 & 0 & 0 \\
 0 & 2 & 1 \\
 0 & -1 & 0
\end{bmatrix}.
\end{gather*}

Notice that
\begin{align*}
M(\gamma_2)M(\gamma_1) = M(\gamma_3)M(\gamma_4)
= \begin{bmatrix}
1 & 0 & 0 \\
0 & 1 & 1 \\
0 & -1 & 0 \\
\end{bmatrix},
\end{align*}
consistent with the relation $\gamma_2 \cdot \gamma_1 = \gamma_3 \cdot \gamma_4$.

\begin{remark}
This result implies that the Lagrangian fibration induced by $F$ over its regular values is not globally trivializable~\cite{Duistermaat1980}. Equivalently, the system does not admit globally defined action–angle coordinates. Moreover, the fact that there is no change of basis which can simultaneously bring the monodromy matrices into the form 
\[
\begin{bmatrix}
1 & 0 & 0 \\
0 & 1 & 0 \\
a & b & 1 \\
\end{bmatrix}
\]
confirms that the system does not have a global $\T^2$ action.
It is known, however, that near a thread of focus-focus-regular values there is a local $\T^2$ action. 
We can read from the form of $M(\gamma_1)$ that the vector fields $X_{\vv T_1} = 2\pi X_J$ and $X_{\vv T_3}$ extend smoothly inside the disk bounded by $\gamma_1$ and generate a local $\T^2$ action in the neighborhood of $\ell_1$, see Remark~\ref{rem:local-action}.
Similarly, we read from $M(\gamma_2) = M(\gamma_4)$ that the vector fields $X_{\vv T_1}$ and $X_{\vv T_2}$ generate a local $\T^2$ action in the neighborhood of $\ell_2$ and $\ell_4$.
Finally, we read from $M(\gamma_3)$ that the vector fields $X_{\vv T_1}$ and $X_{\vv T_2 + \vv T_3}$ generate a local $\T^2$ action in the neighborhood of $\ell_3$.
\end{remark}

\subsection{Monodromy of \texorpdfstring{$\S^1$}{S1} reduced 2-DOF Hamiltonian systems}
\label{sec:monodromy-reduced}

Recall that for $k \not\in \{-2,0,2\}$ the reduced space $K^{-1}(k)$ is a smooth manifold, see Section~\ref{sec:reduction-s1-action} and Proposition~\ref{prop:reduced-spaces}. The corresponding reduced functions $\widehat H_1$, $\widehat H_2$ define a reduced, 2-DOF integrable Hamiltonian system on $K^{-1}(k)/\S^1$. We give here some brief observations about these reduced systems. Note that representative bifurcation diagrams for the cases discussed here are shown in Figure~\ref{fig:stcm-bd} as horizontal slices of the three-dimensional bifurcation diagram with the planes $K = k$.

First, note that for $-2 < k < 0$ and for $k > 16.0625$, the corresponding reduced system has no focus-focus points and is of toric type.
For $2 < k < 16.0625$, the reduced system has a single focus-focus point and, therefore, it has a global $\S^1$ action on the manifold $K^{-1}(k) / \S^1 \simeq \S^2 \times \S^2$. 

We consider now the situation for $0 < k < 2$.
For $0 < k < k^*$, where $k^* \approx 1.12805$ corresponds to the central singularity, the reduced system has two focus-focus points and no global $\S^1$ action. The monodromy matrices around the two focus-focus points can be obtained by eliminating the first row and first column from the monodromy matrices computed in Section~\ref{sec:monodromy-tavis-cummings}. We obtain
\begin{gather*}
\widehat M(\gamma_3) =
\begin{bmatrix}
2 & 1 \\ -1 & 0
\end{bmatrix}, \ 
\widehat M(\gamma_4) =
\begin{bmatrix}
1 & 0 \\ -1 & 1 
\end{bmatrix}.
\end{gather*}

For $k^* < k < 2$, the reduced system has again two focus-focus points and no global $\S^1$ action. The monodromy matrices are
\begin{gather*}
\widehat M(\gamma_1) = \begin{bmatrix}
1 & 1 \\
0 & 1 
\end{bmatrix}, \ 
\widehat M(\gamma_2) = 
\begin{bmatrix}
1 & 0 \\
-1 & 1 
\end{bmatrix}.
\end{gather*}

At the transition point $k = k^*$ between the previous two cases, the reduced system has a single degenerate critical value in the interior of the image of the reduced integral map. This critical value corresponds to the $A_2$ singularity and the corresponding reduced fiber is homeomorphic to $\S^2$, see Proposition~\ref{prop:a2-sphere}.
As in the previous two cases, the reduced system her does not admit a global $\S^1$ action. The monodromy matrix for a loop $\gamma_0$ around this critical value, obtained by homotopy from $\gamma_2 \cdot \gamma_1$ or $\gamma_3 \cdot \gamma_4$, is
\begin{align*}
\widehat M(\gamma_0) = \begin{bmatrix}
1 & 1 \\
-1 & 0 \\
\end{bmatrix},
\end{align*}
for which we note that its eigenvalues are $-e^{\pm 2\pi i / 3}$ and therefore it cannot be brought into the standard form $\bigl(\begin{smallmatrix} 1 & 0 \\ m & 1 \end{smallmatrix}\bigr)$ by a change of basis.

\section{\texorpdfstring{$\boldsymbol{A_2}$}{A2} singularity}
\label{sec:a2-singularity}

In this section, we focus on the critical value $\mathfrak c_*$ at the meeting point of the four threads $\ell_j$, $j=1,\dots,4$ of focus-focus-regular critical values, see Proposition~\ref{prop:rank-1-parameterization-stcm}. The corresponding fiber $F^{-1}(\mathfrak c_*)$ contains an $\S^1$ orbit of degenerate rank-$1$ critical points. Upon reducing the $\S^1$ action, this $\S^1$ orbit corresponds to a singularity $c_*$ in the reduced space. Following Thom's terminology~\cite{thom2018structural}, the degenerate singularity $ c_*$ acts as the \emph{organizing center} of the bifurcations. We thus refer to $\mathfrak c_*$  as the \emph{central critical value}, and to the corresponding critical point in the reduced space, $c_*$, as the \emph{central singularity}.

\subsection{Reduction to normal form}

Recall from Section~\ref{sec:reduction-s1-action} that the $\S^1$ action induced by the momentum $K$ on the subspace $M_* = \S^2_{*,u} \times \S^2_{*,v} \times \R^2_*$ can be reduced by introducing coordinates $(\theta_u,u_3,\theta_v,v_3,k)$.
Here, we do not restrict attention to $K^{-1}(k)$, that is, we do not fix $k$.
Then, the integral map $F = (H_1,H_2,K)$ reduces to the map $\widehat F = (\widehat H_1, \widehat H_2, \widehat K)$, where $\widehat H_1$, $\widehat H_2$ are given by Eq.~\eqref{eq:H1-H2-reduced} and $\widehat K(\theta_u,u_3,\theta_v,v_3,k) = k$.

It follows from the parameterization in Proposition~\ref{prop:rank-1-parameterization-stcm} that the central singularity $c_*$ corresponds to the values $\theta_u^* = \pi$, $\theta_v^* = 0$, $u_3^* = v_3^* = 1 / (2 \cdot 2^{1/3})$, and $k^* = (12 \cdot 2^{2/3}-1)/16$, and we recall that the corresponding critical value is
\[ \mathfrak c_* = \widehat F(c_*) = \Bigl(
2 - 3 \cdot 2^{-2/3}, -2 + 3 \cdot 2^{-2/3},
\tfrac{1}{16} (12 \cdot 2^{2/3} - 1) \Bigr). \]

Let $\mathcal F : \S^2_u \times \S^2_v \times \R^2$ be the integral map given by
\[ \mathcal F = \psi \circ F, \]
where $\psi : \R^3 \to \R^3$ is the diffeomorphism
\[ \psi(h_1,h_2,k) = \Bigl(h_1+h_2,\, 
\frac{h_2-h_1+2\eta(k-k^*)}{(4\cdot2^{2/3}-1)^{1/2}}, \, \frac{2^{14/9} (k-k^*)}{(4\cdot2^{2/3}-1)^{1/3}} \Bigr), \]
and
\[ \eta(k-k^*) = 2-3\cdot2^{-2/3} - 2^{2/3} (k-k^*)
+ \frac{4\cdot2^{2/3}}{4\cdot2^{2/3}-1} (k-k^*)^2
- \frac{32 \cdot 2^{2/3}}{(4\cdot2^{2/3}-1)^2} (k-k^*)^3. \]
The function $\eta(k-k^*)$ is the cubic Taylor polynomial in $k-k^*$ of the expression obtained by substituting $\theta_u=\theta_u^*$, $u_3=u_3^*$, $\theta_v=\theta_v^*$, $v_3 = v_3^*$ in $\widehat H_1$.
It can be directly checked that $\psi(\mathfrak c_*) = 0$.

Since $\psi$ is a diffeomorphism, the singular Lagrangian fibrations induced by $F$ and $\mathcal F$ are isomorphic.
In particular, $F^{-1}(v) = \mathcal F^{-1}(\psi(v))$.
The integral map $\mathcal F$ reduces under the $\S^1$ action to $\widehat{\mathcal F} = \psi \circ \widehat F$.
In the rest of this section we work with the reduced integral map $\widehat{\mathcal F}$ in a neighborhood of its critical value $0$.

\begin{theorem}\label{Th:local}
There exists a local fibered bundle isomorphism $(\Phi,\phi)$ between the following local fibrations:
\[
\begin{tikzcd}[row sep=3em, column sep=4em]
\bigl(\S^2_{*,u} \times \S^2_{*,v} \times [-2,\infty), c_*\bigr) \arrow[d, "\widehat{\mathcal F}"'] \arrow[r,"\Phi"]  &  (\C^2\times\R,0) \arrow[d, "f"] \\
\bigl(\R^3, 0 \bigr) \arrow[r,"\phi"] &  (\C \times \R, 0) ,
\end{tikzcd}
\]
where $f(x,y,\kappa)=(y^2+x^3+\kappa x,\kappa)$.
\end{theorem}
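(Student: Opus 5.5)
The plan is to prove Theorem~\ref{Th:local} in two stages. First, a jet computation at $c_*$ shows that the Taylor polynomial of $\widehat{\mathcal F}$ up to the relevant order coincides, after a linear change of coordinates, with that of $f$. Second, a finite-determinacy and versality argument from singularity theory removes the higher-order terms. Throughout, $\C\times\R$ is identified with $\R^3$, and the first two components of $\widehat{\mathcal F}$ are read as the real and imaginary parts of a single function
\[ G = \widehat{\mathcal F}_1 + i\,\widehat{\mathcal F}_2 . \]

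\textbf{Step 1 (local coordinates).} Near $c_*$ we use the coordinates of Section~\ref{sec:reduction-s1-action},
\[ (\xi_1,\eta_1,\xi_2,\eta_2,\kappa_0) = (\theta_u-\pi,\ u_3-u_3^*,\ \theta_v,\ v_3-v_3^*,\ k-k^*), \]
and expand $\widehat H_1$, $\widehat H_2$ from Eq.~\eqref{eq:H1-H2-reduced}. The map $\psi$ was built for exactly this purpose. The function $\eta$ subtracts the $\kappa_0$-dependence of $\widehat H_1$ along the section through $c_*$, so the constant term of $G$ in the unfolding vanishes. The normalizing factors make the leading coefficients equal to $1$, and the third component of $\widehat{\mathcal F}$ is a rescaling $\kappa = c\,\kappa_0$.

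\textbf{Step 2 (the 2-jet at $\kappa=0$).} By Proposition~\ref{thm:rank-1-stability}, at $b=2^{2/3}$ both $DX_{\widehat H_\pm}$ are nilpotent. Neighbouring FFR points on $\ell_1,\ell_2$ have a complex (focus-focus) pencil, and this suggests that the Hessian of $G$ has complex rank one. Concretely, I would look for a real linear change of the four phase variables to $(x,y)\in\C^2$ such that the quadratic part of $G$ is exactly $y^2$, with $x$ spanning the common kernel. One first finds the kernel direction (the tangent to the degeneration, which can be read off from the $a=0$ family), and then completes it using the nondegenerate focus-focus block in the complementary plane.

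\textbf{Step 3 (3-jet and $\kappa$-linear term).} Next I would compute the cubic terms of $G$ restricted to the kernel plane, and the $\kappa$-linear terms. The claim to verify is that, modulo $y$ and after a further linear adjustment of $x$, these are $x^3+\kappa x$, with no $\kappa^2$ or pure $\kappa$ term. Two facts support this. The disappearance of a constant term is built into $\eta$. The coefficient of $\kappa x$ is nonzero, because for $\kappa\neq0$ the single degenerate point splits into two focus-focus points (on $\ell_1,\ell_2$ for $k>k^*$, and on $\ell_3,\ell_4$ for $k<k^*$). This is exactly what the critical set $\{3x^2+\kappa=0,\ y=0\}$ of $f$ produces. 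The powers of $2$ in $\psi$ should appear as the normalizing constants here, which serves as a consistency check.

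\textbf{Step 4 (removing higher-order terms).} A splitting-lemma argument applied in the nondegenerate $y$-direction gives
\[ G = \tilde y^2 + g(x,\kappa). \]
It then remains to show that the real map germ $(x,\kappa)\mapsto(\Re g, \Im g,\kappa)$, whose 3-jet is that of $x^3+\kappa x$, is equivalent to it by diffeomorphisms of source and target. For this I would invoke finite determinacy in Mather's sense. The germ $z\mapsto z^3$, viewed as a map $\R^2\to\R^2$, is finitely $\mathcal K$- and $\mathcal A$-determined, and the one-parameter unfolding by $\kappa x$, with the constant term absorbed into the base map $\phi$, is $\mathcal A$-versal, analogously to Arnold's versality of $y^2+x^3+\lambda x+\mu$. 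The target diffeomorphism is allowed to depend on $\kappa$, which is what turns right-equivalence into a fibered isomorphism $(\Phi,\phi)$.

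\textbf{Main obstacle.} I expect Steps 2--4 to be the hard part. The map $G$ is not holomorphic, so the complex normal form cannot be quoted from Arnold. Instead, the 3-jet must be shown to be ``complex'' in suitable real coordinates, and real $\mathcal A$-determinacy for $\R^4\times\R\to\R^3$ must then be established. If the smooth statement proves too delicate, I would first aim for a $C^0$ fibered isomorphism via topological stability, which suffices for the monodromy applications.
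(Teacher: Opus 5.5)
Your Steps 1--3 follow the paper's route, and the paper carries them out explicitly. In the shifted variables $(\xi_u,\chi_u,\xi_v,\chi_v,k-k^*)$ the linear terms vanish. At $k=k^*$ the quadratic terms involve only the angle deviations $\xi_u,\xi_v$, so the degenerate directions form the whole $(\chi_u,\chi_v)$-plane, not a single line: the $a=0$ family is tangent to its diagonal, and $\ell_3,\ell_4$ are tangent to its anti-diagonal. The paper then keeps the terms $(k-k^*)(\chi_u\mp\chi_v)$ and the pure $\chi$-cubic terms, i.e.\ the weighted principal part for weights $2,3,4$ on $\chi,\xi,k-k^*$. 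Explicit real-linear substitutions turn this truncation into exactly $\Re h$, $\Im h$. You list the holomorphy of the cubic and $\kappa$-linear parts as an obstacle, but it is forced. Since nothing of weight below $6$ survives, the weight-$7$ part of $\{\widehat H_+,\widehat H_-\}=0$ says precisely that the $\chi$-dependent principal part is holomorphic. The relevant complex coordinate on the $\chi$-plane is the one dual to the coordinate in which the $\xi$-part is $y^2$.

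The genuine gap is Step 4, which treats $G$ as an arbitrary real map germ; in that setting your claims are false.
\begin{enumerate}[label={(\alph*)},noitemsep]
\item The splitting lemma fails for $\C$-valued non-holomorphic functions. The right tangent space of $y\mapsto y^2$ is $y\cdot\mathfrak m_{\C}$, where $\mathfrak m_{\C}$ denotes complex-valued germs vanishing at $0$. This misses every $\bar y^k$, and indeed $y^2+\bar y^3$ is not right-equivalent to $y^2$.
\item The germ $z\mapsto z^3$, viewed as a map $\R^2\to\R^2$, is not finitely $\mathcal A$-determined. In complex notation $T\mathcal A_e=z^2\mathcal E_{\C}+\{\beta(z^3,\bar z^3)\}$, which misses $\bar z^k$ for every $k$ not divisible by $3$. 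Hence no finite unfolding of it is $\mathcal A$-versal.
\item Your $C^0$ fallback also fails for general germs. The map $y^2+x^3+\kappa x+t\kappa^2\bar x$ differs from $h$ by a term of weight $10>6$. Yet for $\kappa\neq0\neq t$ its critical set is $\{y=0,\ |3x^2+\kappa|=|t|\kappa^2\}$, two small circles of rank-one points instead of two focus-focus points. Their images are small deltoids across which the fiber topology changes.
\end{enumerate}

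So the higher-order terms can only be discarded by using the integrable structure: $\widehat H_+,\widehat H_-$ Poisson commute and $k$ is a Casimir. Concretely, Section~\ref{sec:critical-values} shows that near $c_*$ the only singular points lie on the nondegenerate focus-focus threads $\ell_1,\dots,\ell_4$, which match $\Delta$. The paper does not prove this step either; it only asserts that $\widehat{\mathcal F}$ and its Newton-polytope truncation are locally topologically equivalent. A complete argument should therefore aim at a $C^0$ statement resting on this integrable input, not on Mather determinacy.
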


\begin{proof}
Write $\widehat{\mathcal F} = (\widehat H_+, \widehat H_-, \widehat L)$, replace $\theta_u = \theta_u^* + \xi_u$, $\theta_v = \theta_v^* + \xi_v$, $u_3 = u_3^* + \chi_u$, $v_3 = v_3^* + \chi_v$, and consider the Taylor expansions of $\widehat H_+$ and $\widehat H_-$ with respect to $\xi_u$, $\xi_v$, $\chi_u$, $\chi_v$, $k-k^*$.

Up to linear terms the Taylor expansions are $\widehat H_+^{(1)} = \widehat H_-^{(1)} = 0$.
The quadratic terms are given by
\begin{align*}
\widehat H_+^{(2)} &=  \frac{4\cdot2^{2/3}-1}{8 \cdot 2^{1/3}} (\xi_u^2 - \xi_v^2)
+ \frac{4}{4 \cdot 2^{2/3} - 1} (k-k^*) (\chi_u - \chi_v), \\
\widehat H_-^{(2)} &= \frac{(4\cdot2^{2/3}-1)^{1/2}}{8 \cdot 2^{2/3}} \bigl[ (4-2^{1/3}) (\xi_u^2 + \xi_v^2) - 8 \xi_u \xi_v \bigr] 
+ \frac{4}{(4\cdot2^{2/3}-1)^{1/2}} (k-k^*) (\chi_u + \chi_v).
\end{align*}
We observe that the quadratic terms for $k=k^*$ do not involve $\chi_u$, $\chi_v$. 
This implies, in particular, that the central singularity is degenerate.

We then compute the cubic terms $\widehat H_+^{(3)}$, $\widehat H_-^{(3)}$ in the Taylor expansion of $\widehat H_+$, $\widehat H_-$, and we keep only these terms corresponding to the Newton polytope for each of these two functions.
That is, we consider
\[
\widetilde H_+ =  \widehat H_+^{(2)} + \widetilde H_+^{(3)},
\quad 
\widetilde H_- = \widehat H_-^{(2)} + \widetilde H_-^{(3)}, 
\]
where
\begin{align*}
\widetilde H_+^{(3)}
& = 
\frac{8}{(4\cdot2^{2/3}-1)^2}\bigl[
(6-2^{1/3})(\chi_v^3-\chi_u^3) - 6 \chi_u \chi_v(\chi_u-\chi_v)
\bigr],
\\
\widetilde H_-^{(3)} 
& = 
-\frac{8}{(4\cdot2^{2/3}-1)^{3/2}}\bigl[ 
(2-2^{1/3})(\chi_u^3+\chi_v^3) + 6 (\chi_u+\chi_v) \chi_u \chi_v
\bigr].
\end{align*}
The fibrations induced by $\widehat{\mathcal F} = (\widehat H_+, \widehat H_-, \widehat L)$ and $\widetilde{\mathcal F} = (\widetilde H_+, \widetilde H_-, \widehat L)$ are locally topologically equivalent in a neighborhood of the point $(\theta_u^*,u_3^*,\theta_v^*,v_3^*,k^*)$.

Consider complex coordinates $(x,y) \in \C^2$, write $x = x_1 + i x_2$, $y = y_1 + i y_2$, and define the function $h : \C^2 \times \R \to \C$ by
\[ h(x,y,\kappa) = y^2 + x^3 + \kappa x. \]
Then, there is a linear change of coordinates $(\xi_u,\xi_v) = (a_{11} y_1 + a_{12} y_2, a_{21} y_1 + a_{22} y_2)$, $(\chi_u,\chi_v) = (b_{11} x_1 + b_{12} x_2, b_{21} x_1 + b_{22} x_2)$, and $k-k^* = 2^{-14/9} (4\cdot2^{2/3}-1)^{1/3} \kappa \approx 0.594984 \,\kappa$, such that in the new coordinates we have
\begin{align*}
\widetilde H_+ &= y_1^2 - y_2^2 + x_1^3 - 3 x_1 x_2^2 + \kappa x_1 = \Re h(x,y,\kappa), \\
\widetilde H_- &= 2 y_1 y_2 + 3 x_1^2 x_2 - x_2^3 +\kappa x_2 
= \Im h(x, y, \kappa),
\end{align*} 
that is,
\begin{equation*}
h = \widetilde H_+ + i \widetilde H_-.
\end{equation*}
The numbers $a_{ij}, b_{ij}$ in the linear change of coordinates are explicitly given as complicated radical expressions. We report here their approximate values:
\begin{gather*}
a_{11} = -a_{22} \approx 1.49505, \ a_{21} = -a_{12} \approx 0.592494, \\
b_{11} = -b_{21} \approx 1.1239, \ b_{12} = b_{22} \approx 0.485921.
\end{gather*}

Thus, the fibration induced by $\widetilde{\mathcal F} = (\widetilde H_+, \widetilde H_-, \widehat L)$ on $\R^5$ and the fibration induced by $f$ on $\C^2\times\R$ are isomorphic.
Specifically, the fiber 
\[ \widetilde{\mathcal F}^{-1}\bigl(\varepsilon_1, \, \varepsilon_2, \, \kappa \bigr) \]
is, up to a linear coordinate transformation, exactly the fiber $f^{-1}(\varepsilon, \kappa)$, with $\varepsilon = \varepsilon_1 + i \varepsilon_2$.
Both these fibrations near $0$ are locally isomorphic to the fibration induced by $\widehat{\mathcal F} = (\widehat H_+, \widehat H_-, \widehat L)$ on $\S^2_{*,u} \times \S^2_{*,v} \times [-2,\infty)$ in a neighborhood of $c_*$. 
\end{proof}

We obtain as a corollary the following proposition. We emphasize that $c_*$ is a degenerate singularity and thus is not included in the classification given by Eliasson~\cite{eliasson1984hamiltonian}.

\begin{proposition}\label{prop.Topology}\label{prop:a2-sphere}
The reduced fiber $\widehat F^{-1}(\mathfrak c_*) = \widehat{\mathcal F}^{-1}(0)$ is topologically a sphere $\S^2$ with one singular point $c_*$, which is an $A_2$ singularity.
\end{proposition}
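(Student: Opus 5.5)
The plan has two parts: a local analysis near $c_*$ that comes from Theorem~\ref{Th:local}, and a global analysis away from $c_*$ that uses the fact that the fiber contains no other singular points.

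\textbf{Local structure at $c_*$.} By Theorem~\ref{Th:local}, a neighbourhood of $c_*$ in $\widehat{\mathcal F}^{-1}(0)$ is homeomorphic to a neighbourhood of the origin in $f^{-1}(0,0)=\{(x,y)\in\C^2 : y^2+x^3=0\}$. This is the cusp curve. It is parameterized by $t\mapsto(x,y)=(-t^2,t^3)$, $t\in\C$, and this map is injective because $t^2$ and $t^3$ determine $t$. Hence it is a homeomorphism from $\C$ onto the curve. So locally the fiber is a topological $2$-disc, and the singular point is the $A_2$ germ $y^2+x^3$. Its Milnor number is $2$, which matches the two vanishing cycles implicit in the monodromy $\widehat M(\gamma_0)$ of order $6$.

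\textbf{Global structure away from $c_*$.} Let $F_0=\widehat{\mathcal F}^{-1}(0)\cap\{k=k^*\}$, which is a subset of the reduced space $K^{-1}(k^*)/\S^1$.
\begin{enumerate}
\item Apply Proposition~\ref{prop:rank-1-parameterization-stcm} and Proposition~\ref{prop:rank-2-singularities}. The only rank-$1$ critical point with value $\mathfrak c_*$ is $a=0$, $b=2^{2/3}$, i.e. the single $\S^1$ orbit $c_*$. The rank-$2$ critical values lie on the boundary of the image, while $\mathfrak c_*$ is interior. Therefore $F_0\setminus\{c_*\}$ consists of regular points of the reduced $2$-DOF map $(\widehat H_1,\widehat H_2)$ on the smooth $4$-manifold $K^{-1}(k^*)/\S^1$. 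Here we use that $k^*\notin\{-2,0,2\}$, together with Proposition~\ref{prop:S1-fixed-points}.
\item Conclude that $F_0\setminus\{c_*\}$ is a smooth surface.
\item Note that $F_0$ is compact, since $K^{-1}(k^*)$ is compact.
\item Deduce that $F_0$ is a compact connected surface, topologically a manifold even at $c_*$ by the local step above. (Connectedness is established in the next paragraph.)
\end{enumerate}

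\textbf{Identifying the genus.} I would compare $F_0$ with a nearby regular fiber, which is a $2$-torus $\T^2$. By the normal form, the Milnor fiber of $A_2$ is a once-punctured torus, so the torus deforms to $F_0$ by collapsing a neighbourhood of the two vanishing cycles. Concretely:
\begin{itemize}
\item Outside a small ball $B$ around $c_*$, the fibration is a locally trivial bundle, because there are no critical points there. So $\T^2\setminus B\cong F_0\setminus B$.
\item Inside $B$, the regular fiber meets $B$ in the Milnor fiber, a genus-$1$ surface with one boundary circle, while $F_0$ meets $B$ in a disc.
\item Hence $F_0\setminus B$ is $\T^2$ minus a genus-one piece with a single boundary circle. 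That is a disc, and it is connected.
\item Therefore $F_0$ is a disc glued to a disc along a circle, that is, $F_0\cong\S^2$.
\end{itemize}
As an Euler characteristic check, $\chi(F_0)=\chi(\T^2)+\mu=0+2=2$.

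\textbf{Main obstacle.} The hard part is justifying that the local topological equivalence of Theorem~\ref{Th:local} is compatible with the global Ehresmann trivialization near the boundary of $B$, so that the Milnor fiber really sits inside the regular torus. This requires that the chosen ball be transverse to the nearby fibers. I would obtain this from the conic structure of the quasi-homogeneous germ $y^2+x^3$ with weights $(2,3)$, transported through $\Phi$. The connectedness of the regular fibers (each is a single torus) would be taken from the numerical monodromy setup of Section~\ref{sec:global-topology}.
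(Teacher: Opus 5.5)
Your proposal is correct and is essentially the paper's argument. You get a disc around $c_*$ from the cusp $y^2+x^3=0$ via Theorem~\ref{Th:local}, use local triviality outside a small ball so that the rest of the fiber is a regular torus minus a once-punctured torus (i.e.\ a disc), and glue the two discs along a circle to obtain $\S^2$. You are also more explicit than the paper about the inputs it uses silently: no other critical points on the fiber, compactness, connectedness of the nearby regular fiber, and transversality of the ball boundary. Your Euler characteristic check $\chi(F_0)=\chi(\T^2)+\mu=2$ is a sound consistency test.
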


\begin{proof}
By Theorem~\ref{Th:local}, we have that for a small ball $B'$ around the singularity $c_*$ and for $(h_1,h_2,k)$ sufficiently close to the critical value $\mathfrak c_*$, the topology of $\widehat F^{-1}(h_1,h_2,k) \cap B'$ coincides with the one of $f$.
Therefore, it suffices to analyze the topology of the fibers of $f$.

For $\kappa=0$, the singularity $x=y=0$ of $f$ is the standard $A_2$ singularity, and $f^{-1}(0)$ is a Riemann surface which is homeomorphic to $\C$. 
In particular, if we consider a small ball $B = \{|x|^2 + |y|^2 \le \rho\} \subseteq \C^2$, then $f^{-1}(0) \cap B$ is homeomorphic to a two-dimensional disk which is singular at the origin. 

Moreover, $\widehat F^{-1}(h_1,h_2,k)$ for $(h_1,h_2,k)$ sufficiently close to, and including, the central critical value $\mathfrak c_*$, is a fibration outside $B'$ and, therefore,  $\widehat F^{-1}(\mathfrak c_*) \setminus B'$ is also an open disk.
Therefore, the singular fiber $\widehat F^{-1}(\mathfrak c_*)$ is the union of two disks, the disk $\widehat F^{-1}(\mathfrak c_*) \cap B'$ containing the singular point $c_*$, and the disk $\widehat F^{-1}(\mathfrak c_*) \setminus B'$. The two disks are glued along a common boundary on $B'$ to form a topological $\S^2$ with one singular point $c_*$.
\end{proof}

\begin{remark}\label{rem:fibers}
We can further develop the analysis in the proof of Theorem~\ref{Th:local} to recover the topology of the fibers corresponding to values near $\mathfrak c_*$. 
For $\kappa=0$, $|\varepsilon|$ sufficiently small and $\varepsilon \ne 0$, $f^{-1}(\varepsilon) \cap B$ is a two-dimensional torus from which an open disk has been deleted; as well as $f^{-1}(0) \cap B$ for $\kappa\neq 0$.
For $\kappa\neq 0$ and $\varepsilon\neq 0$, two cases arise. The topology of $f^{-1}(\varepsilon)\cap B$ is a two-dimensional torus with an open disk removed when the polynomial $x^3+\kappa x-\varepsilon$ has three simple roots, and a two-dimensional pinched torus with an open disk removed when the polynomial has a double root.
The latter case occurs for $(\varepsilon,\kappa)$ on the discriminant locus $\Delta$ of the polynomial.
\end{remark}

\begin{remark}
We note here that our use of the term ``$A_2$ singularity'' follows the classification of singularities in, e.g., \cite{Arnoldbook} and differs from the terminology in \cite{Kudryavtseva2025} where ``$A_2$'' corresponds to a cuspidal singularity. In both cases, the $A_2$ singularity corresponds to the equation $x^2 + y^3 = 0$, but $x, y \in \C$ in this work, while $x, y \in \R$ in \cite{Kudryavtseva2025}.
\end{remark}

\subsection{Picard-Lefschetz monodromy}
\label{sec:pl-monodromy}

In the previous section, we showed that the local fibration around the central singularity induced by the reduced system is locally topologically equivalent to the local fibration around $0$ given by the function $f : \C^2 \times \R \to \C \times \R$ with
\begin{equation}
\label{eq:A2unfolding-f}
f(x,y,\kappa) = (y^2+x^3+\kappa x, \kappa),
\end{equation}
and with fibers $f^{-1}(\varepsilon, \kappa)$ given by 
\begin{equation}
\label{eq:A2unfolding}
y^2 + x^3 + \kappa x = \varepsilon.
\end{equation}
Equation~\eqref{eq:A2unfolding} defines a family of Riemann surfaces, $R_{\varepsilon,\kappa}$, with $\varepsilon\in\C$ and $\kappa\in\R$. 

As explained in the previous section, these Riemann surfaces are singular for the values of $(\varepsilon,\kappa)\in \C\times\R$ on the discriminant locus $\Delta$ of the polynomial
\begin{equation}
\label{eq:x}
x^3+\kappa x-\varepsilon,
\end{equation}
given by the equation
\begin{equation*}\label{eq:Delta}
\varepsilon^2=-\frac{4}{27}\kappa^3.
\end{equation*}

The discriminant locus $\Delta$ of the polynomial $x^3 + \kappa x - \varepsilon$ consists of four threads, $\lambda_j$, $j=1,\dots,4$, see Figure~\ref{fig:local-threads}.
The threads $\lambda_1$ and $\lambda_2$ form a cusp at $0$ and are respectively parameterized as $\varepsilon=-i(4\kappa^3/27)^{1/2}$ and $\varepsilon=i(4\kappa^3/27)^{1/2}$ with $\kappa > 0$.
Since $\varepsilon \in i\R$ on $\lambda_{1,2}$ we refer to them as imaginary threads.
The real threads $\lambda_3$ and $\lambda_4$ are respectively parameterized as $\varepsilon = (4|\kappa|^3/27)^{1/2}$ and $\varepsilon = -(4|\kappa|^3/27)^{1/2}$ with $\kappa < 0$, and similarly form a cusp at $0$.

\begin{figure}[ht!]
\centering
\hfil
\includegraphics[width=0.42\linewidth]{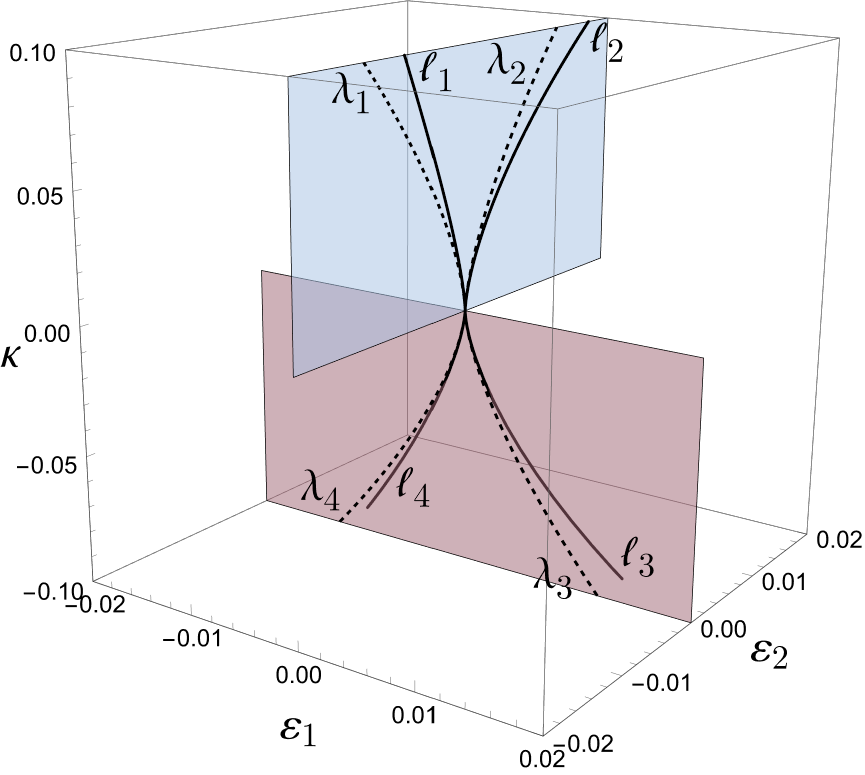}
\hfil
\includegraphics[width=0.42\linewidth]{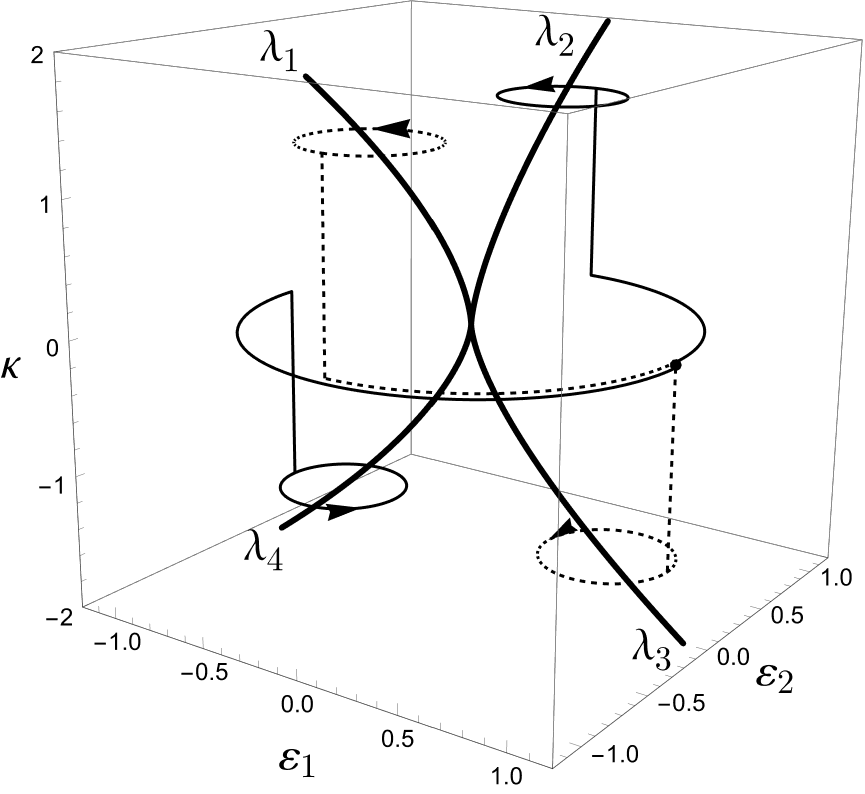}
\hfil
\caption{(a) Threads $\lambda_j$, $j=1,\dots,4$, comprising the discriminant locus $\Delta$ of $x^3+\kappa x-\varepsilon$, and image under the map $\psi$ of the threads $\ell_j$, $j=1,\dots,4$, comprising the set of critical values of $\mathcal F$. (b) Paths used for the computation of Picard-Lefschetz monodromy. The paths around the threads $\lambda_{1,2}$ (resp., $\lambda_{3,4}$) have been shifted slightly upward (resp., downward) for a clearer visualization.}
\label{fig:local-threads}
\label{fig:pl-paths}
\end{figure}

The map $\phi: \R^3 \to \C\times\R$ in Theorem~\ref{Th:local} sends the set of critical values of $\mathcal F$ in a neighborhood of $0$ to the discriminant locus $\Delta$ of the polynomial \eqref{eq:x} in a neighborhood of $0$.
In particular, it maps each thread $\ell_j$, $j=1,\dots,4$, to the thread $\lambda_j$ of $\Delta$. Figure~\ref{fig:local-threads}(a) shows the set of critical values near $0$ of the integral map $\mathcal F$, and the discriminant locus $\Delta$ in the space $(\varepsilon_1,\varepsilon_2,\kappa)$ near $0$.

We now proceed to compute the Picard-Lefschetz monodromy of the fibration given by the Riemann surfaces $R_{\varepsilon,\kappa}$ over the base space $\C \times \R$.
We fix a base point $(\varepsilon,\kappa)=(1,0)$ in the base space. Then the fundamental group $\pi_1(\Delta^c,(1,0))$ of the complement of $\Delta$ with base point $(1,0)$,  acts on the first homology group of the fibers $R_{\varepsilon,\kappa}$, $(\varepsilon,\kappa) \not\in \Delta$.
The action $N(a)$ of the fundamental group on a cycle $a$ is determined via the Picard-Lefschetz formula~\cite{zolkadek2006monodromy}
\begin{equation}\label{eq:PL}
N(a)=a-(a,\delta)\delta,
\end{equation}
where $\delta$ is a vanishing cycle.

The Riemann surfaces $R_{\varepsilon,\kappa}$, for $(\varepsilon,\kappa)\not\in\Delta$, are given by a two-sheet covering (two values of $y$), with three ramification points given by the roots of \eqref{eq:x}.
The first homology group of the fibers $R_{\varepsilon,\kappa}$ is two-dimensional and its bases and monodromy can be analyzed using the roots of \eqref{eq:x}.
For $\kappa=0$, $\varepsilon=1$, let $x_j=e^{2\pi i j /3}$, $j=0,1,2$, be the corresponding roots of \eqref{eq:x}. 
Let $\alpha_j$, $j=0,1,2$ be the cycles in $R_{0,1}$, obtained by lifting the segment from $x_{j-1}$ to $x_j$ in the \emph{upper sheet} and then backward from $x_j$ to $x_{j-1}$ in the \emph{lower sheet}.
Here and in the sequel, we use the convention that $x_{-1} = x_2$ and $\alpha_{-1} = \alpha_2$. 
The branching points of $R_{1,0}$ and the segments between them that give the cycles $\alpha_j$ are represented in Figure~\ref{fig:loops}.
Denoting by $(\alpha_i,\alpha_{i+1})$ the intersection number of the cycles $\alpha_i$, $\alpha_{i+1}$, we have
\begin{equation}
\label{eq:delta}
\alpha_0+\alpha_1+\alpha_2=0,\quad (\alpha_i,\alpha_{i+1})=1.
\end{equation}
To express monodromy as a matrix we work with the basis $\{\alpha_1,\alpha_0\}$. If the monodromy transformation along a path $\gamma$ is $N(\alpha_1) = n_{11} \alpha_1 + n_{12} \alpha_0$, $N(\alpha_0) = n_{21} \alpha_1 + n_{22} \alpha_0$, then we express the transformation as the matrix
\[ N(\gamma) = \begin{bmatrix} n_{11} & n_{12} \\ n_{21} & n_{22} \end{bmatrix}. \]

\begin{figure}[ht!]
\centering
\includegraphics[width=0.4\linewidth]{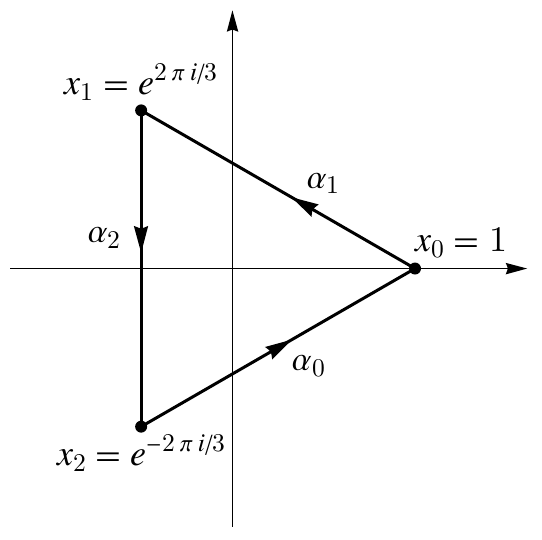}
\caption{Schematic representation of the upper sheet of the Riemann surface $R_{0,1}$. The black dots indicate the ramification points $x_k$. The segments that join the branching points and give the cycles $\alpha_k$ are represented by oriented straight lines.}
\label{fig:loops}
\end{figure}

Similarly to the discussion in Section~\ref{sec:monodromy-tavis-cummings}, Proposition~\ref{prop:fund-group-tc}, we have $\pi_1(\Delta^c,(1,0)) \simeq \Z * \Z * \Z$.
We compute Picard-Lefschetz monodromy along representatives $\hat\gamma_j$, $j=1,\dots,4$, of homotopy classes, such that each $\hat\gamma_j$ winds once around the thread $\lambda_j$ and is homotopically equivalent to the path $\psi \circ \gamma_j$, where $\gamma_j$ is the path used to compute Hamiltonian monodromy in Section~\ref{sec:monodromy-tavis-cummings}.

We start by analyzing the monodromy around the real threads $\lambda_3$, $\lambda_4$.
Let $\hat\gamma_3$ be the path starting at $(\varepsilon,\kappa)=(1,0)$, decreasing the value of $\kappa$ until it arrives close to $\lambda_3$, then positively winding in the complex $\varepsilon$-plane and back to the starting point, see Figure~\ref{fig:pl-paths}(b). 

For real parameters $\varepsilon$, $\kappa$, inside the cusp region there are three real roots of \eqref{eq:x} and outside only one.
This means that along this first path \eqref{eq:x} has one real and two complex conjugate roots.
The complex conjugate roots are obtained by transporting the roots $x_1$ and $x_2$, while the root $x_0$ is not affected.
The complex conjugate roots coalesce when touching the cusp curve implying that the cycle $\alpha_2=-\alpha_0-\alpha_1$ is the vanishing cycle along this path. 
Hence, using the Picard-Lefschetz formula, Eq.~\eqref{eq:PL}, and the intersection numbers, Eq.~\eqref{eq:delta}, we obtain $N_3(\alpha_0)=\alpha_0+\alpha_2=-\alpha_1$, and $N_3(\alpha_1)=\alpha_1-\alpha_2=\alpha_0+2\alpha_1$.

The general approach for dealing with the monodromy around the other threads is similar, but we must first transport the roots $x_j$ (and hence the cycles $\alpha_j$) around a convenient path, as shown in Figure~\ref{fig:pl-paths}(b). 

To calculate the monodromy transformation $N_4$ along a path $\hat\gamma_4$ around the real thread $\lambda_4$, we first consider a half-circle (in the negative direction) in the complex $\varepsilon$-plane, $\kappa=0$. It affects the roots $x_j$ by rotation by $-\pi/3$, that is, the roots are transformed to $x'_i = x_i e^{-\pi i/3}$ and the cycles $\alpha_i$ are transformed to $\alpha'_i$ from $x'_{i-1}$ to $x'_{i}$. 
We then decrease $\kappa$ until arriving close to the lower cusp thread in the real $(\varepsilon,\kappa)$-plane, then winding positively in the complex $\varepsilon$-plane and back to the starting point. 
The vanishing cycle is now $\alpha'_1$.
Then, the Picard-Lefschetz formula, Eq.~\eqref{eq:PL}, and the intersection numbers, Eq.~\eqref{eq:delta}, give that $N_4(\alpha_0)=\alpha_0-\alpha_1$ and $N_4(\alpha_2)=\alpha_2+\alpha_1$. Since $\alpha_1=-\alpha_0-\alpha_2$, we obtain $N_4(\alpha_1)=\alpha_1$. 

We then consider the monodromy along a path $\hat\gamma_2$ around the imaginary thread $\lambda_2$. To obtain the purely imaginary part of $\varepsilon$, we have to wind by $\pi/2$ in the complex $\varepsilon$-plane, for $\kappa=0$. 
We have
\begin{equation}\label{eq:x'}
{x^\prime}^3+\kappa x^\prime=i\varepsilon.
\end{equation}
Hence, the roots $x_j$ are modified to $x'_j = x_j e^{\pi i / 6}$. 
We want to obtain a purely real equation. Substituting $x'=-i\xi$, in \eqref{eq:x'} and dividing by $i$, we get $\xi^3-\kappa\xi=\varepsilon$.
The roots $\xi_j$,  obtained from $x_j$, $j=0,1,2$, are $\xi_j=x_{j+1}$ and the cycles $\alpha'_j$ as previously go from ${\xi}_{j-1}$ to ${\xi}_j$. 
We then increase $\kappa$ until arriving close to the upper cusp thread in the plane $i\R\times\R$, then winding positively in the complex $\varepsilon$-plane and back to the starting point. 
The vanishing cycle is $\alpha'_1$.
This gives 
$N_2(\alpha_1)=\alpha_1$ and $N_2(\alpha_0)=\alpha_0-\alpha_1$.

We repeat the same procedure for $\lambda_1$ considering a path $\hat\gamma_1$. 
We first wind by $-\pi/2$ in $\varepsilon$. 
Thus, the roots $x_j$ wind by $-\pi/6$, and we get the roots $x'_j = x_j e^{-\pi i/6}$. We make the same change of variable as in the previous case, and dividing by $i$, we get $\xi^3-\kappa\xi=-\varepsilon$, obtaining that $x_0$ is transformed to $\xi_0=e^{\pi i/3}$, $\xi_1=-1$, and $\xi_2=e^{-\pi i/3}$. 
We get that the cycle $\alpha'_0$ joining $\xi_2$ to $\xi_0$ is the vanishing cycle when we reach the imaginary thread $\lambda_1$.
This gives the monodromy $N_1(\alpha_0)=\alpha_0$, $N_1(\alpha_1)=\alpha_1+\alpha_0$.

We finally summarize the results in the following proposition using the conventions described above. 

\begin{proposition} The generators of the Picard-Lefschetz monodromy of the reduced Tavis-Cummings system in a neighborhood of the central $A_2$ singularity are given by:
\begin{gather*}
N(\hat\gamma_1) = \begin{bmatrix}
  1 & 1 \\
 0 & 1
\end{bmatrix}, \ 
N(\hat\gamma_2) = N(\hat\gamma_4) =
\begin{bmatrix}
  1 & 0 \\
 -1 & 1
\end{bmatrix}, \ 
N(\hat\gamma_3) =\begin{bmatrix}
 2 & 1 \\
 -1 & 0 
\end{bmatrix}
.
\end{gather*}
\end{proposition}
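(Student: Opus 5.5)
The plan is to assemble the proposition from the four Picard--Lefschetz computations carried out just above. Each one gives the images of two of the cycles under the monodromy along $\hat\gamma_j$. These must be rewritten in the fixed basis $\{\alpha_1,\alpha_0\}$, using the relation $\alpha_0+\alpha_1+\alpha_2=0$ and the intersection numbers $(\alpha_i,\alpha_{i+1})=1$ from \eqref{eq:delta}. In each case I would first identify the vanishing cycle $\delta_j$ for the thread $\lambda_j$, as the cycle over the segment joining the two roots of \eqref{eq:x} that coalesce on $\lambda_j$, after transporting the roots along the preliminary arc in the $\varepsilon$-plane. I would then apply \eqref{eq:PL} to $\alpha_1$ and $\alpha_0$, and read off the rows of $N(\hat\gamma_j)$ according to the convention that the first row is $N(\alpha_1)$ and the second row is $N(\alpha_0)$.

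The four threads then go as follows.

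For $\lambda_3$, the vanishing cycle is $\delta=\alpha_2$. This gives $N_3(\alpha_1)=\alpha_1-(\alpha_1,\alpha_2)\alpha_2=\alpha_1-\alpha_2=2\alpha_1+\alpha_0$ and $N_3(\alpha_0)=\alpha_0+\alpha_2=-\alpha_1$, so the rows are $(2,1)$ and $(-1,0)$.

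For $\lambda_4$, the transported vanishing cycle $\alpha_1'$ corresponds in the original basis to $\pm\alpha_1$, since the sign does not affect \eqref{eq:PL}. Then $N_4(\alpha_1)=\alpha_1$ and $N_4(\alpha_0)=\alpha_0-\alpha_1$, giving rows $(1,0)$ and $(-1,1)$.

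For $\lambda_2$, the relabelling $\xi_j=x_{j+1}$ again makes $\alpha_1$ the vanishing cycle in original terms, so the matrix is the same as for $\lambda_4$.

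For $\lambda_1$, the vanishing cycle is $\alpha_0$. This gives $N_1(\alpha_0)=\alpha_0$ and $N_1(\alpha_1)=\alpha_1-(\alpha_1,\alpha_0)\alpha_0=\alpha_1+\alpha_0$, so the rows are $(1,1)$ and $(0,1)$.

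As a consistency check I would verify two things. First, the product relation mirrors $\gamma_2\cdot\gamma_1=\gamma_3\cdot\gamma_4$, namely $N(\hat\gamma_2)N(\hat\gamma_1)=N(\hat\gamma_3)N(\hat\gamma_4)$; both products equal $\bigl[\begin{smallmatrix}1&1\\-1&0\end{smallmatrix}\bigr]$. Second, this product agrees with the monodromy of a small loop around $0$ for $\kappa=0$, which rotates the roots by $2\pi/3$. Under that rotation $\alpha_j\mapsto\alpha_{j+1}$, so $\alpha_1\mapsto\alpha_2=-\alpha_1-\alpha_0$ and $\alpha_0\mapsto\alpha_1$. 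This is of order $3$ and has the same eigenvalues as $\widehat M(\gamma_0)$, up to the orientation of the loop.

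The main obstacle is the bookkeeping of the transported cycles. One has to justify that the homotopy class of each $\hat\gamma_j$, including the preliminary arc in the $\varepsilon$-plane and its direction, matches $\psi\circ\gamma_j$. One also has to pin down the vanishing cycle as an element of $H_1(R_{1,0})$ in terms of the \emph{original} $\alpha_j$, with the orientation of the intersection form fixed consistently across all four cases. I would handle this by tracking explicitly, along each arc, which pair of roots of \eqref{eq:x} collides, and by recording the induced permutation of the $\alpha_j$. Signs of $\delta$ are irrelevant to \eqref{eq:PL}, but the orientation of the winding is not, and I would fix it by requiring positive winding in the local complex coordinate transverse to each thread.
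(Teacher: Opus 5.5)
Your derivation of the four matrices is the paper's own. It uses the same vanishing cycles: $\alpha_2$ for $\lambda_3$, the transport of $\alpha_1$ for $\lambda_4$ and $\lambda_2$, and the transport of $\alpha_0$ for $\lambda_1$. It applies \eqref{eq:PL} with \eqref{eq:delta} in the same way and uses the same row convention in the basis $\{\alpha_1,\alpha_0\}$. Your first check, $N(\hat\gamma_2)N(\hat\gamma_1)=N(\hat\gamma_3)N(\hat\gamma_4)=\bigl[\begin{smallmatrix}1&1\\-1&0\end{smallmatrix}\bigr]$, is also correct.

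Your second check, however, is wrong as stated. Along $\varepsilon=e^{2\pi i s}$, $\kappa=0$, the fibers are carried by $(x,y)\mapsto(e^{2\pi i s/3}x,\,e^{\pi i s}y)$. So at $s=1$ the branch points rotate by $2\pi/3$ \emph{and} $y\mapsto -y$, which interchanges the upper and lower sheets. Hence $\alpha_j\mapsto-\alpha_{j+1}$, that is, $\alpha_1\mapsto\alpha_1+\alpha_0$ and $\alpha_0\mapsto-\alpha_1$. This is exactly $\bigl[\begin{smallmatrix}1&1\\-1&0\end{smallmatrix}\bigr]$, which has order $6$ and eigenvalues $e^{\pm i\pi/3}=-e^{\mp 2\pi i/3}$. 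That is what it must be for the classical monodromy of $y^2+x^3$, with weights $1/2$ and $1/3$.

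The order-$3$ map $\alpha_j\mapsto\alpha_{j+1}$ that you wrote is the negative of this matrix, and its eigenvalues are $e^{\pm 2\pi i/3}$. Reversing the loop only inverts the matrix and leaves that eigenvalue pair unchanged, so ``up to orientation'' does not rescue it. Carried out as you describe, the check would signal a mismatch rather than agreement. Once the sheet interchange is included, it becomes an exact confirmation of the product of the generators. The proposition itself is unaffected.
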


We stress that we obtain the lower right $2\times2$ block of the matrices obtained in Section~\ref{sec:monodromy-tavis-cummings}. Note that in the full space, there is a global $\S^1$ action. Hence, the additional cycle (given by the $\S^1$ action) in the full space is not affected by the monodromy. 

\section{Conclusion and perspectives}
\label{sec:conclusions}

In this work, we investigated the topological structure of the singular Lagrangian fibration of a special case of the two-spin Tavis–Cummings (TC) system. 
For the specific choice of parameters defining the special Tavis-Cummings system, we have identified what is, to the best of our knowledge, the first physical example of a completely integrable Hamiltonian system exhibiting an $A_2$ singularity. 
Moreover, we described the topology of all singular fibers in the interior of the image of the integral map and showed that the Lagrangian fibration by three-dimensional tori over the regular values of the integral map is not globally trivializable by explicitly computing the monodromy matrices of the system.
We further demonstrated the correspondence between the Hamiltonian monodromy of the system and the Picard-Lefschetz monodromy of the versal unfolding of the $A_2$ singularity.
We conclude this paper with some additional perspectives on the Lax pair formalism and the appearance of the $A_2$ singularity in the STC system.

A distinctive feature of the TC system is that, in any number of degrees of freedom, it admits a description in terms of a spectral Lax pair. As discussed in Appendix~\ref{sec:lax}, it is precisely through a Lax pair approach that we have identified the parameter values corresponding to the special TC system. In a forthcoming paper, we will study how to recover the bifurcation diagram and analyze Hamiltonian monodromy using the spectral Lax pair of the system, as this algebraic approach appears to provide a systematic method for addressing these questions~\cite{audin2002hamiltonian, Babelon2003, Gutierrez-Guillen2024}.

The fine-tuning of the parameters required for the appearance of the $A_2$ singularity raises the question of how the system changes under small parameter variations.
Recall that to define the STC system we chose specific values of the parameters $\delta_1=1/2$, $\delta_2=3/2$, $\omega=1$, satisfying the resonance condition $\delta_1 + \delta_2 = 2\omega$.
Fixing parameter values allowed us to draw the sets of critical values in Figure~\ref{fig:stcm-bd} and to numerically compute the Hamiltonian monodromy of the system.
However, the proof of Proposition~\ref{prop:rank-1-parameterization-stcm} and the discussion in Appendix~\ref{sec:lax}, show that the TC system has an $A_2$ singularity for a wide range of parameters $\delta_1, \delta_2 > 0$ satisfying the resonance condition $\delta_1 + \delta_2 = 2\omega$ with $\delta_1 \ne \delta_2$.
If the resonance condition is not satisfied, then the system does not have an $A_2$ singularity. We have checked that if we keep $\delta_1=1/2$, $\delta_2=3/2$ fixed, and detune $\omega$, then the four threads $\ell_1,\dots,\ell_4$ do not meet at a central critical value. For $\omega$ slightly larger than $1$, the threads $\ell_1$ and $\ell_4$ combine to form a smooth thread of rank-$1$ non-degenerate focus-focus-regular singularities, while $\ell_2$ similarly combines with $\ell_3$. The new combined threads do not intersect. For $\omega$ slightly smaller than $1$, $\ell_1$ combines with $\ell_3$, while $\ell_2$ combines with $\ell_4$. That is, a small detuning of the resonance condition replaces the four threads meeting at the $A_2$ singularity with two smooth non-intersecting threads. 

Recall that the Jaynes-Cummings (JC) system, that is, the TC system with $N=1$ spins, has an $A_1$ (focus-focus) singularity, which is associated with non-trivial Hamiltonian monodromy. 
It is then natural to ask if higher-order $A_N$ singularities with $N \ge 2$ can also appear in integrable Hamiltonian systems with compact singular Lagrangian fibrations, and what are their implications. 
In this paper, we answered these questions for the case $N=2$, as we showed that the $A_2$ singularity appears in the TC system with $N=2$ spins.

We note that there is no general method for constructing an integrable Hamiltonian system with an $A_2$ singularity having a compact singular Lagrangian fibration given by an explicit integral map. 
An attempt to construct such a system by applying Flaschka's method to the holomorphic function $f:\C^2 \to \C$~\cite{Flaschka1988} with $f(x,y) = y^2 + x^3$, corresponding to the $A_2$ singularity, produces a 2-DOF integrable Hamiltonian system with non-complete flows and non-compact fibers.
In particular, the regular fibers are two-dimensional tori with a disk removed---these are the Riemann surfaces described in Section~\ref{sec:pl-monodromy}.
Therefore, the STC system provides a compactification of this system in an algebraically closed form for the integral map, while at the same time being physically significant.

The question of the existence of integrable Hamiltonian systems with an $A_N$ singularity having a compact singular Lagrangian fibration given by an explicit integral map remains open for $N \ge 3$. 
We conjecture that the TC system with $N$ spins may provide such examples, based on the observation that the existence of the $A_1$ singularity in the JC system and of the $A_2$ singularity in the two-spin TC system is related to the spectral Lax pair formalism and the degeneracy of the roots of the spectral curve.
Recall that for the TC system with $N$ spins, the Lax pair approach \cite{Babelon2012, Gutierrez-Guillen2024}, associates to the TC system a complex polynomial $Q_{2N+2}(\lambda)$ of degree $2N+2$, with real coefficients. 
The \emph{spectral curve} is the Riemann surface defined by $\mu^2 = Q_{2N+2}(\lambda)$.
In the case $N=1$, the polynomial $Q_4(\lambda)$ has, for specific choices of parameters, two complex conjugate roots with multiplicity $2$ which correspond to the $A_1$ singularity of the JC system. 
Similarly, in the case $N=2$, the polynomial $Q_6(\lambda)$ has, for specific choices of parameters, two complex conjugate roots with multiplicity $3$ which, as we showed in this paper, correspond to an $A_2$ singularity.
In general, for $N$ spins we may expect that the polynomial $Q_{2N+2}(\lambda)$ will have for specific choices of parameters two complex conjugate roots $\lambda_*$, $\bar\lambda_*$ with multiplicity $N+1$.
That is, the corresponding Riemann surface $\mu^2 = Q_{2N+2}(\lambda)$ will have an $A_N$ singularity at $\lambda_*$, since the local normal form around this singularity is $\mu^2=\lambda^{N+1}$, and, for the same reason, an $A_N$ singularity at $\bar\lambda_*$.
This raises the possibility that the corresponding integral map also has a highly degenerate $A_N$ singularity.
We would call a TC system with $N$ spins having this property a \emph{special Tavis-Cummings system of order $N$} (STC$_N$ system). 
Based on these observations, we formulate the following conjecture, generalizing what is known for $N=1,2$:

\begin{conjecture*}
For each $N \ge 1$, there exists a special Tavis-Cummings system of order $N$, that is, there exist parameter configurations such that the polynomial $Q_{2N+2}(\lambda)$ for the TC system with $N$ spins has two complex conjugate roots with multiplicity $N+1$, and for these parameter configurations the integral map has a highly degenerate $A_N$ singularity. 
\end{conjecture*}

\section*{Acknowledgments}
The second, third, and fourth author have been supported by the EIPHI Graduate school (contract ``ANR-17-EURE-0002'') and by the Région Bourgogne Franche-Comté.
The third author also benefited from the support of the IRL 2001 Solomon Lefschetz CNRS-UNAM,  Croatian Science Foundation (HRZZ) grant IP-2022-10-9820, partial support by the Horizon grant 101183111-DSYREKI-HORIZON-MSCA-2023-SE-01.

\appendix

\section{Reduction of the \texorpdfstring{$\S^1$}{S1} action}
\label{sec:algebraic-reduction}

In this section we describe in detail the reduction of the $\S^1$ action $\varphi_K$ \eqref{eq:S1-action} in terms of algebraic invariants. 
The $\S^1$ action has the linear invariants $u_3$, $v_3$ and the quadratic invariants $|z|^2$, $|u|^2 = 1 - u_3^2$, $|v|^2 = 1 - v_3^2$, $u \bar{v}$, $u \bar{z}$, and $v \bar{z}$. 
All polynomial invariants of $\varphi_K$ factor through the given linear and quadratic invariants. 
This can be shown by considering the diagonal action of $\varphi_K$ on the space of monomials in variables $u_3, v_3, u, v, z$, given by
\[ u_3^{n_1} v_3^{n_2} u^{n_3} \bar{u}^{n_4} u^{n_5} \bar{u}^{n_6} z^{n_7} \bar{z}^{n_8}
\xrightarrow{\ \varphi_K^t\ } e^{i(n_3+n_5+n_7-n_4-n_6-n_8) t} u_3^{n_1} v_3^{n_2} u^{n_3} \bar{u}^{n_4} v^{n_5} \bar{v}^{n_6} z^{n_7} \bar{z}^{n_8}. \]
For a monomial to be invariant under the $\S^1$ action we must have $n_3+n_5+n_7=n_4+n_6+n_8$ and, therefore, each of the factors $u, v, z$ must be paired with exactly one of $\bar{u}, \bar{v}, \bar{z}$.
Then, a theorem of Schwarz ensures that smooth functions invariant under the $\S^1$ action also factor through the same invariants \cite{Schwarz1975}.

For the algebraic reduction we use here the following real invariants which are combinations of the linear and quadratic invariants given earlier:
\begin{equation}
\label{eq:def-algebraic-invariants}
u_3, \ v_3,\ K, \
X_1 + i Y_1 = v \bar z, \
X_2 + i Y_2 = z \bar u, \
X_3 + i Y_3 = u \bar v.
\end{equation}
Expressed in terms of real coordinates, we have
\begin{equation*}
\begin{aligned}
X_1 &= p v_1 + q v_2, & X_2 &= p u_1 + q u_2, & X_3 &= u_1 v_1 + u_2 v_2, \\
Y_1 &= -q v_1 + p v_2, & Y_2 &= q u_1 - p u_2, & Y_3 &= u_2 v_1 - u_1 v_2. \\
\end{aligned}
\end{equation*}
The dynamics in the reduced space $K^{-1}(k) / \S^1$ is then determined by the Poisson structure given in Table~\ref{tbl:poisson-algebraic-invariants}.

The algebraic invariants in Eq.~\eqref{eq:def-algebraic-invariants} are not independent. They satisfy the syzygies $\sigma_k = 0$, $k=1,\dots,9$, where
\begin{equation*}
\label{eq:syzygies-algebraic-invariants}
\begin{aligned}
\sigma_1 &= X_1^2 + Y_1^2 - 2 (1 - v_3^2) (K - u_3 - v_3), \\
\sigma_2 &= X_2^2 + Y_2^2 - 2 (1 - u_3^2) (K - u_3 - v_3), \\
\sigma_3 &= X_3^2 + Y_3^2 - (1 - v_3^2) (1 - u_3^2), \\
\sigma_4 &= X_1 X_3 - Y_1 Y_3 - (1 - v_3^2) X_2, \\
\sigma_5 &= X_1 Y_3 + X_3 Y_1 + (1 - v_3^2) Y_2, \\
\sigma_6 &= X_2 X_3 - Y_2 Y_3 - (1 - u_3^2) X_1, \\
\sigma_7 &= X_2 Y_3 + X_3 Y_2 + (1 - u_3^2) Y_1, \\
\sigma_8 &= X_1 X_2 - Y_1 Y_2 - 2 (K - u_3 - v_3) X_3, \\
\sigma_9 &= X_1 Y_2 + X_2 Y_1 + 2 (K - u_3 - v_3) Y_3.
\end{aligned}
\end{equation*}
Notice that the given syzygies are themselves not independent.

\begin{table}
\centering
\begin{tabular}{>{$}c<{$}|*{9}{>{$}c<{$}}}
 & K & u_3 & v_3 & X_1 & Y_1 & X_2 & Y_2 & X_3 & Y_3 \\
\midrule
\{K,\cdot\} & 0 & 0 & 0 & 0 & 0 & 0 & 0 & 0 & 0 \\
\{u_3,\cdot\} & 0 & 0 & 0 & 0 & 0 & -Y_2 & X_2 & Y_3 & -X_3 \\
\{v_3,\cdot\} & 0 & 0 & 0 & Y_1 & -X_1 & 0 & 0 & -Y_3 & X_3 \\
\{X_1,\cdot\} & 0 & 0 & -Y_1 & 0 & A & -Y_3 & -X_3 & -v_3 Y_2 & -v_3 X_2 \\
\{Y_1,\cdot\} & 0 & 0 & X_1 & -A & 0 & -X_3 & Y_3 & -v_3 X_2 & v_3 Y_2 \\
\{X_2,\cdot\} & 0 & Y_2 & 0 & Y_3 & X_3 & 0 & -B & u_3 Y_1 & u_3 X_1 \\
\{Y_2,\cdot\} & 0 & -X_2 & 0 & X_3 & -Y_3 & B & 0 & u_3 X_1 & -u_3 Y_1 \\
\{X_3,\cdot\} & 0 & -Y_3 & Y_3 & v_3 Y_2 & v_3 X_2 & -u_3 Y_1 & -u_3 X_1 & 0 & C \\
\{Y_3,\cdot\} & 0 & X_3 & -X_3 & v_3 X_2 & -v_3 Y_2 & -u_3 X_1 & u_3 Y_1 & -C & 0 \\
\end{tabular}
\caption{Poisson structure for the algebraic invariants in \eqref{eq:def-algebraic-invariants}. In the table, $A = 2 v_3 (K-u_3-v_3) + 1-v_3^2$, $B = 2 u_3 (K-u_3-v_3)+1-u_3^2$, and $C = u_3 (1-v_3^2)-(1-u_3^2) v_3$.}
\label{tbl:poisson-algebraic-invariants}
\end{table}

Finally, we note that the functions $H_1$ and $H_2$ can be expressed in terms of the $\S^1$ algebraic invariants as
\begin{align*}
\widehat H_1 &= (\delta_1-\omega) u_3 + \sqrt{2} g Y_2 - \frac{2 g^2}{\delta_2 - \delta_1} (X_3 + u_3 v_3), \\
\widehat H_2 &= (\delta_2-\omega) v_3 - \sqrt{2} g Y_1 + \frac{2 g^2}{\delta_2 - \delta_1} (X_3 + u_3 v_3).
\end{align*}

Recall that the reduced spaces $K^{-1}(k) / \S^1$ are smooth manifolds for $k \in (-2,0) \cup (0,2) \cup (2,\infty)$. We have the following characterization.

\begin{proposition}\label{prop:reduced-spaces}
The reduced space $K^{-1}(k) / \S^1$ is diffeomorphic to $\CP^2$, for $k \in (-2,0)$. It is diffeomorphic to a symplectic blow-up of $\CP^2$ on two disjoint balls, or---equivalently---the symplectic blow-up of $\S^2 \times \S^2$ on a ball, for $k \in (0,2)$. Finally, it is diffeomorphic to $\S^2 \times \S^2$, for $k \in (2,\infty)$.
\end{proposition}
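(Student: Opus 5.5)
The plan is to realize $K^{-1}(k)/\S^1$ as a symplectic toric $4$-manifold and read off its diffeomorphism type from its Delzant polygon. The $\S^1$ action $\varphi_K$ in \eqref{eq:S1-action} is the diagonal circle in the $\T^3$ action on $M=\S^2_u\times\S^2_v\times\R^2$ that rotates $u$, $v$ and $z$ independently. This $\T^3$ action is Hamiltonian with momentum map
\[ \mu=(u_3,v_3,w), \qquad w=\tfrac12 z\bar z, \]
whose image is the non-compact polyhedron $P=[-1,1]^2\times[0,\infty)$. The momentum of the diagonal circle is $K=u_3+v_3+w$. For fixed $k$, the level set $K^{-1}(k)$ is compact, because $0\le w=k-u_3-v_3\le k+2$. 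For $k\notin\{-2,0,2\}$ the diagonal circle acts freely on it by Proposition~\ref{prop:S1-fixed-points}.

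First, I would invoke reduction in stages for the abelian group $\T^3$. The residual torus $\T^3/\S^1\cong\T^2$ acts on the smooth compact $4$-manifold $K^{-1}(k)/\S^1$ in an effective Hamiltonian way. Its momentum image is the slice $P\cap\{u_3+v_3+w=k\}$. Projecting to $(u_3,v_3)$, via the lattice identification induced by the splitting $\Z^3=\Z(1,1,1)\oplus\Z^2$, this slice becomes
\[ \Delta_k=\{(u_3,v_3)\in[-1,1]^2 : u_3+v_3\le k\}. \]

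Second, I would compute $\Delta_k$ in each range of $k$.
\begin{itemize}
\item For $k\in(-2,0)$, $\Delta_k$ is the triangle with vertices $(-1,-1)$, $(k+1,-1)$, $(-1,k+1)$. Its edge normals are $(0,1)$, $(1,0)$, $(-1,-1)$, so it is a Delzant triangle, which is the polygon of $\CP^2$.
\item For $k\in(0,2)$, $\Delta_k$ is the square $[-1,1]^2$ with the corner $(1,1)$ cut off by the line $u_3+v_3=k$. This is a pentagon with normals $(0,1)$, $(1,0)$, $(-1,0)$, $(-1,-1)$, $(0,-1)$. Such a corner chop is exactly a toric blow-up of $\S^2\times\S^2$ at a fixed point. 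Equivalently, it is the polygon obtained from the $\CP^2$ triangle by cutting two corners.
\item For $k>2$, the constraint $u_3+v_3\le k$ is vacuous, $\Delta_k=[-1,1]^2$, and the polygon is that of $\S^2\times\S^2$.
\end{itemize}

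Third, I would verify the Delzant (smoothness) condition at every vertex. At each vertex the two primitive edge normals must form a $\Z$-basis; for example, at $(1,k-1)$ the normals are $(-1,0)$ and $(-1,-1)$, with determinant $1$. Delzant's classification then gives a $\T^2$-equivariant symplectomorphism, in particular a diffeomorphism, between $K^{-1}(k)/\S^1$ and the toric manifold of $\Delta_k$. This yields the three cases of the statement.

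The main obstacle I expect is making the staged reduction rigorous despite the non-compact factor $\R^2$ and the boundary of $P$. One must check two things: that the reduced space is really the symplectic toric manifold whose image is $\Delta_k$, and that the faces of $\Delta_k$ lying on $w=0$ (the new edge $u_3+v_3=k$) carry the correct isotropy weights. For this I would use the local normal form near $z=0$, where the $\T^3$ action is standard on $\C$. An alternative route, if that becomes delicate, is the explicit description via the invariants \eqref{eq:def-algebraic-invariants}.
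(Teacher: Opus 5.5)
Your proposal is correct and follows essentially the same route as the paper. Both arguments realize $K^{-1}(k)/\S^1$ as a compact toric manifold for the residual $\T^2$ action generated by $u_3$, $v_3$ and read off its Delzant polygon; the only difference is that the paper finds the vertices as images of fixed points ($u\bar v=u\bar z=v\bar z=0$), while you slice the $\T^3$ moment image $[-1,1]^2\times[0,\infty)$ by $u_3+v_3+w=k$ and additionally verify unimodularity at each vertex (which is only a consistency check, since Delzant's theorem already forces the image of a smooth compact connected reduced space with effective residual action to be Delzant, so your anticipated obstacle does not really arise).
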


\begin{proof}
To prove this statement we exhibit the manifolds $K^{-1}(k) / \S^1$ as compact toric manifolds and we compute the corresponding Delzant polytopes \cite{Delzant1988}.
In particular, consider the Hamiltonian $\T^3$ action $\Psi$ on $M$ given by
\[ \Psi((e^{it}, e^{is_1}, e^{is_2}), (u,u_3;v,v_3;z))
= \varphi_K^t \circ \varphi_{J_1}^{s_1} \circ \varphi_{J_2}^{s_2} (u,u_3;v,v_3;z) 
= (e^{i(t+s_1)}\textcolor{red}{u}, u_3; e^{i(t+s_2)} v, v_3; e^{it} z), \]
where $J_1=u_3$, $J_2=v_3$.
The action $\Psi$ induces an effective $\T^2$ action $\psi$ on the reduced space $K^{-1}(k) / \S^1$. In terms of algebraic invariants $|z|^2, u_3, v_3, u \bar{v}, u \bar{z}, v \bar{z}$, the $\T^2$ action $\psi$ is expressed as
\begin{equation}
\label{eq:psi}    
\psi((e^{is_1}, e^{is_2}), (|z|^2, u_3, v_3, u \bar{v}, u \bar{z}, v \bar{z}))
= (|z|^2, u_3, v_3, e^{i(s_1-s_2)} u \bar{v}, e^{is_1} u \bar{z}, e^{is_2} v \bar{z}).
\end{equation}
The Delzant polytope $D_k$ for the action $\psi$ on $K^{-1}(k) / \S^1$ can be determined as the convex hull of its vertices; the latter correspond to fixed points of $\psi$. 
The expression in Eq.~\eqref{eq:psi} implies that the fixed points of $\psi$ satisfy the relations $u \bar v = 0$, $u \bar z = 0$, and $v \bar z = 0$. 
Therefore, we can distinguish the following cases.

\begin{enumerate}[label={(\roman*)}]
\item If $u = v = 0$ then we find $J_1 = u_3 = \pm 1$ and $J_2 = v_3 = \pm 1$. 
Since on $K^{-1}(k)$ we have $|z|^2 = 2(k - u_3 - v_3)$ we must have $k \ge u_3 + v_3$. 
Therefore, the vertex $(J_1,J_2)=(-1,-1)$ can be attained for all $k \ge -2$, while the vertices $(J_1,J_2)=(-1,1)$ and $(J_1,J_2)=(1,-1)$ can be attained, for all $k \ge 0$, and the vertex $(J_1,J_2)=(1,1)$ can be attained, for all $k \ge 2$.

\item If $u = z = 0$, then $J_1 = u_3 = \pm 1$ and $J_2 = k - J_1 = k \mp 1$ with $|J_2| \le 1$.
Therefore, we obtain the vertices $(J_1,J_2) = (1,k-1)$, for $0 \le k \le 2$ and $(J_1,J_2) = (-1,k+1)$ for $-2 \le k \le 0$.

\item If $v = z = 0$, then $J_2 = v_3 = \pm 1$ and $J_1 = k - J_2 = k \mp 1$ with $|J_1| \le 1$.
Therefore, we obtain the vertices $(J_1,J_2) = (k-1,1)$, for $0 \le k \le 2$ and $(J_1,J_2) = (k+1,-1)$, for $-2 \le k \le 0$.
\end{enumerate}

To conclude, for $-2 \le k \le 0$ the Delzant polygon $D_k$ has three vertices, $(k+1,-1)$, $(-1,k+1)$, and $(-1,-1)$ and therefore in this case $K^{-1}(k)/\S^1$ is diffeomorphic to $\CP^2$.
For $k > 2$ the Delzant polygon $D_k$ has four vertices, $(\pm 1, \pm 1)$, and therefore in this case $K^{-1}(k)/\S^1$ is diffeomorphic to $\S^2 \times \S^2$.
Finally, for the case $0 \le k \le 2$ of intermediate values the Delzant polygon $D_k$ has five vertices, $(k-1,1)$, $(1,k-1)$, $(1,-1)$, $(-1,1)$, and $(-1,-1)$.
This polygon can be obtained from the square $[-1,1]^2$ by cutting off a corner.
This corresponds to a symplectic blowup of the space $\S^2 \times \S^2$ along a ball.
Alternatively, the same polygon can be obtained from the triangle with vertices $(k+1,-1)$, $(-1,k+1)$, and $(-1,-1)$, by cutting off two corners.
This corresponds to two symplectic blowups of the space $\CP^2$ along two disjoint balls.
\end{proof}

Notice that, the fact that $K^{-1}(k) / \S^1$ is diffeomorphic to $\CP^2$ when $-2 < k < 0$ also follows directly from the fact that $K$ is a definite positive function at $(u,u_3;v,v_3;z)=(0,-1;0,-1;0)$.
Therefore, each level set $K^{-1}(k)$ for $k$ greater than, but close to, $-2$ is diffeomorphic to $\S^5$.
It can then be checked that reducing the given $\S^1$ action on $\S^5$ gives $\CP^2$ as the reduced space.

Moreover, for $k > 2$ the map
\[ \S^2_u \times \S^2_v \to K^{-1}(k) \subseteq \S^2_u \times \S^2_v \times \C_* \colon 
(u, u_3; v, v_3) 
\mapsto \bigl(u, u_3; v, v_3; [2(k-u_3-v_3)]^{1/2}\bigr), 
\]
is a smooth global section for the $\S^1$ bundle induced by the $\S^1$ action on $K^{-1}(k)$.
This follows from the fact that for points $(u,u_3;v,v_3;z) \in K^{-1}(k)$, $k > 2$, we have $|z|^2 = 2(k-u_3-v_3) > 0$.
This also shows that $K^{-1}(k) \simeq \S^2_u \times \S^2_v \times \S^1$ and thus it gives an explicit diffeomorphism $K^{-1}(k) / \S^1 \simeq \S^2_u \times \S^2_v$.

\section{Rank-2 singularities}
\label{App.Rank2}

In this section we give the proof of Proposition~\ref{prop:rank-2-singularities} that describes the rank-$2$ singularities of the general TC system.
We find rank-$2$ singular points by considering the relation $X_K = x X_{H_1} + y X_{H_2}$ with $x, y \in \R$.

First, the relation $X_K(z) = x X_{H_1}(z) + y X_{H_2}(z)$ directly gives
\begin{align}
\label{eq:proof-rank-2:z}
z = \sqrt2 i g (x u + y v). 
\end{align}
Both relations $0 = X_K(u_3) = x X_{H_1}(u_3) + y X_{H_2}(u_3)$ and $0 = X_K(v_3) = x X_{H_1}(v_3) + y X_{H_2}(v_3)$ give after using \eqref{eq:proof-rank-2:z} the relation
\begin{align*}
i g^2 \Bigl[ \frac{1}{\delta_2 - \delta_1} (x-y) - x y \Bigr] (u \bar{v} - \bar{u} v) = 0.
\end{align*}
Assuming $u \bar{v} - \bar{u} v \ne 0$ we find
\begin{align}
\label{eq:proof-rank-2:relation-x-y}
x - y = (\delta_2 - \delta_1) x y.
\end{align}

Then, the relation $X_K(u) = x X_{H_1}(u) + y X_{H_2}(u)$, after using \eqref{eq:proof-rank-2:z}, gives
\begin{align*}
\Bigl[x (\delta_1-\omega) - 2 g^2 x^2 u_3 
- (x - y) \frac{2g^2}{\delta_2 - \delta_1} v_3 - 1 \Bigr] i u 
+ 2 i g^2 \Bigl[ \frac{1}{\delta_2 - \delta_1} (x-y) - x y\Bigr] u_3 v 
= 0,
\end{align*}
which, taking into account \eqref{eq:proof-rank-2:relation-x-y}, simplifies to
\begin{align*}
2 g^2 (x u_3 + y v_3) = (\delta_1-\omega) - \frac{1}{x}.
\end{align*}
Similarly, the relation $X_K(v) = x X_{H_1}(v) + y X_{H_2}(v)$ gives
\begin{align*}
2 g^2 (x u_3 + y v_3) = (\delta_2-\omega) - \frac{1}{y}.
\end{align*}
Let $C = x u_3 + y v_3$. Then, the previous two relations imply
\begin{align}
\label{eq:proof-rank-2:C}
C = x u_3 + y v_3 = \frac{1}{2g^2} (\delta_1-\omega) - \frac{1}{2g^2x}
= \frac{1}{2g^2} (\delta_2-\omega) - \frac{1}{2g^2y}.
\end{align}

Let $U = u \bar{v} + \bar{u} v$. Then, using \eqref{eq:proof-rank-2:z}, we obtain
\begin{gather*}
\label{eq:proof-rank-2:uz}
\frac{i g}{\sqrt2} (u \bar{z} - \bar{u} z)
= g^2 (2 x u \bar{u} + y (u \bar{v} + \bar{u} v) )
= 2 x g^2 (1-u_3^2) + y g^2 U, \\
\label{eq:proof-rank-2:vz}
\frac{i g}{\sqrt2} (v \bar{z} - \bar{v} z)
= g^2 (2 y v \bar{v} + x (u \bar{v} + \bar{u} v) )
= 2 y g^2 (1-v_3^2) + x g^2 U.
\end{gather*}
Substituting the previous two expressions into $H_1$ in \eqref{eq:H1-complex} and $H_2$ in \eqref{eq:H2-complex} we obtain.
\begin{align}
\label{eq:proof-rank-2:H1u}
H_1 &= (\delta_1-\omega) u_3 + 2 x g^2 (1-u_3^2) + y g^2 U
- \frac{g^2}{\delta_2 - \delta_1} (U + 2 u_3 v_3), \\
\label{eq:proof-rank-2:H2u}
H_2 &= (\delta_2-\omega) v_3 + 2 y g^2 (1-v_3^2) + x g^2 U
+ \frac{g^2}{\delta_2 - \delta_1} (U + 2 u_3 v_3).
\end{align}

Moreover, using again \eqref{eq:proof-rank-2:z}, the momentum $K$ becomes
\begin{align}
K 
& = u_3 + v_3 + \frac12 z \bar{z} \notag \\
& = u_3 + v_3 +  g^2 (x^2 u \bar{u} + y^2 v \bar{v} + x y (u \bar{v} + \bar{u} v)) \notag \\
\label{eq:proof-rank-2:K}
& = u_3 + v_3 + g^2 x^2 (1-u_3^2) + g^2 y^2 (1-v_3^2) + g^2 x y U.
\end{align}

Fixing the value of the momentum $K = k$, we can solve \eqref{eq:proof-rank-2:K} for $U$, and substitute into \eqref{eq:proof-rank-2:H1u} and \eqref{eq:proof-rank-2:H2u}.
Moreover, using \eqref{eq:proof-rank-2:C} to replace $\delta_1-\omega$ and $\delta_2-\omega$, and taking into account \eqref{eq:proof-rank-2:relation-x-y}, we obtain \eqref{eq:rank-2:H1} and \eqref{eq:rank-2:H2}, that is, 
\begin{align*}
(\delta_2 - \delta_1) H_1 &= -\frac{k}{x^2} - \frac{C^2 g^2}{x^2} + \frac{C}{x^2 y} + g^2 \Bigl( \frac{y^2}{x^2} + 2 \frac{x}{y} - 1 \Bigr),
\\
-(\delta_2 - \delta_1) H_2 &= -\frac{k}{y^2} - \frac{C^2 g^2}{y^2} + \frac{C}{x y^2} + g^2 \Bigl( \frac{x^2}{y^2} + 2 \frac{y}{x} - 1 \Bigr).
\end{align*}

We now want to determine bounds for the parameters $x$, $y$.
Define
\begin{align*}
a = a_1 + i a_2 = \frac{u}{z}, \ b = b_1 + i b_2 = \frac{v}{z}.
\end{align*}
Then 
\begin{align*}
|a|^2 = a_1^2 + a_2^2 = \frac{1-u_3^2}{2(K-u_3-v_3)}, \ 
|b|^2 = b_1^2 + b_2^2 = \frac{1-v_3^2}{2(K-u_3-v_3)}.
\end{align*}
Equation \eqref{eq:proof-rank-2:z} gives $\sqrt2 i g (x a + y b) = 1$ from which we obtain
\begin{align*}
y^2 |b|^2 = x^2 |a|^2 - \frac{\sqrt2 x}{g} a_2 + \frac{1}{2 g^2},
\end{align*}
and, by substituting $|a|^2$, $|b|^2$ and using $x u_3 + y v_3 = C$, we arrive at
\begin{align*}
a_2 = \frac{g}{\sqrt2 x} \Bigl( \frac{1}{2g^2} 
+ \frac{y}{2} \frac{C^2 - 2 C x u_3 + x^2 - y^2}{yK-C+(x-y)u_3} \Bigr).
\end{align*}
Then we have
\begin{align*}
a_1^2 = \frac{1-u_3^2}{2(K-u_3-v_3)} - a_2^2
= \frac{y}{2} \frac{1-u_3^2}{yK-C+(x-y)u_3} - \frac{g^2}{2 x^2} \Bigl( \frac{1}{2g^2} + \frac{y}{2} \frac{C^2 - 2 C x u_3 + x^2 - y^2}{yK-C+(x-y)u_3} \Bigr)^2.
\end{align*}
We finally obtain
\begin{align*}
a_1^2 = \frac{1}{8 g^2 x^2 (y K -C + (x-y) u_3)^2} P(x,y;u_3),
\end{align*}
where $P(x,y;u_3)$ is the following cubic polynomial in $u_3$:
\begin{equation}
\label{eq:proof-rank-2:def-P}
\begin{aligned}
P(x,y;u_3) & = 
-4 g^2 x^2 y (x-y) u_3^3 \\
& + (-4 C^2 g^4 x^2 y^2+8 C g^2 x^2 y-4 C g^2 x y^2-4 g^2 K x^2 y^2
-(x-y)^2) u_3^2 \\
& +2 (2 C^3 g^4 x y^2-3 C^2 g^2 x y+C^2 g^2 y^2+2 C g^4 x^3
   y^2-2 C g^4 x y^4+2 C g^2 K x y^2 \\
& \qquad +C x-C y+g^2 x^3 y-g^2 x^2 y^2+g^2 x y^3-g^2 y^4-K x y+K y^2 ) u_3 \\
& -C^4 g^4 y^2+2 C^3 g^2 y-2 C^2 g^4 x^2 y^2+2 C^2 g^4 y^4-2 C^2 g^2 K y^2 - C^2 -2 C g^2 x^2 y \\
& \qquad -2 C g^2 y^3+2 C K y-g^4 x^4 y^2+2 g^4 x^2 y^4-g^4 y^6+2 g^2 K x^2 y^2+2 g^2 K y^4 -K^2 y^2.
\end{aligned}
\end{equation}
We compute
\begin{align*}
P(x,y;-1) &= -(C + x - y - (K + g^2 (C + x)^2) y + g^2 y^3)^2 \le 0, \\
P(x,y;\phantom{-}1) &= -(C - x + y - (K + g^2 (C - x)^2) y + g^2 y^3)^2 \le 0.
\end{align*}
Therefore, $P(x,y;u_3)$ either has $0$ or $2$ roots in $[-1,1]$. Since $a_1^2 \ge 0$ and $P(x,y;\pm1) \le 0$, the allowed values of $x, y$ are those for which $P(x,y;u_3)$ has $2$ roots in $[-1,1]$. Since $y$ can be expressed in terms of $x$ as $y = x/(1+x(\delta_2 - \delta_1))$, we can determine those values of $x$ for which $Q(x;u_3) = P(x,x/(1+x(\delta_2 - \delta_1)); u_3)$ has $2$ roots in $[-1,1]$. These values form intervals and the boundaries of these intervals are determined by the conditions $Q(x;u_3) = d Q(x;u_3) / dx = 0$ with $u_3 \in [-1,1]$.

\section{Numerical computation of period lattices and Hamiltonian monodromy}
\label{sec:numerical-computation}

In this section we provide details of the numerical computation of the period lattices that were used to determine Hamiltonian monodromy of the special Tavis-Cummings system in Section~\ref{sec:monodromy-tavis-cummings}.

Recall that a vector $\vv{T} = (T_1, T_2, T_3) \in \R^3$ is a period vector on a regular fiber $F^{-1}(r)$ if 
\[ \varphi_{H_1}^{T_1} \circ \varphi_{H_2}^{T_2} \circ \varphi_{K}^{T_3} |_{F^{-1}(r)} = \mathrm{id}_{F^{-1}(r)}. \]
Since $X_K$ generates an $\S^1$ action with period $2 \pi$, the constant vector $\vv{T}_1 = (0,0,2\pi)$ is a period vector on every regular value $r$ of $F$. 
To compute a basis of the period lattice on $F^{-1}(r)$ we need to compute two more period vectors $\vv{T}_2(r)$ and $\vv{T}_3(r)$.

To compute period vectors we work as follows.
Given a regular value $r = (h_1,h_2,k)$ of $F$ we first compute a point $P \in F^{-1}(r)$ by numerically solving the equations $H_1(P) = h_1$, $H_2(P) = h_2$, $K(P) = k$. 
Then, we consider the function $m : \R^3 \to F^{-1}(r)$ given by 
\[ 
m(\vv{T}) = \varphi_{H_1}^{T_1} \circ \varphi_{H_2}^{T_2} \circ \varphi_{K}^{T_3} (P),
\]
with fixed $P \in F^{-1}(r)$ and we use the Newton-Raphson method to solve the three equations
\[ 
\theta_u(m(\vv{T})) - \theta_u(P) = 0, \ 
\theta_v(m(\vv{T})) - \theta_v(P) = 0, \ 
\phi(m(\vv{T})) - \phi(P) = 0, \ 
\]
given an initial guess for the period vector $\vv{T}$. 
Here, $(\phi, J)$ are symplectic polar coordinates on $\R^2_*$ defined by $p + i q = \sqrt{2J} \exp(i \phi)$.

First, we consider the regular value $r_0 = (2, 1, 1.8)$ of $F = (H_1, H_2, K)$. 
Then we compute that two period vectors which together with $\vv T_1 = (0, 0, 2\pi)$ form a basis of the period lattice at $r_0$ are given by 
\begin{equation}
\begin{aligned}
\vv{T}_2 & \approx (1.83862, 2.07173, -1.44104), \ 
\vv{T}_3 & \approx (-2.02757, -0.785264, 1.15808).
\end{aligned}
\end{equation}

At $k=1.8$ there are two focus-focus-regular points, which we denote by 
\[
\ff_1 \approx (-0.578466, 0.578466, 1.8), \  
\ff_2 \approx (-1.743013, 1.743013, 1.8), 
\]
while at $k=0.3$ there are two focus-focus-regular points, which we denote by
\[ 
\ff_3 \approx (1.793686, -1.164044, 0.3), \ 
\ff_4 \approx (1.164044, -1.793686, 0.3). 
\]
Then for each $j=1,2,3,4$ we consider a closed path $\gamma_j$ in the set $\mathcal R$ of regular values of $F$ which is based at $r_0$ and represents a homotopy class $[\gamma_j]$ in $\pi_1(\mathcal R, r_0)$. 
Specifically, the path $\gamma_j$ starts at $r_0$, moves along a straight line to the regular value $r_j = \ff_j + (0.5, 0, 0)$, circles the focus-focus point $\ff_j$ along the circle $C_j$ parameterized by $\ff_j + 0.5(\cos(2\pi s), \sin(2\pi s),0)$ with $s \in [0,1]$, and then moves from $r_j$ back to $r_0$ along a straight line.

For each closed path $\gamma_j$, $j=1,2,3,4$, we continue the period vectors $\vv{T}_2$, $\vv{T}_3$ along the path, and when we arrive back at $r_0$ we obtain the period vectors $\vv{T}^{(j)}_2$, $\vv{T}^{(j)}_3$.
We do not numerically continue $\vv{T}_1$ since it is constant.
The Hamiltonian monodromy matrix along the path $\gamma_j$ is given by
\[ M(\gamma_j) = 
\mathcal{B}(\vv{T}_1,\vv{T}_2^{(j)},\vv{T}_3^{(j)}) \,
\mathcal{B}(\vv{T}_1,\vv{T}_2,\vv{T}_3)^{-1}, \]
where $\mathcal{B}(\vv T_1, \vv T_2, \vv T_3)$ is the $3 \times 3$ matrix with rows given by the (row) vectors $\vv T_1$, $\vv T_2$, and $\vv T  _3$.

The numerical computations are approximate and they produce matrices which are close to matrices in $\SL(3,\Z)$.
In each case, we consider the nearest $\SL(3,\Z)$ matrix as the accurate expression for $M(\gamma_j)$.

Applying this procedure for the path $\gamma_1$ gives
\[ 
\vv T_2^{(1)} \approx (1.83862, 2.07173, -1.44104), \ 
\vv T_3^{(1)} \approx (-3.86619, -2.85699, 2.59913).
\]
The corresponding approximate monodromy matrix is computed as $\widetilde M(\gamma_1) = M(\gamma_1) + O(10^{-7})$,
with exact monodromy matrix 
\[ 
M(\gamma_1) 
= \begin{bmatrix}
 1 & 0 & 0 \\
 0 & 1 & 0 \\
 0 & -1 & 1 \\
\end{bmatrix}.
\]

\begin{remark}
We report the result of the numerical computation of the monodromy matrix in the form $M(\gamma) + O(\epsilon)$, where $M(\gamma) \in \SL(3,\Z)$ and $\epsilon \ll 1$. This may obscure that the main problem in the computation is not the small numerical error in the approximation of the period vectors but that during continuation the Newton-Raphson method might converge to a different period vector. If this happens, the produced approximate monodromy matrix will still have the form $M(\gamma) + O(\epsilon)$, where $M(\gamma) \in \SL(3,\Z)$ and $\epsilon \ll 1$, but $M(\gamma)$ will be incorrect. We control for this error by checking during continuation the size of the change of the period vector.
\end{remark}

Similarly, for the path $\gamma_2$ we compute
\[ 
\vv T_2^{(2)} \approx (1.64967, 3.35819, -8.00719), \ 
\vv T_3^{(2)} \approx (-1.83862, -2.07173,  7.72423).
\]
The corresponding approximate monodromy matrix is computed as $\widetilde M(\gamma_2) = M(\gamma_2) + O(10^{-6})$,
with exact monodromy matrix 
\[ 
M(\gamma_2) 
= \begin{bmatrix}
 1 & 0 & 0 \\
 -1 & 2 & 1 \\
 1 & -1 & 0 \\
\end{bmatrix}.
\]

Then, for the path $\gamma_3$ we compute
\[ 
\vv T_2^{(3)} \approx (-0.188951,  1.28646, -6.56615), \ 
\vv T_3^{(3)} \approx (-2.02757, -0.785264, 1.15808).
\]
The corresponding approximate monodromy matrix is computed as $\widetilde M(\gamma_3) = M(\gamma_3) + O(10^{-6})$,
with exact monodromy matrix 
\[ 
M(\gamma_3) 
= \begin{bmatrix}
 1 & 0 & 0 \\
 -1 & 1 & 1 \\
 0 & 0 & 1 \\
\end{bmatrix}.
\]

Finally, for the path $\gamma_4$ we compute
\[ 
\vv T_2^{(4)} \approx (1.64967, 3.35819, -8.00719), \ 
\vv T_3^{(4)} \approx (-1.83862, -2.07173, 7.72423).
\]
The corresponding approximate monodromy matrix is computed as $\widetilde M(\gamma_4) = M(\gamma_4) + O(10^{-6})$,
with exact monodromy matrix 
\[ 
M(\gamma_4) 
= \begin{bmatrix}
 1 & 0 & 0 \\
 -1 & 2 & 1 \\
 1 & -1 & 0 \\
\end{bmatrix}.
\]

The monodromy matrices reported in Section~\ref{sec:monodromy-tavis-cummings} are the same as the matrices reported here up to a change of basis. 
Specifically, consider the change of basis given by the matrix 
\[ A = \begin{bmatrix}
1 & 0 & 0 \\
1 & -1 & -1 \\
0 & 1 & 0    
\end{bmatrix}. \]
Then, redefining $M(\gamma_j)$ to be $A M(\gamma_j) A^{-1}$ we compute the reported
\begin{gather*}
M(\gamma_1) = \begin{bmatrix}
1 & 0 & 0 \\
0 & 1 & 1 \\
0 & 0 & 1 
\end{bmatrix}, \ 
M(\gamma_2) = M(\gamma_4) =
\begin{bmatrix}
 1 & 0 & 0 \\
 0 & 1 & 0 \\
 0 & -1 & 1 
\end{bmatrix}, \ 
M(\gamma_3) =
\begin{bmatrix}
 1 & 0 & 0 \\
 0 & 2 & 1 \\
 0 & -1 & 0
\end{bmatrix}.
\end{gather*}

\section{Spectral Lax pair}
\label{sec:lax}

In this section, we briefly describe the spectral Lax pair formalism associated with the two-spin TC, see \cite{Babelon2003, Babelon2012}.
We introduce the Lax matrices $\mathcal{L}$ and $\mathcal{M}$ depending on the spectral parameter $\lambda \in \C$ given by
\[
\mathcal{L}(\lambda) 
= \frac{1}{g^2}\Bigl(\lambda-\frac{\delta_1}{2}\Bigr)
\Bigl(\lambda-\frac{\delta_2}{2}\Bigr)
\bigl((2\lambda-\omega)\sigma_z + i\sqrt{2}g(\bar{z}\sigma_+-z\sigma_-)\bigr)
+\Bigl(\lambda-\frac{\delta_2}{2}\Bigr)\vv{u}\cdot\vv{\sigma}
+\Bigl(\lambda-\frac{\delta_1}{2}\Bigr)\vv{v}\cdot\vv{\sigma},
\]
and
\[
\mathcal{M}(\lambda)=-i\lambda\sigma_z+\frac{g}{\sqrt{2}}(\bar{z}\sigma_+-z\sigma_-),
\]
where $\vv{u}=(u_1,u_2,u_3)$, $\vv{v}=(v_1,v_2,v_3)$ and $\vv{\sigma}$ is the vector of Pauli matrices $(\sigma_x,\sigma_y,\sigma_z)$ with
\[
\sigma_x=\begin{bmatrix}
0 & 1 \\
1 & 0
\end{bmatrix},~\sigma_y=\begin{bmatrix}
0 & -i \\
i & 0
\end{bmatrix},~\sigma_z=\begin{bmatrix} 1 & 0 \\
0 & -1\end{bmatrix},~\sigma_+=\frac{1}{2}(\sigma_x+i\sigma_y),~\sigma_-=\frac{1}{2}(\sigma_x-i\sigma_y). 
\]

It can be shown that the matrices $\mathcal{L}$ and $\mathcal{M}$ satisfy the Lax equation $\dot{\mathcal{L}}(\lambda)=[\mathcal{M}(\lambda),\mathcal{L}(\lambda)]$ for any value of the complex parameter $\lambda$. Moreover, this equation is equivalent to the system's dynamical equations.

The characteristic polynomial of $\mathcal{L}(\lambda)$ does not depend on time and defines the \emph{spectral curve} given by the equation $\det (\mathcal{L}(\lambda)-\mu I)=0$.
For the two-spin TC, we obtain
\[
\mu^2=Q_6(\lambda)
\]
where $Q_6$ is a polynomial of degree $6$ in $\lambda$ that can be expressed as
\begin{align*}
Q_6(\lambda) &= \Bigl[\frac{(2\lambda-\omega)^2}{g^4}+\frac{4K}{g^2}\Bigr] 
\Bigl(\lambda-\frac{\delta_1}{2}\Bigr)^2
\Bigl(\lambda-\frac{\delta_2}{2}\Bigr)^2
+ \frac{2}{g^2}\Bigl(\lambda-\frac{\delta_1}{2}\Bigr)\Bigl(\lambda-\frac{\delta_2}{2}\Bigr)
\Bigl[H_1 \Bigl(\lambda-\frac{\delta_2}{2}\Bigr)+H_2\Bigl(\lambda-\frac{\delta_1}{2}\Bigr)\Bigr] \\
& \qquad + \Bigl(\lambda-\frac{\delta_1}{2}\Bigr)^2 
+ \Bigl(\lambda-\frac{\delta_2}{2}\Bigr)^2.
\end{align*}
Note that all the coefficients of $Q_6(\lambda)$ are constants of motion. 
 
The central critical value of the STC system occurs when the polynomial $Q_6$ has two complex conjugate triple roots. 
In this case, the polynomial $Q_6$ can be expressed as 
\[
Q_6(\lambda) = \frac{4}{g^4}(\lambda^2+a_1\lambda+a_0)^3,
\]
where $a_0$ and $a_1$ are two real parameters such that $a_1^2-4a_0 < 0$. Identifying the coefficients of the two expressions of $Q_6$ leads to six equations, from which the resonance condition $2\omega=\delta_1+\delta_2$, with $\delta_1 \ne \delta_2$, is obtained. 

In the STC system, we consider the parameters $\delta_1=\frac{1}{2}$, $\delta_2=\frac{3}{2}$, $\omega=1$ and $g=1$. We deduce that $a_1=-1$, $a_0=(4\cdot2^{2/3}+3)/16$ and that the central critical value corresponds to the values $H_1=2-3\cdot2^{-2/3}$, $H_2=-2+3\cdot2^{-2/3}$, $K=(12\cdot2^{2/3}-1)/16$.

\printbibliography

\end{document}